\documentclass[11pt,reqno]{amsart}
\usepackage{amsmath,amsthm,amssymb,bm,bbm,dsfont,braket,esint,nicefrac}
\usepackage[mathscr]{eucal}
\usepackage{amsaddr}
\usepackage{upgreek}
\usepackage[left=2cm,right=2cm,top=2cm,bottom=2cm]{geometry}
\usepackage{mathtools}\mathtoolsset{centercolon}
\usepackage[nospace,noadjust]{cite}
\usepackage{mleftright}
\mleftright 

\usepackage{pifont} 

\usepackage[shortlabels]{enumitem}
\usepackage{setspace}
\usepackage{graphicx}
\usepackage{verbatim}
\usepackage{float}
\usepackage{placeins}
\usepackage{array}
\usepackage{booktabs}
\usepackage{multicol,multirow,tabularx}
\usepackage{threeparttable}
\usepackage[update,prepend]{epstopdf}
\usepackage{amsfonts,xfrac,comment}
\usepackage[abs]{overpic}
\usepackage{xspace}

\usepackage{tcolorbox}
\newtcolorbox{mytcolorbox}{
top = 1pt,
right = 1pt,
left = 1pt,
bottom = 1pt,
colback = white, 
boxrule = 0.5pt,
}

\makeatletter
\def\adl@drawiv#1#2#3{%
	\hskip0
	\tabcolsep
	\xleaders#3{#2 0\@tempdimb #1{1}#2 0.5\@tempdimb}%
	#2\z@ plus1fil minus1fil\relax
	\hskip0\tabcolsep}
\newcommand{\cdashlinelr}[1]{%
	\noalign{\vskip\aboverulesep
		\global\let\@dashdrawstore\adl@draw
		\global\let\adl@draw\adl@drawiv}
	\cdashline{#1}
	\noalign{\global\let\adl@draw\@dashdrawstore
		\vskip\belowrulesep}}
\makeatother

\usepackage[dvipsnames]{xcolor}
\usepackage{colortbl}
\usepackage{arydshln}
\usepackage[hidelinks,linktocpage=true]{hyperref}
\hypersetup{
	unicode=false,          
	pdftoolbar=true,        
	pdfmenubar=true,        
	pdffitwindow=false,     
	pdfstartview={FitH},    
	pdfnewwindow=true,      
	colorlinks=true,        
	linkcolor=Red,          
	citecolor=ForestGreen,  
	filecolor=Magenta,      
	urlcolor=BlueViolet,    
}
\usepackage{cleveref}

\usepackage{doi}
\usepackage{url}
\usepackage{caption, subcaption}
\usepackage{shadethm}
\usepackage{pgf}

\makeatletter
\ifx\@NODS\undefined%

\let\mathbb=\mathds
\else%
\fi
\makeatother

\def\d{{\text {\rm d}}}

\newcommand{\NS}{{Nussbaum--Szko{\l}a}\xspace}

\DeclareMathOperator*{\argmax}{arg\,max}
\DeclareMathOperator*{\argmin}{arg\,min} 

\DeclareMathOperator{\Err}{Err}
\DeclareMathOperator{\HM}{HM}
\newcommand{\Succ}{\textsc{Succ}}
\newcommand{\Fail}{\textsc{Fail}}
\newcommand{\Loss}{\textsc{Loss}}
\newcommand{\Min}{\textsc{Min}}

\DeclareMathOperator{\Tr}{Tr}

\DeclareMathOperator{\I}{\mathbf{1}}
\DeclareMathOperator{\supp}{supp}
\DeclareMathOperator{\e}{\mathrm{e}}

\DeclareMathOperator{\Var}{Var}

        \def\cH{{\cal H}}

\def\0{{\mathbf{0}}}
\def\1{{\mathbf{1}}}
\def\2{{\mathbf{2}}}
\def\3{{\mathbf{3}}}
\def\4{{\mathbf{4}}}
\def\5{{\mathbf{5}}}
\def\6{{\mathbf{6}}}

\def\7{{\mathbf{7}}}
\def\8{{\mathbf{8}}}
\def\9{{\mathbf{9}}}

\def\bea{\begin{eqnarray}}
\def\eea{\end{eqnarray}}

\def\eps{\varepsilon}

\def\cH{{\mathcal H}}

\theoremstyle{plain}
\newshadetheorem{theorem}{Theorem}[section]
\newshadetheorem{proposition}[theorem]{Proposition}
\newshadetheorem{lemma}[theorem]{Lemma}
\newtheorem{fact}[theorem]{Fact} 
\newshadetheorem{corollary}[theorem]{Corollary}

\theoremstyle{definition}
\newtheorem{definition}[theorem]{Definition} 

\theoremstyle{remark}
\newtheorem{remark}[theorem]{Remark}

\makeatletter
\newcommand{\opnorm}{\@ifstar\@opnorms\@opnorm}
\newcommand{\@opnorms}[1]{%
	$\left|\mkern-1.5mu\left|\mkern-1.5mu\left|
	#1
	\right|\mkern-1.5mu\right|\mkern-1.5mu\right|$
}
\newcommand{\@opnorm}[2][]{%
	\mathopen{#1|\mkern-1.5mu#1|\mkern-1.5mu#1|}
	#2
	\mathclose{#1|\mkern-1.5mu#1|\mkern-1.5mu#1|}
}
\makeatother

\usepackage{tikz}
\usetikzlibrary{arrows.meta} 
\tikzset{>={Latex[length=4,width=4]}} 
\usetikzlibrary{shapes.geometric, positioning, calc} 
\usetikzlibrary{arrows.meta, calc, positioning, backgrounds, fit}

\definecolor{statebg}{RGB}{252,231,231}   
\definecolor{measbg}{RGB}{238,233,247}     
\definecolor{statetitle}{RGB}{223,79,38}   
\definecolor{meastitle}{RGB}{118,49,148}   
\definecolor{boxborder}{RGB}{38,58,110}    
\definecolor{fangray}{RGB}{150,150,150}    

\definecolor{ink}{RGB}{33,37,43}          
\definecolor{stateacc}{RGB}{43,76,126}    
\definecolor{measacc}{RGB}{118,52,108}    
\definecolor{statefill}{RGB}{244,247,251} 
\definecolor{measfill}{RGB}{249,246,251}  
\definecolor{paneledge}{RGB}{208,214,224} 
\definecolor{arrowgray}{RGB}{120,126,138} 
\definecolor{labelgray}{RGB}{96,101,112}  

\newcommand{\sh}{\kern-0.08em$^\textbf{\#}$\hspace{-3pt}}
\renewcommand{\b}{\kern-0.06em$\flat$}

\begin{document}

\let\origmaketitle\maketitle
\def\maketitle{
	\begingroup
	\def\uppercasenonmath##1{} 
	\let\MakeUppercase\relax 
	\origmaketitle
	\endgroup
}

\title{\bfseries \Large{ 
Multiple Quantum Hypothesis Testing:\\ One-Shot Pairwise Bounds and Sharp Asymptotics
		}}

\author{ \normalsize 
{Hao-Chung Cheng}$^{1\text{--}5}$
and
{Po-Chieh Liu}$^{1,2}$
}
\address{\small  	
$^1$Department of Electrical Engineering and Graduate Institute of Communication Engineering,\\ National Taiwan University, Taipei 106, Taiwan (R.O.C.)\\
$^2$Department of Mathematics, National Taiwan University\\
$^3$Center for Quantum Science and Engineering, National Taiwan University\\
$^4$Hon Hai (Foxconn) Quantum Computing Center, New Taipei City 236, Taiwan (R.O.C.)\\
$^5$Physics and Mathematics Divisions, National Center for Theoretical Sciences, Taipei 10617, Taiwan (R.O.C.)\\
}

\email{\href{mailto:haochung.ch@gmail.com}{haochung.ch@gmail.com}}

\date{\today}

\begin{abstract}
We consider Bayesian discrimination among multiple quantum states and establish a dimension-free one-shot upper bound on the minimum probability of error in terms of the sum of pairwise errors.
This resolves a conjecture of Audenaert and Mosonyi \href{https://doi.org/10.1063/1.4898559}{[J.~Math.~Phys.~55 (2014)]} and improves the multiple quantum Chernoff bound of Li \href{https://doi.org/10.1214/16-AOS1436}{[Ann.~Statist.~44 (2016)]} by removing its dimension-dependent prefactor.
In the asymptotic many-copy regime, our bound proves the achievability of the multiple quantum Chernoff distance for arbitrary separable Hilbert spaces, thereby settling the previously open infinite-dimensional case, and further yields constant-factor sharp asymptotics for the optimal error probability.

In binary quantum hypothesis testing, we prove that the minimum error probability is characterized, up to universal constants, by a trace harmonic-mean quantity. 
Consequently, the optimal binary quantum error probability is within a factor of two of the optimal classical error probability for the associated Nussbaum--Szko{\l}a distributions, complementing the lower bound of Nussbaum and Szko{\l}a \href{https://doi.org/10.1214/08-AOS593}{[Ann.~Statist.~37 (2009)]}.
%

\end{abstract}

\maketitle
\tableofcontents

\newpage
\section{Introduction} \label{sec:introduction}

Discriminating among multiple quantum hypotheses is a cornerstone of quantum information science \cite{Hel67, Hol72, Hol74, Hol76, YKL75, HP91, ON00, Parthasarathy_2001, Nussbaum2013attainment, Montanaro2019, NS11, NussbaumSzkola2010spin_chain, NussbaumSzkola2011pure, AM14, Li16, Cheng_simple, PerezGuijarro2022, PerezGuijarro2025, sample_complexity_25}.
Consider a quantum system prepared under one of $M$ distinct hypotheses, $\{\mathtt{H}_i\}_{i=1}^M$. 
Under hypothesis $\mathtt{H}_i$, which occurs with prior probability $p_i \in (0,1)$, the system is described by a state $\rho_i$, where each $\rho_i$ is a density operator on a separable, possibly infinite-dimensional Hilbert space.
To distinguish between these hypotheses, the statistician's objective is to design a quantum measurement, described by a positive operator-valued measure (POVM) $\{\Lambda_i\}_{i=1}^M$ satisfying $\Lambda_i\geq 0$ and $\sum_{i=1}^M \Lambda_i = \I$; see Figure~\ref{figure:QHT}.
The performance of $M$-ary quantum hypothesis testing is quantified by the minimum probability of decision error, optimized over all possible quantum measurements:
\begin{align} \label{eq:minimum_error}
\Err^\star \left(p_1 \rho_1,\dots, p_M \rho_M\right)
\coloneqq \inf_{\mathrm{POVM}\, \{\Lambda_i\}_{i=1}^M} 1 - \sum_{i=1}^M p_i \Tr[\rho_i \Lambda_i].
\end{align}

Considerable research has been devoted to deriving error bounds for \eqref{eq:minimum_error}.
For binary hypothesis testing (i.e., $M=2$) over finite-dimensional quantum systems, the celebrated \emph{quantum Chernoff bound} establishes that \cite{Che52}, \cite[Theorem 1]{ACM+07}, \cite[Theorem 2]{ANS+08}:
\begin{align} \label{eq:Chernoff0}
\Err^\star \left(p_1 \rho_1, p_2 \rho_2\right)
&\leq \e^{- C(\rho_1, \rho_2)},
\end{align}
where the quantum Chernoff distance is defined as
\begin{align} \label{eq:Chernoff_distance}
C(\rho_1, \rho_2) 
&\coloneqq \sup_{\alpha \in (0,1)} - \log \Tr\left[ \rho_1^{\alpha} \rho_2^{1-\alpha}\right].
\end{align}
This bound was subsequently generalized to infinite-dimensional systems \cite[Proposition 1.1]{JOP+12}, \cite[p.~270]{JOP+12_book}, \cite[Theorem 4.54]{HP14}.
When the quantum systems are prepared in independent and identically distributed (i.i.d.) states $\rho_i^{\otimes n}$, the additivity of the quantum Chernoff distance implies that $\Err^\star \left(p_1 \rho_1^{\otimes n}, p_2 \rho_2^{\otimes n}\right)$ decays exponentially at the rate $C(\rho_1, \rho_2)$. 
Nussbaum and Szko{\l}a further proved that this decay rate is asymptotically optimal \cite{NS09}.
The fundamental significance of the quantum Chernoff bound \eqref{eq:Chernoff0} lies in its establishment of the exponentially tight \emph{large-deviation principle} in quantum information science.
This landmark result has since become a cornerstone of error exponent analysis, driving theoretical advances for numerous research topics, including sample and query complexities \cite{sample_complexity_25, TheshaniWilde2025Query}, composite hypothesis testing \cite{HT14, Fang2025generalized, FangHayashi2026Correlated}, quantum channel discrimination \cite{Wilde2020Amortized, TheshaniWilde2025Query}, quantum metrology \cite{Meyer2025Metrology}, communication over quantum channels \cite{Hay07, QWW18, Cheng_simple, preparation}, moderate-deviation analysis \cite{CH17, CTT2017}, entanglement distillation \cite{lin2026entanglement}, and strong converses in covering-type problems \cite{SGC22a, CG22, CG23, CDG24}.

The error analysis for multiple quantum hypothesis testing ($M > 2$), however, is considerably more challenging due to the absence of closed-form expressions for the optimal quantum measurements \cite{YKL75}.
Nussbaum and Szko{\l}a established that the optimal error exponent for multiple hypothesis testing cannot exceed that of the worst-case pair within the ensemble \cite{NussbaumSzkola2011pure}.
That is, the multiple quantum Chernoff distance, defined as
\begin{align} \label{eq:Chernoff_multiple}
C(\rho_1, \ldots, \rho_M)
\coloneqq \min_{(i,j):\, i\neq j} C(\rho_i, \rho_j),
\end{align}
constitutes the fundamental upper bound on the asymptotic decay rate, which they conjectured to be achievable by a quantum measurement.
The conjecture was initially proved for several special cases, including pure-state ensembles and states with mutually orthogonal supports \cite{NS11, NussbaumSzkola2011pure}.
For general quantum ensembles, Nussbaum and Szko{\l}a first demonstrated that a rate of $\frac{1}{3} C(\rho_1, \ldots, \rho_M)$ is generally achievable \cite{NS11}, a bound subsequently improved to $\frac{1}{2} C(\rho_1, \ldots, \rho_M)$ by Audenaert and Mosonyi \cite{AM14}.
Groundbreaking progress was made by Ke Li, who proved that the multiple quantum Chernoff distance is achievable for any finite-dimensional quantum ensemble \cite{Li16}.
Essentially, Li established a one-shot upper bound on \eqref{eq:minimum_error} in terms of pairwise errors, multiplied by a dimension-dependent prefactor.
In the i.i.d.~scenario, this factor grows polynomially with the number of copies $n$, thereby leaving the asymptotic decay rate unaffected.
Nevertheless, for systems operating in infinite dimensions, the dimension-dependent nature of this technique renders the bound inapplicable.
Standard finite-rank approximation does not straightforwardly  remedy the multiple Chernoff problem.
Consequently, whether the multiple quantum Chernoff bound holds in its most general form has remained an open problem prior to the present work. 
This unresolved question leaves a critical gap in the foundational theory of various quantum state discrimination and communication tasks \cite{Zhuang2021Quantum, Chang2025Joint}.
To bridge this gap, the central objective of the present work is to address the following question:
\begin{align} \label{eq:Question} \tag{Q}
\textit{Is the multiple Chernoff distance in \eqref{eq:Chernoff_multiple} achievable in infinite-dimensional quantum systems?}
\end{align}

In this paper, we answer \eqref{eq:Question} in the affirmative.
Our first main result is to establish a general one-shot error bound for infinite dimensions, with to only a universal multiplicative constant prefactor
(Theorem~\ref{theorem:pairwise}):
\begin{mytcolorbox}
\begin{align} \label{eq:pairwise0}
\Err^{\star}\left(p_1 \rho_1, \ldots, p_M \rho_M\right)
\leq 4 \sum_{(i,j):\,i<j} \Err^\star \!\big(p_i P^{ij}, p_j Q^{ij}\big)
\leq
\begin{dcases}
8\sum_{(i,j):\, i<j}
\Err^\star \left(p_i \rho_i, p_j \rho_j\right),
\\
4\sum_{(i,j):\, i<j} \e^{- C(\rho_i, \rho_j)}.
\end{dcases}
\end{align}
\end{mytcolorbox}
\noindent Here, $(P^{ij}, Q^{ij})$ denote the \NS distributions for the pair of states $(\rho_i, \rho_j)$ \cite{NS09}, detailed in Definition~\ref{defn:NS} below (see also the recent development \cite{AJ23, Anastasiadis2026}).
The upper branch of \eqref{eq:pairwise0} resolves a \emph{pairwise bound} conjecture proposed by Audenaert and Mosonyi \cite[Conjecture 2.3]{AM14}, and the lower branch directly yields the achievability of the multiple Chernoff distance $C(\rho_1, \ldots, \rho_M)$.
The key ingredients of the proof are the design of a new quantum measurement and a judicious application of the \emph{quantum union bound} \cite{Gao15, OMW19, OV22}.

\medskip
Our second contribution revisits both the quantum Chernoff bound \cite{ACM+07, ANS+08} and Nussbaum and Szko{\l}a's lower bound \cite{NS09} in the binary setting ($M=2$).
For notational simplicity, letting $A = p_1 \rho_1$ and $B = p_2 \rho_2$, we demonstrate that the minimum error $\Err^{\star}(A,B)$ is tightly bounded from both above and below by the \emph{Petz--\NS harmonic mean}:
\begin{align}
    \HM_{s}(A,B)
    \coloneqq \Tr\left[ \left( (1-s) P^{-1} + s Q^{-1} \right)^{-1} \right],
\end{align}
where $(P,Q)$ are the \NS distributions corresponding to $(A,B)$.
More precisely, we establish the following chain of inequalities (Theorem~\ref{theorem:harmonic-mean_bound}):
\begin{mytcolorbox}
\begin{align} \label{eq:harmonic-mean_bound0}
    s\wedge (1-s) \cdot \Err^\star \left(P, Q\right)
		\leq s\wedge (1-s) \cdot \HM_{s}(A,B)
		\leq \Err^\star \left(A, B\right)
        \leq \HM_{s}(A,B) 
        \leq
        \begin{dcases}
        \frac{\Err^\star \left(P, Q\right)}{s\wedge (1-s)} ,
        \\
        \Tr\left[A^{1-s} B^s\right].
        \end{dcases}
\end{align}
\end{mytcolorbox}
These bounds bridge several existing results in the literature.
When specializing to $s = \frac12$, the leftmost of \eqref{eq:harmonic-mean_bound0} recovers Nussbaum--Szko{\l}a's lower bound $\frac12 \Err^{\star}(P,Q) \leq \Err^{\star}(A,B)$.
On the other hand, the rightmost lower part of \eqref{eq:harmonic-mean_bound0} directly implies the quantum Chernoff bound \eqref{eq:Chernoff0} via the scalar harmonic--geometric mean inequality and optimizing over $s\in(0,1)$.
Crucially, evaluating the rightmost upper part of \eqref{eq:harmonic-mean_bound0} at $s = \frac12$ complements Nussbaum--Szko{\l}a's lower bound by providing a tight upper bound:
\begin{align} \label{eq:control}
    \frac12 \cdot \Err^{\star}(P,Q) \leq \Err^{\star}(A,B)
    \leq 2 \cdot \Err^{\star}(P,Q).
\end{align}
This demonstrates a profound equivalence at the one-shot level: the minimum error of binary quantum hypothesis testing is guaranteed to be within a factor of $2$ of its classical counterpart operating over the \NS distributions.

The one-shot bounds derived in \eqref{eq:control} enable us to provide a more accurate estimate of the minimum error in the i.i.d.~setting than the quantum Chernoff bound \eqref{eq:Chernoff0}.
Under the standard Cram\'er's conditions (see e.g.~\cite{BR60}, \cite{Petrov1965}, \cite[Chapters VII]{Petrov1975}, \cite[\S 2.2]{DZ98}), \eqref{eq:control} implies, for the first time, the almost-exact i.i.d.~asymptotics of general binary quantum hypothesis testing (Theorem~\ref{theorem:asymptotics_symmetric}):
\begin{mytcolorbox}
    \begin{align}
        \Err^\star\left(p_1 \rho_1^{\otimes n}, p_2 \rho_2^{\otimes n} \right)
        \asymp \frac{p_1^{\alpha^{\star}}p_2^{1-\alpha^\star}}{\alpha^{\star}(1-\alpha^{\star})\sqrt{2\pi n V_{\alpha^\star}(\rho_1\Vert\rho_2)}} \e^{ -n C(\rho_1, \rho_2) }.
    \end{align}
\end{mytcolorbox}
\noindent Here, $f(n)\asymp g(n)$ indicates that there exists a constant $c>0$ independent of $n$ satisfying  $\frac{1}{c} g(n) \leq f(n) \leq c g(n)$ for all sufficiently large $n \in \mathbb{N}$. 
The parameter $\alpha^{\star} \in (0,1)$ is the unique optimizer for the quantum Chernoff distance in \eqref{eq:Chernoff_distance}, and $V_{\alpha^\star}(\rho_1\Vert\rho_2)$ represents the associated second-moment quantity.
Applying a similar line of reasoning to multiple quantum hypothesis testing over infinite-dimensional systems, and combining \eqref{eq:pairwise0} with the Nussbaum--Szko{\l}a lower bound \cite{NS09, NussbaumSzkola2011pure}, we establish the following almost-exact asymptotic characterization:
\begin{mytcolorbox}
    \begin{align}
        \Err^\star\left(p_1 \rho_1^{\otimes n}, \cdots, p_M \rho_M^{\otimes n} \right)
        \asymp \sum_{(i,j):\, C(\rho_i,\rho_j) = C(\rho_1, \ldots, \rho_M)}\frac{p_i^{\alpha_{ij}^\star}p_j^{1-\alpha_{ij}^\star}}{\alpha_{ij}^{\star}(1-\alpha_{ij}^{\star})\sqrt{2\pi n V_{\alpha_{ij}^\star}(\rho_i\Vert\rho_j)}} \e^{ -n C(\rho_1, \ldots, \rho_M) }.
    \end{align}
\end{mytcolorbox}
\noindent Here, $\alpha_{ij}^{\star} \in (0,1)$ denotes the unique optimizer for the pairwise quantum Chernoff distance $C(\rho_i,\rho_j)$ in~\eqref{eq:Chernoff_distance}.

\medskip
The remainder of the paper is organized as follows.
Section~\ref{sec:notation} introduces necessary notation and the \NS distributions.
Section~\ref{sec:multiple} establishes the pairwise bounds for multiple quantum hypothesis testing.
Section~\ref{sec:simple} proves the harmonic-mean upper bound for binary quantum hypothesis testing.
Section~\ref{sec:asymptotics} presents the almost-exact asymptotics for both binary and multiple hypothesis testing.
Section~\ref{sec:asymmetric} investigates asymmetric hypothesis testing.
Several technical components and supplementary proofs are deferred to the appendices. Specifically, Appendix~\ref{sec:union_bound} details the quantum union bound alongside an alternative proof based on the data processing inequality. Appendix~\ref{sec:NS} revisits the Nussbaum--Szko{\l}a lower bound. Appendix~\ref{sec:representation} derives various representations of the Petz--\NS harmonic mean that are crucial to the main results in Section~\ref{sec:simple}. Finally, Appendix~\ref{sec:Chernoff} provides an independent derivation of the quantum Chernoff bound.

\begin{figure}
\centering
\begin{tikzpicture}[
    >=Stealth,
    line cap=round, line join=round,
    font=\normalsize,
    panel/.style    = {rounded corners=5pt, draw=paneledge, line width=0.7pt},
    ptitle/.style   = {font=\scshape\bfseries\large},
    tag/.style      = {font=\bfseries, text=ink},
    hypbox/.style   = {draw=stateacc, line width=0.9pt, rounded corners=2.5pt,
                       fill=white, align=center, text=ink,
                       minimum width=3.2cm, minimum height=1.7cm},
    fanarrow/.style = {-{Stealth[length=2.6mm,width=2.2mm]},
                       draw=arrowgray, line width=0.9pt},
    plabel/.style   = {font=\small\itshape, text=labelgray,
                       fill=statefill, inner sep=1.5pt},
    measbox/.style  = {draw=ink, line width=1.1pt, rounded corners=1pt,
                       fill=white, minimum width=3.0cm, minimum height=2.3cm},
    inarrow/.style  = {-{Stealth[length=3mm,width=3mm]}, draw=ink, line width=1.3pt},
    outarrow/.style = {-{Stealth[length=2.8mm,width=3.2mm]}, draw=ink,
                       line width=0.8pt, double, double distance=2pt},
    gauge/.style    = {draw=ink, line width=0.9pt},
    state/.style    = {text=ink, font=\large},
]

\fill[statefill,rounded corners=5pt] (-0.35,-3.05) rectangle (7.35,3.05);
\draw[panel] (-0.35,-3.05) rectangle (7.35,3.05);
\node[ptitle,text=stateacc] (stitle) at (3.5,2.55) {State};
\draw[stateacc,line width=0.8pt] (1.6,2.18) -- (5.4,2.18);  
\node[tag] at (0.15,2.72) {(a)};

\node[hypbox] (hyp) at (1.55,-0.15) {Multiple\\Hypotheses\\$\{\mathtt{H}_i\}_{i=1}^M$};

\node[state] (r1) at (6.45, 1.55) {$\rho_1$};
\node[state] (r2) at (6.45, 0.55) {$\rho_2$};
\node[state] (rd) at (6.45,-0.45) {$\vdots$};
\node[state] (rM) at (6.45,-1.85) {$\rho_M$};

\draw[fanarrow] (hyp.east) -- (r1.west);
\draw[fanarrow] (hyp.east) -- (r2.west);
\draw[fanarrow] (hyp.east) -- ($(rd.west)+(-0.05,0.05)$);
\draw[fanarrow] (hyp.east) -- (rM.west);

\node[plabel] at ($(hyp.east)!0.5!(r1.west)+(0,0.24)$) {$p_1$};
\node[plabel] at ($(hyp.east)!0.5!(r2.west)+(0,0.18)$) {$p_2$};
\node[plabel] at ($(hyp.east)!0.5!(rM.west)+(0,-0.24)$) {$p_M$};

\fill[measfill,rounded corners=5pt] (8.05,-3.05) rectangle (16.55,3.05);
\draw[panel] (8.05,-3.05) rectangle (16.55,3.05);
\node[ptitle,text=measacc] at (12.3,2.55) {Measurement};
\draw[measacc,line width=0.8pt] (9.9,2.18) -- (14.7,2.18);
\node[tag] at (8.55,2.72) {(b)};

\node[measbox] (mb) at (11.35,0.15) {};

\draw[inarrow] (8.75,0.15) -- (mb.west);

\coordinate (gc) at ($(mb.center)+(0,-0.5)$);   
\def\gr{0.92}                                     
\draw[gauge] ($(gc)+(-\gr,0)$) arc (180:0:\gr);   
\draw[gauge] ($(gc)+(-\gr,0)$) -- ($(gc)+(\gr,0)$);
\foreach \a in {0,30,...,180}{                     
  \draw[gauge,line width=0.6pt]
    ($(gc)+(\a:\gr)$) -- ($(gc)+(\a:0.82*\gr)$);}
\draw[gauge,-{Stealth[length=2mm,width=1.6mm]}]    
  (gc) -- ($(gc)+(62:1.02*\gr)$);
\fill[ink] (gc) circle (1.3pt);                    

\node[text=ink] at ($(mb.south)+(0,-0.5)$)
  {POVM $\{\Lambda_i\}_{i=1}^{M}$};

\draw[outarrow] (mb.east) -- ($(mb.east)+(1.45,0)$);

\node[state] (h1) at (15.25, 1.55) {$\mathtt{H}_1?$};
\node[state] (h2) at (15.25, 0.55) {$\mathtt{H}_2?$};
\node[state] (hd) at (15.25,-0.45) {$\vdots$};
\node[state] (hM) at (15.25,-1.85) {$\mathtt{H}_M?$};

\end{tikzpicture}
\caption{Illustration of the multiple quantum hypothesis testing.}
\label{figure:QHT}
\end{figure}

\section{Notation and Preliminaries} \label{sec:notation}
Throughout the paper, we consider a more general picture of positive semidefinite trace-class operators $A_1,A_2,\dots,A_M$, because one can always identify the weighted state $p_i \rho_i$ by $A_i$.
For two trace-class operators $A$ and $B$, we denote the Hilbert--Schmidt inner product by $\langle A, B \rangle \coloneqq \Tr[A^{\dagger} B]$.
Suppose a self-adjoint operator $A$ has a spectral decomposition
\begin{align}
    A = \sum_i a_i \lvert u_i \rangle \langle u_i \rvert.
\end{align}
The functional calculus is defined as 
\begin{align}
    f(A) \coloneqq \sum_{i:\, a_i \in \text{dom}(f)} f(a_i) \lvert u_i \rangle \langle u_i \rvert.
\end{align}
We adopt the notation $A^0$ to be the orthogonal projection on the $A$'s support; and $A^0 = 0$ for $A = 0$.
We use $\I$ for the identity operator.

For $M$ positive semidefinite trace-class operators \((A_1,\dots,A_M)\)
and a POVM $\{\Lambda_i\}_{i=1}^M$, we define the associated error:
\begin{align}
\Err \left(A_1,\dots,A_M;\{\Lambda_i\}_{i=1}^M\right)
&\coloneqq \sum_{i=1}^M  \Tr\left[ A_i ( \I - \Lambda_i)\right],
\end{align}
and the optimal error by minimizing all POVMs:
\begin{align} \label{eq:Err}
\Err^\star \left(A_1,\dots,A_M\right)
\coloneqq \inf_{\mathrm{POVM}\, \{\Lambda_i\}_{i=1}^M} \Err \left(A_1,\dots,A_M;\{\Lambda_i\}_{i=1}^M\right).
\end{align}
If \(\{\Lambda_0,\Lambda_1,\dots,\Lambda_M\}\) is a POVM with an additional failure outcome \(\Lambda_0\), we define
\begin{align}
\Err\left(A_1,\dots,A_M; \{\Lambda_i\}_{i=0}^M\right)
\coloneqq 
\sum_{i=1}^M  \Tr\left[ A_i ( \I - \Lambda_i)\right].
\end{align}
That is, we count the ``no-detection event'' as an erroneous event.
It is straightforward to see that
\begin{align}
\Err^\star \left(A_1,\dots,A_M\right)
= \inf_{\mathrm{POVM}\, \{\Lambda_i\}_{i=0}^M} \Err \left(A_1,\dots,A_M;\{\Lambda_i\}_{i=0}^M\right).
\end{align}

In binary hypothesis testing, Holevo and Helstrom showed that \eqref{eq:Err} admits a closed-form expression \cite{Hel67, Hol72, Hol74, Hol76}:
\begin{align} \label{eq:HH}
    \Err^{\star}(A_1, A_2)
    = \Tr\left[ A_1 \wedge A_2 \right].
\end{align}
Here, for self-adjoint trace-class operators $A$ and $B$, we define the noncommutative minimum and maximum, respectively, as
\begin{align}
     A\wedge B
     &\coloneqq \argmax_{H=H^\dagger} \left\{ \Tr[H] : H \leq A, H \leq B \right\}
   = \frac{A+B-|A-B|}{2},
     \\
     A\vee B
     &\coloneqq \argmin_{H=H^\dagger} \left\{ \Tr[H] : H \geq A, H \geq B \right\}
     = \frac{A+B+|A-B|}{2}.
\end{align}

\begin{definition}[\NS distributions] \label{defn:NS}
Let $A, B$ be self-adjoint trace-class operators with spectral decompositions:
\begin{align}
    A &= \sum_i a_i |u_i\rangle\langle u_i|,
    \quad
    B = \sum_j b_j |v_j\rangle\langle v_j|.
\end{align}
The (non-normalized) \NS distributions $(P,Q)$ are defined as
\begin{align} \label{eq:defn:NS}
    P
    &\coloneqq \sum_{i,j} a_i |\langle u_i | v_j \rangle |^2 |ij\rangle\langle ij|,
    \quad
    Q
    \coloneqq \sum_{i,j} b_j |\langle u_i | v_j \rangle |^2 |ij\rangle\langle ij|.
\end{align}
\end{definition}
Notice that the \NS distributions are usually defined as $P(i,j) = a_i |\langle u_i | v_j \rangle |^2$ and $Q(i,j) = b_j |\langle u_i | v_j \rangle |^2$ for each $(i,j)$.
In \eqref{eq:defn:NS}, we simply embed $P$ and $Q$ into diagonal operators for notational convenience.

\begin{fact}[Properties for the \NS distributions {\cite{NS09}}] \label{fact:NS} 
Consider self-adjoint trace-class operators $A$ and $B$ and the associated \NS distributions $(P,Q)$ in Definition~\ref{defn:NS}.
The following hold.
    \begin{enumerate}[(i)]
        \item 
        (Non-negativity) $A,B \geq 0$ implies $P,Q \geq 0$.

        \item
        (Support relation) $\supp(A) \subseteq \supp(B)$ implies $\supp(P) \subseteq \supp(Q)$.

        \item 
        (Trace preserving) $\Tr[P] = \Tr[A]$ and $\Tr[Q] = \Tr[B]$.

        \item
        (Tensor factorization)
        If $A = A_1 \otimes A_2$ and $B = B_1 \otimes B_2$ for some self-adjoint trace-class operators $A_i$ and $B_i$, then
        $P = P_1 \otimes P_2$ and $Q = Q_1 \otimes Q_2$, where $(P_i, Q_i)$ are the \NS distributions corresponding to $(A_i, B_i)$.

        \item 
        (Representing Petz's R\'enyi divergence)
        For non-orthogonal $A,B \geq 0$, let \cite{Pet86}
        \begin{align}
        D_{\alpha}(A\Vert B)
        \coloneqq \frac{1}{\alpha-1} \log \Tr\left[ A^{\alpha} B^{1-\alpha}\right], \quad \alpha \in (0,1),
        \end{align}
        and the limits at $\alpha = \{0,1\}$ are defined by the continuous extensions. 
        Note that 
        \begin{align}
            D_1(A\Vert B) = D(A\Vert B) \coloneqq \Tr\left[ A \left( \log(A) - \log(B)\right)\right]
        \end{align}
        if $\supp(A)\subseteq\supp(B)$, and $D(A\Vert B) \coloneqq \infty$ otherwise.
        Then, $D_{\alpha}(P\Vert Q) = D_{\alpha}(A\Vert B)$ for all $\alpha \in [0,1]$.
    \end{enumerate}
\end{fact}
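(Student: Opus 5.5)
The plan is to verify each item directly from Definition~\ref{defn:NS}, using complete orthonormal eigenbases. Since $A$ and $B$ are compact self-adjoint, we may take $\{u_i\}$ and $\{v_j\}$ to be complete orthonormal eigenbases of the underlying separable Hilbert space, including eigenvectors for eigenvalue $0$. This completeness is what makes Parseval's identity available: $\sum_j |\langle u_i|v_j\rangle|^2 = 1 = \sum_i |\langle u_i|v_j\rangle|^2$.

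\emph{Item (i).} This is immediate, since each diagonal entry $a_i|\langle u_i|v_j\rangle|^2$ and $b_j|\langle u_i|v_j\rangle|^2$ is nonnegative when $a_i,b_j\ge 0$.

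\emph{Item (ii).} We use $\supp(A)\subseteq\supp(B) \iff \ker B\subseteq \ker A$. Suppose $P(i,j)>0$. Then $a_i>0$ and $\langle u_i|v_j\rangle\neq 0$. If $b_j=0$, then $v_j\in\ker B\subseteq\ker A$, and $v_j$ would be orthogonal to $u_i\in\supp(A)$, which is a contradiction. Hence $b_j>0$, so $Q(i,j)>0$.

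\emph{Item (iii).} Sum over $j$ first and apply Parseval, which gives $\Tr[P]=\sum_i a_i=\Tr[A]$. The same argument summing over $i$ gives $\Tr[Q]=\Tr[B]$. Absolute summability comes from the trace-class assumption, so Tonelli/Fubini justifies the interchange of sums.

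\emph{Item (iv).} Eigenbases of $A_1\otimes A_2$ and $B_1\otimes B_2$ can be chosen as product bases $u_{i_1}\otimes u_{i_2}$ and $v_{j_1}\otimes v_{j_2}$, with product eigenvalues. The overlaps then factor as
\begin{align}
|\langle u_{i_1}\otimes u_{i_2}|v_{j_1}\otimes v_{j_2}\rangle|^2=|\langle u_{i_1}|v_{j_1}\rangle|^2|\langle u_{i_2}|v_{j_2}\rangle|^2,
\end{align}
so $P=P_1\otimes P_2$ after relabelling $|ij\rangle$ as $|i_1j_1\rangle\otimes|i_2j_2\rangle$, and likewise for $Q$. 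One remark is needed when eigenvalues are degenerate. Any choice of eigenbases gives distributions that agree up to relabelling for the quantities of interest, namely traces of functions as in (v). So it suffices to work with the product choice.

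\emph{Item (v).} For $\alpha\in(0,1)$, compute
\begin{align}
\Tr[A^\alpha B^{1-\alpha}]=\sum_{i,j}\langle u_i|A^\alpha|u_i\rangle\ldots=\sum_{i,j} a_i^\alpha b_j^{1-\alpha}|\langle u_i|v_j\rangle|^2=\Tr[P^\alpha Q^{1-\alpha}].
\end{align}
The middle step inserts the resolution of identity $\sum_j|v_j\rangle\langle v_j|$. The operator $A^\alpha B^{1-\alpha}$ is trace class by Hölder's inequality, since $A^\alpha\in\mathcal S_{1/\alpha}$ and $B^{1-\alpha}\in\mathcal S_{1/(1-\alpha)}$. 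All terms are nonnegative, so the double series can be rearranged freely. This gives $D_\alpha(P\Vert Q)=D_\alpha(A\Vert B)$ on $(0,1)$. Since both sides at $\alpha\in\{0,1\}$ are defined as continuous extensions (limits) of functions that coincide on $(0,1)$, the endpoint values agree as well. As a consistency check, one can also verify $D(P\Vert Q)=\Tr[A(\log A-\log B)]$ by the same expansion, with $+\infty$ exactly when the support condition fails, in line with (ii).

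The main obstacle is the infinite-dimensional bookkeeping in (v). This means justifying the trace expansion of a product of two unbounded-exponent functional calculi, and making sure the conventions for $0^0$ (projection onto support) match on both sides. I would handle it by restricting all sums to $a_i>0$, $b_j>0$, which is consistent with the convention $A^0=$ support projection, and by using only Tonelli's theorem for nonnegative series. The endpoint cases can then be dispatched purely by the limit definition.
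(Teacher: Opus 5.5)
Your proposal is correct: the paper states this as a Fact without proof (deferring to \cite{NS09}), and your direct verification in complete orthonormal eigenbases is the standard argument. That means keeping the kernel vectors so that Parseval gives (iii), choosing product eigenbases for (iv), using H\"older and Tonelli for (v), and inheriting the endpoints $\alpha\in\{0,1\}$ from the limit definition.

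Two small touch-ups. Since (ii) is stated for general self-adjoint $A,B$, argue with $P(i,j)\neq 0$ and $b_j\neq 0$ rather than with strict positivity. In (iv), changing the eigenbasis inside a degenerate eigenspace does not merely relabel $(P,Q)$; for example, $\Tr[P^2]$ can change. The correct justification is that quantities of the form $\sum_{i,j} g(a_i,b_j)\lvert\langle u_i|v_j\rangle\rvert^2$, such as $\Tr[P\wedge Q]$, $\Tr[P^{\alpha}Q^{1-\alpha}]$ and $\HM_s(A,B)$, depend only on the overlaps $\Tr[\Pi^A_a\Pi^B_b]$ of the spectral projections of $A$ and $B$. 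They are therefore basis-independent, and that is all the paper uses.
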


\section{Multiple Hypothesis Testing} \label{sec:multiple}
In this section, we derive the following pairwise upper bound for multiple quantum hypothesis testing.
\begin{theorem}[Pairwise bound] \label{theorem:pairwise}
Let \(A_1,\dots,A_M\) be $M$ positive semidefinite trace-class operators.
For any pair $(A_i,A_j)$, let  
$\big(P^{ij}, Q^{ij}\big)$ be the associated \NS distributions given in Definition~\ref{defn:NS}.
Then,
\begin{align} \label{eq:pairwise00}
\Err^\star \left(A_1,\dots,A_M\right)
\leq
4\sum_{(i,j):\, 1\le i<j\le M}
\Err^\star \!\big(P^{ij}, Q^{ij}\big)
\leq
\begin{dcases}
8\sum_{(i,j):\, 1\le i<j\le M}
\Err^\star \left(A_i,A_j\right),
\\
4\sum_{(i,j):\, 1\le i<j\le M} \Tr\left[A_i^{1-s_{ij}} A_j^{s_{ij}}\right], \quad \forall\, s_{ij}\in(0,1).
\end{dcases}
\end{align}

Moreover, for any number of copies $n$ and any quantum ensemble $(p_i, \rho_i^{\otimes n})_{i=1}^M$,
\begin{align} \label{eq:pairwise01}
\Err^\star \left(p_1 \rho_1^{\otimes n},\dots, p_M \rho_M^{\otimes n}\right)
\leq 4 (M-1) \e^{-n C(\rho_1, \ldots, \rho_M)  },
\end{align}
where the multiple Chernoff distance is defined as $C(\rho_1,\ldots,\rho_M) \coloneqq \min_{i\neq j} \sup_{\alpha \in (0,1) } - \log \Tr\left[\rho_i^{\alpha} \rho_j^{1-\alpha}\right]$.
\end{theorem}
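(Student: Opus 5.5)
The plan is to prove the two right-hand branches from known binary facts, derive \eqref{eq:pairwise01} by specialization, and spend the real work on the leftmost inequality in \eqref{eq:pairwise00}.

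\emph{Right-hand branches.} The upper branch follows from the Nussbaum--Szko{\l}a lower bound $\tfrac12\Err^\star(P^{ij},Q^{ij})\le \Err^\star(A_i,A_j)$ (revisited in Appendix~\ref{sec:NS}), applied pair by pair. For the lower branch, $(P^{ij},Q^{ij})$ are commuting diagonal operators, so by \eqref{eq:HH}
\[
\Err^\star(P^{ij},Q^{ij})=\Tr[P^{ij}\wedge Q^{ij}]\le \Tr\big[(P^{ij})^{1-s}(Q^{ij})^{s}\big].
\]
The inequality uses the scalar bound $\min(a,b)\le a^{1-s}b^s$. By Fact~\ref{fact:NS}(v), the right-hand side equals $\Tr[A_i^{1-s}A_j^{s}]$.

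\emph{Derivation of \eqref{eq:pairwise01}.} Set $A_i=p_i\rho_i^{\otimes n}$ and, for each pair, choose $s_{ij}$ optimal for $C(\rho_i,\rho_j)$ (or nearly optimal, then take a limit). Then
\[
\Tr\big[A_i^{1-s}A_j^{s}\big]=p_i^{1-s}p_j^{s}\,\big(\Tr[\rho_i^{1-s}\rho_j^{s}]\big)^n\le p_i^{1-s}p_j^{s}\,\e^{-nC(\rho_1,\ldots,\rho_M)}.
\]
By the weighted AM--GM inequality, $p_i^{1-s}p_j^s\le (1-s)p_i+sp_j\le p_i+p_j$. Summing over pairs gives $\sum_{i<j}(p_i+p_j)=(M-1)\sum_i p_i=M-1$, which yields the constant $4(M-1)$.

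\emph{Leftmost inequality: the measurement.} I would build a sequential measurement from pairwise binary tests and control its error with the quantum union bound of Appendix~\ref{sec:union_bound}. For each ordered pair choose a test $0\le T_{ij}\le\I$ with $T_{ji}=\I-T_{ij}$. Then form $\Gamma_i=T_{i,j_{M-1}}^{1/2}\cdots T_{i,j_1}^{1/2}$ and set $\Lambda_i=\Gamma_i^\dagger\Gamma_i$, adding a failure outcome $\Lambda_0$ to absorb the remainder. Checking $\sum_i\Lambda_i\le\I$ requires ordering the tests consistently, for instance via a sequential or knockout arrangement; I would verify it by induction on $M$. Gao's bound then gives
\[
\Tr[A_i(\I-\Lambda_i)]\le 4\sum_{j\ne i}\Tr[A_i(\I-T_{ij})].
\]
Summing over $i$ and grouping by pairs, the total error is at most
\[
4\sum_{i<j}\Big(\Tr[A_i(\I-T_{ij})]+\Tr[A_jT_{ij}]\Big).
\]

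\emph{Main obstacle: choosing the tests.} The difficulty is to pick $T_{ij}$ so that the binary error $\Tr[A_i(\I-T_{ij})]+\Tr[A_jT_{ij}]$ is at most $\Err^\star(P^{ij},Q^{ij})$, not merely $\Err^\star(A_i,A_j)$, which can be up to twice as large. Taking Holevo--Helstrom projections for $T_{ij}$ only gives the constant $8$ together with $\Err^\star(A_i,A_j)$. I would first try to sharpen the union bound for non-projective $T$, pairing the error term $\Tr[A_i(\I-T)]$ with a cross term $\Tr[A_j T]$ in a square-root form such as $\Tr[A_i(\I-T)^2]$. The aim is then to choose $T_{ij}$ as a harmonic-mean-type operator, for example $A_i^{1/2}(A_i+A_j)^{-1}A_i^{1/2}$, so that the pairwise cost reduces to the Petz--\NS harmonic mean $\HM_{1/2}$, which is comparable to $\Err^\star(P^{ij},Q^{ij})$. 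If the constants do not close this way, my fallback is the data-processing proof of the union bound in Appendix~\ref{sec:union_bound}, applied after mapping each pair to its \NS distributions.
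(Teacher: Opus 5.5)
Your arguments for the two right-hand branches of \eqref{eq:pairwise00} and for \eqref{eq:pairwise01} are correct and match the paper's. However, they all rest on the leftmost inequality $\Err^\star(A_1,\dots,A_M)\le 4\sum_{i<j}\Err^\star(P^{ij},Q^{ij})$, and there your plan has a genuine gap.

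First, the measurement you describe is not a POVM in general. For non-commuting tests, $\sum_i\Gamma_i^\dagger\Gamma_i\le\I$ fails whatever order you put the factors in. Take $M=3$ on a qubit, and let $T_{12},T_{13},T_{23}$ be the projections onto $\cos\theta|0\rangle+\sin\theta|1\rangle$ with $\theta=\pi/3,\pi/6,0$ respectively. For projections, $\Tr[\Gamma_i^\dagger\Gamma_i]=\Tr[T_{ij}T_{ik}]$ regardless of ordering, and here each equals $\tfrac34$. So $\Tr[\sum_i\Lambda_i]=\tfrac94>2=\Tr[\I]$.

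A genuinely sequential arrangement, such as a knockout tournament, does define an instrument. But then the error is no longer controlled by pairwise test errors, because a hypothesis is disturbed by tests it does not take part in. Take $A_1=\epsilon|+\rangle\langle+|$, $A_2=\epsilon|-\rangle\langle-|$, $A_3=|0\rangle\langle0|$, Holevo--Helstrom tests, and the bracket ``$1$ vs.\ $2$, winner vs.\ $3$''. All pairwise test errors are $O(\epsilon)$, and $\Err^\star(A_1,A_2,A_3)=O(\epsilon)$. Yet hypothesis $3$ is misclassified with probability tending to $\tfrac12$. The loss comes through terms like $\Tr[T_{12}A_3T_{12}\,T_{13}]$, which Gao's bound cannot reduce to $\Tr[A_3T_{13}]$.

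Second, even with a valid measurement, your choice of tests could not deliver the stated constant. A per-pair cost of $\HM_{1/2}(A_i,A_j)$ is only bounded by $2\Tr[P^{ij}\wedge Q^{ij}]$, since $\frac{2xy}{x+y}\le 2\min(x,y)$ with near-equality when $x\ll y$. You would therefore end with $8$ rather than $4$. The fallback of mapping each pair to its Nussbaum--Szko{\l}a distributions is also unavailable: in general no channel sends $(A_i,A_j)$ to $(P^{ij},Q^{ij})$. By data processing, such a channel would force $\Tr[A_i\wedge A_j]\le\Tr[P^{ij}\wedge Q^{ij}]$, which the paper's rank-one remark shows can fail.

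The missing idea is to drop binary tests between the $A_i$ altogether. The paper proceeds as follows.
\begin{enumerate}[(a)]
\item It fine-grains each $A_i=\sum_k\lambda_{ik}Q_{ik}$ into rank-one eigencomponents and lists all of them jointly in non-increasing order $\lambda_1\ge\lambda_2\ge\cdots$.
\item It measures sequentially with $E_m=\bar Q_1\cdots\bar Q_{m-1}Q_m\bar Q_{m-1}\cdots\bar Q_1$, which is automatically a POVM with a residual outcome, and then merges outcomes by label.
\item Gao's bound is applied to $\lambda_mQ_m$ with the projections $\bar Q_1,\dots,\bar Q_{m-1},Q_m$. It costs only $4\sum_{v<m}\lambda_m\Tr[Q_vQ_m]$, because the component's own projection accepts it with certainty.
\item The global ordering turns $\lambda_m$ into $\lambda_v\wedge\lambda_m$, and components of the same $A_i$ are orthogonal.
\end{enumerate}
The total is therefore exactly $4\sum_{i<j}\sum_{k,\ell}(\lambda_{ik}\wedge\lambda_{j\ell})\Tr[Q_{ik}Q_{j\ell}]=4\sum_{i<j}\Tr[P^{ij}\wedge Q^{ij}]$. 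The Nussbaum--Szko{\l}a minimum appears on its own, with no pairwise test to design or optimize.
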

\begin{remark}[Comparisons]
    Compared to the landmark finite-dimensional multiple quantum Chernoff bound by Li \cite[Theorem~2]{Li16}, \eqref{eq:pairwise00} of Theorem~\ref{theorem:pairwise} improves upon the state-dependent prefactor from $\frac{25}{4}(M-1)^2 \max_{1 \le i \le M} \nu^2(p_i \rho_i) + 3$ to the universal constant $4$. Here, $\nu(A)$ denotes the number of distinct eigenvalues of $A$.

    Compared with the best classical result known to us, the only
    additional loss in our pairwise reduction is the universal factor \(4\),
    which arises from the quantum union bound in
    Theorem~\ref{theorem:Union_Bound}; see \cite[Theorem~15]{SV18}.
\end{remark}

\begin{proof}
The proof is divided into three steps.

\noindent\textbf{Step 1: Fine-Graining.}
For each \(i\), choose a rank-one spectral decomposition
\begin{align}
A_i=\sum_{k=1}^{\infty}\lambda_{ik}Q_{ik},
\end{align}
where \(\lambda_{ik}>0\) are the nonzero eigenvalues of \(A_i\), counted with multiplicity, and \(Q_{ik}\) are rank-one orthogonal projections. The convergence is in trace norm.

We enumerate all rank-one spectral components of
\(A_1,\ldots,A_M\) as a single sequence
\begin{align}
    (\lambda_1,Q_1),(\lambda_2,Q_2),\ldots
\end{align}
arranged in non-increasing order of the eigenvalues:
\begin{align}
    \lambda_1\ge \lambda_2\ge \cdots .
\end{align}
For each \(m\in\mathbb N\), let \(i(m)\in\{1,\dots,M\}\) be the index such that
\((\lambda_m,Q_m)\) comes from the spectral decomposition of \(A_{i(m)}\).
Since the spectral expansion of each \(A_i\) is absolutely convergent in trace norm, the relabeling does not affect the value of the sum. Hence, for each \(i\),
\begin{align}
A_i=\sum_{m:\,i(m)=i}\lambda_mQ_m
\end{align}
with unconditional convergence in trace norm.

We first reduce the problem to discriminating the finer family
\[
    \lambda_1Q_1,\lambda_2Q_2,\ldots .
\]
To this end, consider an arbitrary POVM
\(\{E_m\}_{m=0}^{\infty}\) for the fine-grained problem,
where the outcome \(m\ge 1\) is interpreted as deciding the fine
hypothesis \(\lambda_m Q_m\), and the outcome \(0\) is a residual outcome.

From this POVM, we obtain a POVM with \(M\) decision outcomes and one residual outcome by merging the fine outcomes according to the original labels:
\begin{align}
\Lambda_i
\coloneqq
\sum_{m:\,i(m)=i}E_m,
\qquad i=1,\dots,M,
\quad\text{and}\quad
\Lambda_0
\coloneqq
E_0,
\end{align}
where the sums converge in the strong operator topology (SOT), since their partial sums are monotonically increasing and bounded by \(\I\). Then
\(\{\Lambda_0,\Lambda_1,\dots,\Lambda_M\}\) is a POVM. The residual outcome is counted as an error in the estimate below; equivalently, assigning it to any one of the original labels does not increase the error.

Moreover, since \(\Lambda_i\) is defined as the strong limit of the
increasing finite partial sums of positive operators \(E_m\), it follows that
\(\Lambda_i \ge E_m\) for every \(m\) such that \(i(m)=i\). Since \(A_i=\sum_{m:i(m)=i}\lambda_mQ_m\) in trace norm and
\(1-\Lambda_i\) is bounded, the continuity of the trace pairing yields
\begin{align}
\sum_{i=1}^M\Tr[A_i(\I-\Lambda_i)]
&=
\sum_{i=1}^M
\sum_{m:\,i(m)=i}
\lambda_m\Tr[Q_m(\I-\Lambda_i)] \notag\\
&\le
\sum_{i=1}^M
\sum_{m:\,i(m)=i}
\lambda_m\Tr[Q_m(\I-E_m)] \notag\\
&=
\sum_{m=1}^{\infty}\lambda_m\Tr[Q_m(\I-E_m)].
\end{align}
Thus,
\begin{align}
\Err^\star(A_1,\dots,A_M)
\le
\Err\left(\lambda_1Q_1,\lambda_2Q_2,\dots;\{E_m\}_{m=0}^{\infty}\right),
\end{align}
where
\begin{align}
\Err\left(\lambda_1Q_1,\lambda_2Q_2,\dots;\{E_m\}_{m=0}^{\infty}\right)
\coloneqq
\sum_{m=1}^{\infty}\lambda_m\Tr[Q_m(\I-E_m)].
\end{align}
In other words, it suffices to construct a good measurement for the finer problem.

\medskip
\noindent\textbf{Step 2: POVM Construction.}
Set
\begin{align}
\bar Q_m\coloneqq \I-Q_m,
\qquad m\in\mathbb N.
\end{align}
We now construct such a POVM by sequentially applying the two-outcome
projective measurements \(\{Q_m,\bar Q_m\}\) in the order
\(m=1,2,\ldots\). At step \(m\), the measurement is performed only if none
of the previous projections \(Q_1,\ldots,Q_{m-1}\) has occurred. The effect
corresponding to the first occurrence of \(Q_m\) is therefore
\begin{align}
E_m
\coloneqq
\bar Q_1\cdots \bar Q_{m-1}Q_m\bar Q_{m-1}\cdots \bar Q_1,
\qquad m\in\mathbb N.
\end{align}

The residual outcome corresponds to the event that none of the projections
\(Q_1,Q_2,\ldots\) is accepted. To define its effect, let
\begin{align}
L_m
\coloneqq
\bar Q_1\cdots \bar Q_{m-1} \bar Q_m \bar Q_{m-1}\cdots \bar Q_1,
\qquad m\in\mathbb N,
\end{align}
with \(L_0=\I\). Here, \(L_m\) is the effect corresponding to rejecting
\(Q_1,\ldots,Q_m\). Since \(E_m\ge0\), the identity
\begin{align}
L_{m-1}=E_m+L_m
\end{align}
shows that \((L_m)_{m\ge0}\) is decreasing in the operator order. Being
bounded below by \(0\), it therefore converges strongly to some positive
operator. We define the residual effect by
\[
    E_0\coloneqq \operatorname*{s-lim}_{m\to\infty}L_m ,
\]
which corresponds to rejecting all projections \(Q_1,Q_2,\ldots\).

For every \(n\in\mathbb N\), iterating \(L_{m-1}=E_m+L_m\) gives
\begin{align}
\I=L_0=\sum_{m=1}^{n}E_m+L_n.
\end{align}
Since \(L_n\to E_0\) in the SOT, we have
\begin{align}
\sum_{m=1}^{n}E_m
=
\I-L_n
\xrightarrow[n\to\infty]{\mathrm{SOT}}
\I-E_0.
\end{align}
Therefore,
\begin{align}
\I=\sum_{m=1}^{\infty}E_m+E_0
\end{align}
in the strong operator topology. Thus \(\{E_0,E_1,E_2,\dots\}\) is a POVM.

\medskip
\noindent\textbf{Step 3: Error Analysis.}
It remains to estimate this fine error. Fix \(m \in \mathbb{N}\). Since
\begin{align}
E_m=
\bar Q_1\cdots \bar Q_{m-1}Q_m\bar Q_{m-1}\cdots \bar Q_1,
\end{align}
we have
\begin{align}
\lambda_m\Tr[Q_mE_m]
=
\Tr\left[
Q_m\bar Q_{m-1}\cdots \bar Q_1
(\lambda_mQ_m)
\bar Q_1\cdots \bar Q_{m-1}Q_m
\right].
\end{align}
Apply the quantum union bound, Theorem~\ref{theorem:Union_Bound} in Appendix~\ref{sec:union_bound}, with
\begin{align}
\rho\leftarrow \lambda_mQ_m,
\quad
\Pi_1\leftarrow \bar Q_1,\ \dots,\ \Pi_{m-1}\leftarrow \bar Q_{m-1},
\quad
\Pi_m\leftarrow Q_m.
\end{align}
Since \(\I-\bar Q_v=Q_v\) for \(v<m\), and
\begin{align}
(\I-Q_m)(\lambda_mQ_m)=0,
\end{align}
we obtain
\begin{align}
\lambda_m\Tr[Q_mE_m]
&\ge
\Tr[\lambda_mQ_m]
-
4\sum_{v:\,v<m}\Tr[Q_v(\lambda_mQ_m)] \notag\\
&=
\lambda_m
-
4\sum_{v:\,v<m}\lambda_m\Tr[Q_vQ_m].
\end{align}
Therefore,
\begin{align}
\Err\left(\lambda_1Q_1,\lambda_2Q_2,\dots;\{E_m\}_{m=0}^{\infty}\right)
&=
\sum_{m=1}^{\infty}\lambda_m
(1-
\Tr[Q_mE_m]) \notag\\
&\le
4\sum_{m=1}^{\infty}
\sum_{v:\,v<m}
\lambda_m\Tr[Q_vQ_m].
\end{align}
Because \(\lambda_1\ge\lambda_2\ge\cdots\), for every \(v<m\),
\begin{align}
\lambda_m=\lambda_v\wedge\lambda_m.
\end{align}
Hence, by the preceding fine-graining reduction,
\begin{align}
\Err^\star(A_1,\dots,A_M)
\le
4\sum_{m=1}^{\infty}
\sum_{v:\,v<m}
(\lambda_v\wedge\lambda_m)\Tr[Q_vQ_m].
\end{align}

Finally, rewriting this sum in terms of the original spectral decompositions gives
\begin{align}
\sum_{m=1}^{\infty}
\sum_{v:\,v<m}
(\lambda_v\wedge\lambda_m)\Tr[Q_vQ_m]
=
\sum_{(i,j):1\le i<j\le M}
\sum_{k=1}^{\infty}
\sum_{\ell=1}^{\infty}
(\lambda_{ik}\wedge\lambda_{j\ell})
\Tr[Q_{ik}Q_{j\ell}],
\end{align}
because for fixed \(i\), the projections \(\{Q_{ik}\}_k\) are pairwise orthogonal. 
As a result,
\begin{align}
\Err^\star(A_1,\dots,A_M)
\le
4\sum_{(i,j):1\le i<j\le M}
\sum_{k=1}^{\infty}
\sum_{\ell=1}^{\infty}
(\lambda_{ik}\wedge\lambda_{j\ell})
\Tr[Q_{ik}Q_{j\ell}].
\end{align}
For each pair \((i,j)\), the summation is precisely
\(\Err^\star(P^{ij},Q^{ij})=\Tr[P^{ij}\wedge Q^{ij}]\) for the associated \NS distributions. This proves the first claim of \eqref{eq:pairwise00}.

The inequality
\begin{align}
\Tr[P^{ij}\wedge Q^{ij}]
\le
2\Tr[A_i\wedge A_j]
\end{align}
is the \NS lower bound \eqref{eq:NS_lower} discussed in Appendix~\ref{sec:NS}. The Chernoff bound follows from the scalar inequality
\(x\wedge y\le x^{1-s}y^s\), \(x,y\ge 0\), \(s\in(0,1)\):
\begin{align}
\Tr[P^{ij}\wedge Q^{ij}]
&\le
\Tr\left[\big(P^{ij}\big)^{1-s}\big(Q^{ij}\big)^s\right] \notag\\
&=
\sum_{k,\ell}
\lambda_{ik}^{1-s}\lambda_{j\ell}^{s}
\Tr[Q_{ik}Q_{j\ell}] \notag\\
&=
\Tr\left[A_i^{1-s}A_j^s\right],
\end{align}
where the last line follows from Fact~\ref{fact:NS}.
In the i.i.d.~setting, for any \(n\in\mathbb N\), applying the preceding
scalar Chernoff bound and then optimizing over each pair-dependent
parameter \(s_{ij}\in(0,1)\) gives:
\begin{align}
    \Err^\star(p_1\rho_1^{\otimes n}, \dots, p_M\rho_M^{\otimes n}) 
    &\le 4 \sum_{(i,j):\, 1 \le i < j \le M} \inf_{s_{ij}\in(0,1)}p_i^{1-s_{ij}} p_j^{s_{ij}} \left(\Tr[\rho_i^{1-s_{ij}} \rho_j^{s_{ij}}]\right)^n \\
    &\le 4 \sum_{(i,j):\, 1 \le i < j \le M} (p_i+p_j)\inf_{s_{ij}\in(0,1)} \left(\Tr[\rho_i^{1-s_{ij}} \rho_j^{s_{ij}}]\right)^n \\
    &= 4 \sum_{(i,j):\, 1 \le i < j \le M} (p_i+p_j) \cdot \e^{-n C(\rho_i, \rho_j)} 
    \\
    &\le 4 \sum_{(i,j):\, 1 \le i < j \le M} (p_i+p_j) \cdot \e^{-n C(\rho_1, \ldots, \rho_M)} \label{eq:last1}
    \\
    &= 4 (M-1) \cdot \e^{-n C(\rho_1, \ldots, \rho_M)},
\end{align}
concluding the proof.
\end{proof}

\section{Binary Hypothesis Testing and Harmonic-Mean Bounds} \label{sec:simple}

The pairwise bound established in Theorem~\ref{theorem:pairwise} directly applies to binary hypothesis testing ($M=2$):
\begin{align} \label{eq:loose_bound}
    \Err^\star\left(A,B\right)
    \leq 4 \cdot \Err^\star\left(P,Q\right),
\end{align}
where $(P,Q)$ are the \NS distributions of $(A,B)$.
In this section, we will show that the factor $4$ in \eqref{eq:loose_bound} can be further strengthened to $2$ via the harmonic mean as a bridge.
Together with the \NS lower bound \cite{NS09} (see also Appendix~\ref{sec:NS}), we obtain the upper and lower control of the quantum error by the classical error, i.e.,
\begin{align} \label{eq:same_order}
    \frac12 \cdot \Err^\star\left(P,Q\right)
    \leq
    \Err^\star\left(A,B\right)
    \leq 2 \cdot \Err^\star\left(P,Q\right).
\end{align}

We first formally introduce the definition of the harmonic mean and present our main result---harmonic-mean bound---in Theorem~\ref{theorem:harmonic-mean_bound} below.
For positive semidefinite operators $A,B\geq 0$ and parameter $s \in [0,1]$, the \emph{$s$-weighted Kubo--Ando harmonic operator mean} \cite{KA80} is defined as
\begin{align} \label{eq:KA_harmonic_mean}
    A \mathbin{!}_s B 
    &\coloneqq A \left( s A + (1-s) B\right)^{-1} B
    = \left( (1-s)A^{-1} + s B^{-1} \right)^{-1}
\end{align}
with the standard limiting definition for singular $A$  and $B$.\footnote{When $A$ or $B$ is singular, the Kubo--Ando harmonic operator mean in \eqref{eq:KA_harmonic_mean} is defined as $A \mathbin{!}_s B 
\coloneqq \lim_{\epsilon \searrow 0} \big(A + \epsilon \I\big) \mathbin{!}_s \big(B + \epsilon \I\big)$, where the limit is in the taken in the operator order  and the strong operator topology.
}

\begin{definition}[Petz--Nussbaum--Szko{\l}a harmonic mean] \label{defn:harmonic_mean}
    Let $A$ and $B$ be positive semidefinite trace-class operators with the associated \NS distributions $(P,Q)$ given in Definition~\ref{defn:NS}.
    For any $s\in [0,1]$, the \emph{$s$-weighted Petz--\NS harmonic mean} of $A$ and $B$ is defined as\footnote{By convention, the fraction in \eqref{eq:defn:HM} is understood as $\lim_{\epsilon\searrow 0} \frac{(a_i+\epsilon) (b_j+\epsilon) }{s (a_i+\epsilon) + (1-s) (b_j+\epsilon)} $.}
    \begin{align} \label{eq:defn:HM}
        \HM_s (A,B)
        \coloneqq \Tr[P\mathbin{!}_s Q]
        &= \sum_{i,j} \frac{a_i b_j}{s a_i + (1-s) b_j} |\langle u_i | v_j \rangle|^2.
    \end{align}
\end{definition}

The quantity in \eqref{eq:defn:HM} is monotonically non-decreasing under any completely positive and trace-preserving (CPTP) map $\mathscr{N}$, i.e.,
\begin{align}
    \HM_s (A,B) \leq \HM_s \left(\mathscr{N}(A),\mathscr{N}(B)\right).
\end{align}
Indeed, $\HM_s(A,B)$ is within a broader family of quantum metrics studied by Petz~\cite{Pet85, Pet86, petz1996monotone} (see also \cite{Lesniewski1999, HMP+11, Hia18}) via evaluating an operator monotone function on the relative modular operator $\Delta_{{A}|{B}} := A (\cdot ) B^{-1}$ \cite{araki1975relative, araki1977relative}:\footnote{When $B$ is singular, the right multiplication in $\Delta_{A\mid B}$ is taken on the support of $B$, i.e.~pseudoinverse.}\textsuperscript{,}\footnote{The second term $f_s'(\infty) \Tr\left[ A ( \I - B^0) \right]$ in \eqref{eq:Petz_form} is to correct the end point $s=0$, i.e., $f_s'(\infty) = 1$ so that the convention $M_0(A,B) = \Tr[A]$ (instead of $\Tr[A B^0]$) is preserved.
However, the consideration of the end points $s = 0,1$ is immaterial to this work.
}
\begin{align} \label{eq:Petz_form}
        \HM_s (A,B) 
        = \big\langle B^{\nicefrac12} , f_s\left(\Delta_{A\mid B}\right) B^{\nicefrac12} \big\rangle  + f_s'(\infty) \Tr\left[ A ( \I - B^0) \right],
\end{align}
where the generating function for the weighted harmonic mean is the following operator monotone function \cite[\S 4]{HP14}:
\begin{align} \label{eq:generating_function}
    f_s:  [0,+\infty)\ni x\mapsto \frac{x}{s x + (1-s)}, \quad s \in [0,1].
\end{align}
We collect other useful representations of $\HM_s(A,B)$ in Appendix~\ref{sec:representation}.

\begin{theorem}[Harmonic-mean bound] \label{theorem:harmonic-mean_bound}
	For all trace-class operators $A, B \geq 0$ with the \NS distributions $(P,Q)$ and for all parameters $s \in (0,1)$, the following chain of inequalities holds:
	\begin{align}
        s\wedge (1-s) \cdot \Tr[P\wedge Q]
		\leq s\wedge (1-s) \cdot \Tr[P\mathbin{!}_s Q] 
		\leq \Tr[A \wedge B]
        \leq \Tr[P\mathbin{!}_s Q] 
        \leq
        \begin{dcases}
        \frac{1}{s\wedge (1-s)} \Tr[P\wedge Q],
        \\
        \Tr\left[A^{1-s} B^s\right].
        \end{dcases}
	\end{align}
\end{theorem}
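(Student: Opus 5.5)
The chain has six links. Four are scalar inequalities applied entrywise to the \NS distributions, and the remaining two compare the quantum quantity $\Tr[A\wedge B]$ with $\Tr[P\mathbin{!}_s Q]$. Throughout, write $A=\sum_i a_i|u_i\rangle\langle u_i|$ and $B=\sum_j b_j|v_j\rangle\langle v_j|$. Let $w_{ij}\coloneqq|\langle u_i|v_j\rangle|^2$, so that $\Tr[P\mathbin{!}_sQ]=\sum_{i,j}\frac{a_ib_j}{sa_i+(1-s)b_j}w_{ij}$ and $\Tr[P\wedge Q]=\sum_{i,j}(a_i\wedge b_j)w_{ij}$.

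\textbf{Scalar links.} For $x,y\ge0$ we have $x\wedge y\,(s\wedge(1-s))\le s\wedge(1-s)\cdot(x\vee y)\le sx+(1-s)y\le x\vee y$. Together with $xy=(x\wedge y)(x\vee y)$ this gives
\begin{align}
x\wedge y\;\le\;\frac{xy}{sx+(1-s)y}\;\le\;\frac{x\wedge y}{s\wedge(1-s)}.
\end{align}
Summing against $w_{ij}$ yields the first inequality of the chain and the upper branch of the last one. The weighted AM--GM inequality $sx+(1-s)y\ge x^sy^{1-s}$ gives $\frac{xy}{sx+(1-s)y}\le x^{1-s}y^s$. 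Summing, and using $\sum_{i,j}a_i^{1-s}b_j^s w_{ij}=\Tr[A^{1-s}B^s]$ (the computation behind Fact~\ref{fact:NS}(v)), gives the lower branch. The degenerate cases with zero eigenvalues are covered by the $\epsilon$-convention in Definition~\ref{defn:harmonic_mean}.

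\textbf{Lower bound $s\wedge(1-s)\Tr[P\mathbin{!}_sQ]\le\Tr[A\wedge B]$.} I would follow the \NS argument from Appendix~\ref{sec:NS}. By \eqref{eq:HH}, $\Tr[A\wedge B]=\inf_{0\le T\le\I}\Tr[A(\I-T)]+\Tr[BT]$. For a test $0\le T\le\I$ we have $\I-T\ge(\I-T)^2$ and $T\ge T^2$, hence
\begin{align}
\Tr[A(\I-T)]+\Tr[BT]\ge\sum_{i,j}\Big(a_i|\langle u_i|(\I-T)|v_j\rangle|^2+b_j|\langle u_i|T|v_j\rangle|^2\Big).
\end{align}
The two matrix elements sum to $\langle u_i|v_j\rangle$. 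Minimizing $a|x|^2+b|y|^2$ subject to $x+y=c$ gives $\frac{ab}{a+b}|c|^2$. Therefore $\Tr[A\wedge B]\ge\sum_{i,j}\frac{a_ib_j}{a_i+b_j}w_{ij}$. Finally, $sa+(1-s)b\ge (s\wedge(1-s))(a+b)$ gives $\frac{ab}{a+b}\ge s\wedge(1-s)\cdot\frac{ab}{sa+(1-s)b}$.

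\textbf{Upper bound $\Tr[A\wedge B]\le\Tr[P\mathbin{!}_sQ]$ (main obstacle).} Let $G\coloneqq sA+(1-s)B$, with $G^{-1/2}$ taken on its support. I would use the pretty-good-type test $T\coloneqq sG^{-1/2}AG^{-1/2}$, which satisfies $0\le T\le G^{-1/2}GG^{-1/2}\le\I$. Substituting $sA=G-(1-s)B$ and using cyclicity, the error of this test should collapse to
\begin{align}
\Tr[A(\I-T)]+\Tr[BT]=\Tr\big[AG^{-1/2}BG^{-1/2}\big],
\end{align}
which is the quantum analogue of the classical identity $x\cdot\frac{(1-s)y}{G}+y\cdot\frac{sx}{G}=\frac{xy}{G}$. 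It then remains to show $\Tr[AG^{-1/2}BG^{-1/2}]\le\HM_s(A,B)$, and this is the step I expect to be hardest.

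My first attempt would use the variational form of the Petz quantity \eqref{eq:Petz_form}. Since $L_A$ and $R_B$ commute, the parallel-sum formula should give
\begin{align}
\HM_s(A,B)=\inf_{X}\Big\{\tfrac{1}{1-s}\Tr[(\I-X)^\dagger A(\I-X)]+\tfrac1s\Tr[XBX^\dagger]\Big\}.
\end{align}
I would then try to compare this with the test error, either through a Cauchy--Schwarz argument with the specific test $T$, or by showing the error dominates the same quadratic form for the optimizing $X$. As a backup, I would use that among monotone quantum extensions of the concave mean $f_s$, the Kubo--Ando trace $\Tr[A\mathbin{!}_sB]=\Tr[AG^{-1}B]$ is minimal, so $\Tr[A\mathbin{!}_sB]\le\HM_s(A,B)$. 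In that route I would look for a test achieving error at most $\Tr[AG^{-1}B]$, equivalently proving $\Tr[(A-B)_+]\ge s\,\mathrm{Re}\Tr[(A-B)AG^{-1}]$.
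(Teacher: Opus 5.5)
Your scalar links are correct, and so is your lower bound. The test-by-test estimate $\Tr[A(\I-T)]+\Tr[BT]\ge\sum_{i,j}\frac{a_ib_j}{a_i+b_j}w_{ij}$ is the original \NS argument; the paper instead plugs the Helstrom projection into the infimum representation \eqref{eq:infimum_representation}. Either route works.

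The gap is in the upper bound $\Tr[A\wedge B]\le\HM_s(A,B)$. You are right that the test $T=sG^{-1/2}AG^{-1/2}$ has error exactly $\Tr[AG^{-1/2}BG^{-1/2}]$. But the inequality you leave for last, $\Tr[AG^{-1/2}BG^{-1/2}]\le\HM_s(A,B)$, is false in general, so no Cauchy--Schwarz argument can close it. The paper notes in its remark on other harmonic means that this pretty-good-measurement quantity is incomparable with $\HM_s$. A concrete example:
\begin{itemize}
\item Take $s=\tfrac12$, $A=|0\rangle\langle 0|$, and $B=R_\theta\,\mathrm{diag}(1,10)\,R_\theta^{\dagger}$, where $R_\theta$ is the real rotation by $\theta=0.1$.
\item Then $\HM_{1/2}(A,B)=\cos^2\theta+\tfrac{20}{11}\sin^2\theta\approx 1.00815$.
\item Using $B=2G-A$, your test has error $2-\langle 0|G^{-1/2}|0\rangle^2\approx 1.00834$.
\item To second order the two quantities are $1+\tfrac{9}{11}\theta^2$ and roughly $1+0.837\,\theta^2$, so the excess persists for all small $\theta$.
\end{itemize}
Here $A\le B$, so $\Tr[A\wedge B]=1$. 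The theorem holds; only your certificate is too weak.

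Your backup route fails more badly. For two distinct non-orthogonal pure states, $\Tr[A\mathbin{!}_sB]=\Tr[AG^{-1}B]=0<\Tr[A\wedge B]$, so no test can have error at most $\Tr[AG^{-1}B]$. Finally, the infimum representation sits on the wrong side for this direction: any choice of $X$ there only gives an upper bound on $\HM_s$. That is why the paper uses it for the lower half of the chain, not this one.

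What is missing is a dual certificate rather than a test. In the supremum representation \eqref{eq:supremum_representation}, every feasible $Y$ gives a lower bound on $\HM_s(A,B)$. The paper picks the non-Hermitian witness $Y_0=A(A\vee B)^{-1}B$. By the pseudo-Hermitian identity (Lemma~\ref{lemma:min_identity}), $\Tr[Y_0]=\Tr[Y_0^\dagger]=\Tr[A\wedge B]$. Hence
\[
\HM_s(A,B)\ge 2\Tr[A\wedge B]-(1-s)q_A-s\,q_B,
\]
with $q_A=\Tr[Y_0^\dagger A^{-1}Y_0]$ and $q_B=\Tr[Y_0B^{-1}Y_0^\dagger]$.

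To bound the quadratic terms, write $A=A\vee B-H_-$ and $B=A\vee B-H_+$, where $H_+=(A-B)_+$, $H_-=(B-A)_+$, and $H_+H_-=0$. This gives
\[
q_A=\Tr[A\wedge B]-\Tr[H_+-H_+(A\vee B)^{-1}H_+]-\Tr[H_+(A\vee B)^{-1}H_-(A\vee B)^{-1}H_+].
\]
Both subtracted terms are nonnegative. For the first, $A\vee B\ge H_+$ and operator monotonicity of the inverse give $H_+^{1/2}(A\vee B)^{-1}H_+^{1/2}\le H_+^0$. By symmetry $q_B\le\Tr[A\wedge B]$ as well, and so $\HM_s(A,B)\ge\Tr[A\wedge B]$ without constructing any measurement.
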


The lower bounds $s\wedge (1-s) \cdot \Tr[P\wedge Q] \leq s\wedge (1-s) \cdot \Tr[P\mathbin{!}_s Q] \leq \Tr[A \wedge B]$ are discussed in Appendix~\ref{sec:NS}.
The harmonic-mean upper bound $\Tr[A\wedge B] \leq \Tr[P \mathbin{!}_{s} Q] \leq \Tr[A^{1-s} B^{s}]$ provides an alternative proof to the quantum Chernoff bound \cite[Theorem 1]{ACM+07}, \cite[Theorem 2]{ANS+08}.

\begin{remark}[Rank-one operators]
When $A$ and $B$ are rank-one operators, one can upper bound $\Tr[A\wedge B]$ by $\Tr[P\wedge Q]$ directly without going through the harmonic mean $\Tr[P\mathbin{!}_s Q]$.
For example, for pure states $|\phi\rangle \langle \phi|$ and $|\psi\rangle \langle \psi|$ with overlap $|\langle\phi|\psi\rangle|^2 = c$ and with the associated \NS distributions $(P,Q)$, a direct calculation shows that, for all $\gamma \geq 0$,
\begin{align}
    \Tr[|\phi\rangle \langle \phi|\wedge \gamma |\psi\rangle \langle \psi| ]
    = \frac{1}{2} \left( 1 + \gamma - \sqrt{(1+\gamma)^2 - 4\gamma c} \right) \leq c \cdot \min(1, \gamma) = \Tr[P\wedge \gamma Q].
\end{align}

In general, $\Tr[\rho \wedge \gamma \sigma] \not\leq \Tr[P \wedge \gamma Q]$.
For example, consider $\rho = |0\rangle \langle 0|$ and $\sigma = (1-\epsilon) |+\rangle\langle +| + \epsilon |-\rangle \langle -|$ with $\epsilon = \frac{1}{2\gamma}$.
Then, for $\gamma \geq \nicefrac23$,
\begin{align}
    \frac{ \Tr[\rho \wedge \gamma \sigma] }{ \Tr[P \wedge \gamma Q] } 
    = \frac{2}{3} \left( 1 + \gamma - \sqrt{(\gamma-1)^2+1}\right)
    \overset{\gamma \nearrow +\infty}{\longrightarrow} \frac{4}{3}.
\end{align}
It is, however, not clear if the factor $\frac{4}{3}$ is universal.
\end{remark}


\begin{remark}[Other harmonic means]
We discuss four variants of harmonic means subsequently.
\begin{enumerate}[1.]
    \item
    (Kubo--Ando harmonic operator mean).
    It is well-known that the tracial Kubo--Ando harmonic mean admits the following Belavkin--Staszewski expression \cite{BS82}:
    \begin{align}
    \Tr[A\mathbin{!}_s B] = \Tr\left[ B^{\nicefrac12} f_{s}\left( B^{-\nicefrac12} A B^{-\nicefrac12} \right) B^{\nicefrac12} \right] + f_s'(\infty) \Tr\left[ A ( \I - B^0) \right]
    \end{align}
    with the operator monotone function $f_{s}$ given in \eqref{eq:generating_function}.
    Moreover, $\Tr[A\mathbin{!}_s B]$ is the \emph{minimal} harmonic mean among all quantum harmonic means obeying monotonicity \cite{HM17, Matsumoto, Hia19}.
    However, it could be too small to upper bound $\Tr[A\wedge B]$.
    Indeed, for non-orthogonal pure states $|\phi\rangle$ and $|\psi\rangle$ with no common support, we have
    \begin{align}
        0 < \Tr\left[|\phi\rangle \langle \phi| \wedge |\psi\rangle \langle \psi | \right] 
        \not\leq 
        \Tr\left[|\phi\rangle \langle \phi| \mathbin{!}_s |\psi\rangle \langle \psi | \right] = \Tr[0] = 0,
        \quad \forall\, s \in (0,1).
    \end{align} 

    \item
    (Measured harmonic mean).
    One can consider the minimal harmonic mean of measurement outcomes:
    \begin{align}
        \inf_{ \text{POVM } \{M_z\}_z } \sum_z \Tr[M_z A] \mathbin{!}_s \Tr[M_z B].
    \end{align}
    By the scalar inequality $x\wedge y \leq x \mathbin{!}_s y$, the measured harmonic mean immediately upper bounds $ \Tr[A\wedge B] = \inf_{ \text{POVM } \{M_z\}_z } \sum_z \Tr[M_z A] \wedge \Tr[M_z B] $.
    Moreover, it is \emph{maximal} among all quantum harmonic means obeying monotonicity \cite{HM17}.

    \item 
    (Standard pretty-good measurement).
    Using the standard two-outcome pretty-good measurement \cite{Bel75, HW94},
    one can bound
    \begin{align}
    \Tr[A\wedge B] 
    \leq \Tr[A \left( s A + (1-s) B \right)^{-\nicefrac12} B \left( s A + (1-s) B \right)^{-\nicefrac12}].
    \end{align}
    The quantity on the right-hand side is also monotonically non-decreasing under any CPTP map (see e.g.~\cite[Appendix]{Cheng_simple}); however, it is numerically incomparable with $\HM_s (A,B) = \Tr[P\mathbin{!}_s Q]$.

    \item 
    (Integral pretty-good measurement).
    Using the integral two-outcome pretty-good measurement \cite{BT24, preparation},
    one can bound
    \begin{align}
    \Tr[A\wedge B] 
    \leq
    \int_0^\infty \Tr\left[ A \left( s A + (1-s) B + t \I \right)^{-1} B \left( s A + (1-s) B + t \I \right)^{-1} \right] \d t.
\end{align}
    The quantity on the right-hand side is also monotonically non-decreasing under any CPTP map as it is closely related to the quantum Le Cam divergence \cite{hirche2023quantum, beigi2025some, LHC25_layer_cake}; however, it is again numerically incomparable with $\HM_s (A,B) = \Tr[P\mathbin{!}_s Q]$.
    
\end{enumerate}
\end{remark}

\begin{remark}[Relation to Chernoff]
The quantum Chernoff bound $\Tr[A^{1-s}B^s]$ is an immediate consequence of the harmonic-mean bound and the scalar harmonic--geometric mean inequality.
It is, however, nontrivial to conversely control the Petz--\NS harmonic mean via $\Tr[A^{1-s}B^s]$.
Indeed, using the integral representation, $x^{1-s} y^s = \frac{\sin(s\pi)}{\pi} \int_0^1 t^{s-1} (1-t)^{-s} x \mathbin{!}_t y \, \d t$,
\begin{align}
    \Tr[A^{1-s} B^s]
    = \mathds{E}_{t \sim \mu_s} \HM_t(A,B)
    \geq \HM_{\mathds{E}_{t\sim \mu_s}[t]}(A,B)
    = \HM_s (A,B), \quad \forall\, s\in(0,1),
\end{align}
where the probability measure is $\mu_s(t) = \frac{\sin(s\pi)}{\pi} t^{s-1} (1-t)^{-s}$ with the expectation $\mathds{E}_{t\sim \mu_s}[t] = s$, and the map $t\mapsto \HM_t(A,B)$ is convex. 
This means that $\Tr[A^{1-s}B^s] \leq \sup_{t\in (0,1)} \HM_t(A,B) = \Tr[AB^0] \vee \Tr[B A^0]$, but there is no universal constant $c_{s,t}$ such that $\Tr[A^{1-s}B^s] \leq c_{s,t} \HM_t(A,B)$.
\end{remark}

\begin{proof}[Proof of Theorem~\ref{theorem:harmonic-mean_bound}]
We start with the supremum representation \eqref{eq:supremum_representation} given in Proposition~\ref{prop:representations}. 
Choosing a feasible operator $Y_0 = A (A\vee B)^{-1} B$, we have
\begin{align} \label{eq:proof_harmonic-mean_1}
H_{s}(A,B) = \Tr[P\mathbin{!}_s Q]
    \geq \Tr[Y_0+Y_0^\dagger] - (1-s) \Tr[Y_0^\dagger A^{-1} Y_0] - s\Tr[ Y_0 B^{-1} Y_0^{\dagger} ] . 
\end{align}
We will show that the right-hand side of \eqref{eq:proof_harmonic-mean_1} upper bounds $\Tr[A\wedge B]$.
To that end, we invoke Lemma~\ref{lemma:min_identity} below to obtain $\textsc{Min} \coloneqq \Tr[A\wedge B] = \Tr[Y_0]$. 
Also, $\Tr[Y_0^\dagger] = \Tr[ B (A\vee B)^{-1} A ] = \Tr[Y_0] = \textsc{Min}$.
Defining the quadratic terms 
$q_A = \Tr[Y_0^\dagger A^{-1} Y_0]$
and $q_B = \Tr[Y_0 B^{-1} Y_0^\dagger]$, the right-hand side of \eqref{eq:proof_harmonic-mean_1} is now 
\begin{align} \label{eq:proof_harmonic-mean_2}
    2 \textsc{Min} - (1-s) q_A - s q_B.
\end{align}

Using the orthogonal components $H_+ = (A-B)_+$ and $H_- = (B-A)_+$ and the identities
\begin{align}
A = A\vee B - H_-, \;
B = A\vee B - H_+, \; \text{and }
A\wedge B = A\vee B - H_+ - H_-,
\end{align}
a direct calculation shows that
\begin{align}
    q_A &= \Tr\left[B(A\vee B)^{-1}B\right] - \Tr\left[B(A\vee B)^{-1}H_-(A\vee B)^{-1}B\right]
    \\
    &= \textsc{Min} - \Tr[H_+ - H_+ (A \vee B)^{-1} H_+] - \Tr[H_+ (A \vee B)^{-1} H_- (A \vee B)^{-1} H_+], 
    \label{eq:q_A}
    \\
    q_B &= \Tr\left[A(A\vee B)^{-1}A\right] - \Tr\left[A(A\vee B)^{-1}H_+(A\vee B)^{-1}A\right]
    \\
    &=\textsc{Min} - \Tr[H_- - H_- (A \vee B)^{-1} H_-] - \Tr[H_- (A \vee B)^{-1} H_+ (A \vee B)^{-1} H_-].
\end{align}
The operator monotonicity of the inverse function (see e.g.~\cite[\S 4]{HP14}) implies that, for any $\epsilon>0$,
\begin{align}
    \left( A\vee B + \epsilon \I \right)^{-1}
    &= \left( H_+ + B + \epsilon \I \right)^{-1}
    \leq \left( H_+ + \epsilon \I \right)^{-1}.
\end{align}
Multiplying both sides by $H_+^{\nicefrac12}$ and taking the limit $\epsilon \searrow 0$, we obtain
\begin{align} \label{eq:H_+_monotone}
    H_+^{\nicefrac12} \left( A\vee B \right)^{-1} H_+^{\nicefrac12}
    \leq H_+^0.
\end{align}
This implies that
\begin{align}
    \Tr[H_+ - H_+ (A \vee B)^{-1} H_+]
    = \Tr\left[ H_+^{\nicefrac12} ( H_+^0 - H_+^{\nicefrac12} \left( A\vee B \right)^{-1} H_+^{\nicefrac12} ) H_+^{\nicefrac12} \right] \geq 0.
\end{align}
Hence, from \eqref{eq:q_A}, we see $q_A \leq \textsc{Min}$, and likewise $q_B \leq \textsc{Min}$.
Combining with \eqref{eq:proof_harmonic-mean_2}, we prove
\begin{align}
    \HM_s(A,B) \geq 
    2\Min - (1-s) q_A - s q_B
    \geq     2\Min - (1-s)\Min - s\Min = \textsc{Min}.
\end{align}

The other upper bounds directly follow from the scalar inequalities $x \mathbin{!}_s y \leq \frac{1}{s \wedge (1-s)} x \wedge y$ and $x \mathbin{!}_s y \leq x^{1-s} y^{s}$.
The lower bounds are proved in Proposition~\ref{prop:NS_lower}.
\end{proof}

\begin{lemma}[Pseudo-Hermitian identity] \label{lemma:min_identity}
Let $A,B\geq 0$ be positive semidefinite trace-class operators.
Then,
\begin{align}
    \Tr\left[A\wedge B\right]
    = \Tr\left[ A (A\vee B)^{-1} B \right].
\end{align}
The right-hand side of the identity is understood as $\lim_{\epsilon \searrow 0} \Tr\left[ A (A\vee B + \epsilon \I)^{-1} B \right]$.
\end{lemma}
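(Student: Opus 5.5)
The plan is to reduce the identity to an algebraic expansion using the Jordan decomposition of $A-B$, and to handle the pseudo-inverse via the $\epsilon$-regularization stated in the lemma. Set $H_+ \coloneqq (A-B)_+$ and $H_- \coloneqq (B-A)_+$. These are positive trace-class operators with $H_+H_- = H_-H_+ = 0$ and $|A-B| = H_+ + H_-$. From the closed forms of $A\wedge B$ and $A\vee B$ in Section~\ref{sec:notation} we get, with $V \coloneqq A\vee B$,
\begin{align}
V = B + H_+ = A + H_-, \qquad A\wedge B = A - H_+ .
\end{align}
Hence $\Tr[A\wedge B] = \Tr[A] - \Tr[H_+]$, and it suffices to show that $\Tr[A V_\epsilon^{-1} B]$ tends to this value, where $V_\epsilon \coloneqq V + \epsilon\I$.

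\textbf{Expansion.} Expanding only one factor at a time keeps every term trace class; I avoid expanding $V_\epsilon$ itself, which is not trace class. First write $B = V_\epsilon - \epsilon\I - H_+$, so that
\begin{align}
A V_\epsilon^{-1} B = A - \epsilon A V_\epsilon^{-1} - A V_\epsilon^{-1} H_+ .
\end{align}
Next substitute $A = V_\epsilon - \epsilon\I - H_-$ in the last term:
\begin{align}
A V_\epsilon^{-1} H_+ = H_+ - \epsilon V_\epsilon^{-1} H_+ - H_- V_\epsilon^{-1} H_+ .
\end{align}
Taking traces gives
\begin{align}
\Tr[A V_\epsilon^{-1} B] = \Tr[A] - \Tr[H_+] - \epsilon\Tr[A V_\epsilon^{-1}] + \epsilon \Tr[V_\epsilon^{-1} H_+] + \Tr[H_- V_\epsilon^{-1} H_+].
\end{align}

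\textbf{The cross term.} This is the key point. Since $V_\epsilon^{-1}$ is bounded and $H_\pm$ are trace class, cyclicity of the trace applies:
\begin{align}
\Tr[H_- V_\epsilon^{-1} H_+] = \Tr[V_\epsilon^{-1} H_+ H_-] = 0,
\end{align}
by orthogonality of $H_+$ and $H_-$. Note that $V$ need not commute with $H_\pm$; only cyclicity is used.

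\textbf{Remainder terms.} I expect this to be the only delicate step, namely showing $\epsilon\Tr[AV_\epsilon^{-1}] \to 0$ and $\epsilon\Tr[V_\epsilon^{-1}H_+]\to 0$. The operators $\epsilon V_\epsilon^{-1}$ satisfy $0 \le \epsilon V_\epsilon^{-1} \le \I$ and converge strongly to the projection $\I - V^0$ onto $\ker V$. Since $0 \le A \le V$ and $0 \le H_+ \le V$, both $A$ and $H_+$ are supported in $\supp(V)$. Writing $A = \sum_k a_k |w_k\rangle\langle w_k|$, we have
\begin{align}
\epsilon\Tr[A V_\epsilon^{-1}] = \sum_k a_k \langle w_k|\epsilon V_\epsilon^{-1}|w_k\rangle .
\end{align}
Each summand is bounded by $a_k$ and tends to $a_k\langle w_k|(\I-V^0)|w_k\rangle = 0$. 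Dominated convergence then gives the limit $0$, and the same argument applies to $H_+$. Letting $\epsilon\searrow0$ yields $\lim_{\epsilon\searrow0}\Tr[AV_\epsilon^{-1}B] = \Tr[A]-\Tr[H_+] = \Tr[A\wedge B]$, which proves the lemma.
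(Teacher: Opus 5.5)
Your proof is correct and follows essentially the same route as the paper: the decomposition $H_\pm=(A-B)_\pm$, the substitutions $A=A\vee B-H_-$ and $B=A\vee B-H_+$, the vanishing of the cross term by cyclicity and $H_+H_-=0$, and the support inclusions $A,H_\pm\le A\vee B$ to pass to the limit. The only difference is bookkeeping. You substitute $V_\epsilon$ directly, so the limit reduces to the $\epsilon$-remainders, which you handle via $\epsilon V_\epsilon^{-1}\to \I-V^0$ and dominated convergence. The paper instead expands with $V$ itself and uses $V(V+\epsilon\I)^{-1}\to V^0$ strongly together with trace-class/bounded duality.
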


\begin{proof}
    We first consider finite-dimensional $A,B>0$ for simplicity; then, $A\vee B > 0$.
    Let $H_+ = (A-B)_+$ and $H_- = (B-A)_+$, such that $|A-B| = H_+ + H_-$ and $H_+ H_- = 0$.
    Moreover, 
    \begin{align}
        A = A\vee B - H_-
        \quad \text{and} \quad
        B = A\vee B - H_+.
    \end{align}
    Substituting these into the target trace, we have
    \begin{align}
        \Tr\left[ A (A\vee B)^{-1} B \right]
        &= \Tr\left[ ( A\vee B - H_- ) (A\vee B)^{-1} ( A\vee B - H_+) \right]
        \\
        &= \Tr\left[ A\vee B - H_+ - H_- + H_- (A\vee B)^{-1} H_+ \right].
    \end{align}
    By the cyclic property of the trace, 
    $\Tr\left[ H_- (A\vee B)^{-1} H_+\right] = \Tr\left[ H_+ H_- (A\vee B)^{-1}\right] = 0$.
    Hence, 
    \begin{align}
        \Tr\left[A (A\vee B)^{-1} B\right] = \Tr\left[A\vee B - (H_+ + H_-)\right] = \Tr\left[ A\vee B - |A-B|\right] = \Tr\left[A\wedge B\right].
    \end{align}

    For general trace-class $A,B\geq 0$, we replace the inverse $(A\vee B)^{-1}$ by the bounded resolvent $(A \vee B + \epsilon \mathbf{1})^{-1}$ for $\epsilon > 0$. 
    Expanding $\Tr\left[ A (A\vee B + \epsilon\mathbf{1})^{-1} B \right]$ analogously, the cross-term $\Tr\left[ H_- (A\vee B + \epsilon\mathbf{1})^{-1} H_+ \right]$ vanishes for each $\epsilon > 0$ via the cyclic property because $H_+ H_- = 0$.
    To pass the limit $\epsilon \searrow 0$ inside the trace, observe that $A \vee B$ is trace-class, and $H_\pm = (A-B)_\pm$ are trace-class because $A-B$ is.
    Furthermore, $(A \vee B)(A \vee B + \epsilon\mathbf{1})^{-1} \to (A \vee B)^0$ is uniformly bounded and SOT-convergent. Hence, the trace-class/bounded duality guarantees the continuity of the trace operation here, allowing us to evaluate:
    \begin{align}
        \lim_{\epsilon \searrow 0} \Tr\left[ A (A\vee B + \epsilon\mathbf{1})^{-1} B \right] 
        &= \Tr\left[ (A\vee B)(A\vee B)^0 - H_+ (A\vee B)^0 - H_- (A\vee B)^0 \right] \nonumber \\
        &\overset{(\star)}{=} \Tr\left[ A\vee B - H_+ - H_- \right] \nonumber \\
        &= \Tr\left[A\wedge B\right],
    \end{align}
    where $(\star)$ follows because $H_\pm \leq A \vee B$, meaning the support projection $(A \vee B)^0$ acts as the identity on their respective supports. This validates the extension to arbitrary trace-class operators $A,B$ and completes the proof.
\end{proof}

\section{Almost-Exact Asymptotics} \label{sec:asymptotics}

A crucial advantage of the harmonic-mean bound in Theorem~\ref{theorem:harmonic-mean_bound} is to compare the minimum error $\Tr[A\wedge B]$ for binary quantum hypothesis testing with 
that of the associated \NS distributions with only a factor of $2$ loss (i.e.~choosing $s = \nicefrac12$).
Hence, such a one-shot bound is tight enough to provide an almost-exact i.i.d.~asymptotics of quantum hypothesis testing via the classical strong large deviation theory.

We assume the standard Cram\'er's conditions: the non-orthogonal states $\rho$ and $\sigma$ are not identical on the common support, strict interiority condition, and the associated \NS distributions $(P,Q)$ are non-lattice.

\begin{fact}[Cram\'er's condition, symmetric setting {\cite{BR60}, \cite{Petrov1965}, \cite[Chapters VII, VIII]{Petrov1975}  \cite[\S 2.2]{DZ98}}] \label{fact:condition_symmetric}
Suppose the pair of density operators $(\rho,\sigma)$ and the corresponding \NS distributions $(P,Q)$ satisfy
\begin{subequations} \label{eq:assumption_symmetric}
\begin{align}
    &\text{(Non-orthogonality)} &&P \not\perp Q,
    \label{eq:assumption_symmetric1}
    \\
    &\text{(Non-degeneracy)}  &&\text{$P \not\propto Q$ on their common support,} 
    \label{eq:assumption_symmetric2}
    \\
    &\text{(Non-lattice)} &&\text{$(P,Q)$ are non-lattice.}
    \label{eq:assumption_symmetric3}
    \\
    &\text{(Interiority)} &&
    \left.\frac{\d}{\d\alpha}\log \Tr\left[\rho^{\alpha}\sigma^{1-\alpha}\right]\right|_{\alpha=0^+}<0
    \quad\text{and}\quad
    \left.\frac{\d}{\d\alpha}\log \Tr\left[\rho^{\alpha}\sigma^{1-\alpha}\right]\right|_{\alpha=1^-}>0.
\end{align}
\end{subequations}
Then, there exists a unique $\alpha^{\star} \in (0,1)$ satisfying
\begin{align}
&C(\rho, \sigma) = - \log \Tr\left[\rho^{\alpha^{\star}} \sigma^{1-\alpha^{\star}}\right]
\\
&V_{\alpha^\star}(\rho\Vert\sigma)
\coloneqq \textnormal{Var}_{P_{\alpha^\star}}\left[ \log \frac{P}{Q} \right] \in (0,+\infty),
\label{eq:V}
\end{align}
where the tilted distribution for $\alpha \in [0,1]$ is defined as
\begin{align}
    P_{\alpha} \coloneqq \frac{P^{\alpha} Q^{1-\alpha}}{\Tr\left[P^{\alpha} Q^{1-\alpha}\right]}.
\end{align}
\end{fact}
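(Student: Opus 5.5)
We establish the conclusions of the Fact (existence and uniqueness of $\alpha^\star\in(0,1)$ attaining $C(\rho,\sigma)$, and $V_{\alpha^\star}(\rho\Vert\sigma)\in(0,\infty)$) by transferring everything to the classical pair $(P,Q)$ via Fact~\ref{fact:NS}(v). Define the log-moment generating function
\[
\phi(\alpha) \coloneqq \log \Tr\left[\rho^{\alpha}\sigma^{1-\alpha}\right] = \log \Tr\left[P^{\alpha}Q^{1-\alpha}\right] = \log \sum_{(i,j)\in S} P(i,j)^{\alpha} Q(i,j)^{1-\alpha},
\]
where $S$ is the common support of $P$ and $Q$. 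This set is nonempty by non-orthogonality \eqref{eq:assumption_symmetric1}, so $\phi$ is finite on $(0,1)$. Then $C(\rho,\sigma)=\sup_{\alpha\in(0,1)}-\phi(\alpha)$, and the task reduces to the convex analysis of $\phi$.

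First, we check that $\phi$ is smooth on $(0,1)$ with the following derivatives:
\[
\phi'(\alpha)=\mathds{E}_{P_\alpha}\!\left[\log\tfrac{P}{Q}\right],\qquad \phi''(\alpha)=\Var_{P_\alpha}\!\left[\log\tfrac{P}{Q}\right].
\]
Here $P_\alpha$ is the tilted distribution, supported on $S$. In infinite dimensions this needs justification: on any compact subinterval $[a,b]\subset(0,1)$, we dominate $|\log(P/Q)|^k P^\alpha Q^{1-\alpha}$ by $C_k(P^{a'}Q^{1-a'}+P^{b'}Q^{1-b'})$ for slightly wider exponents $a'<a\le b<b'$. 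This bound follows from $|x|^k e^{\alpha x}\le C_k(e^{a' x}+e^{b' x})$ with $x=\log(P/Q)$, and the dominating terms are summable since $\Tr[P^{\beta}Q^{1-\beta}]\le \Tr[P]^{\beta}\Tr[Q]^{1-\beta}$ by Hölder. Differentiation under the sum is then legitimate, and this gives finiteness of $V_\alpha$ for every interior $\alpha$, in particular at $\alpha^\star$.

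Next comes strict convexity. By non-degeneracy \eqref{eq:assumption_symmetric2}, $\log(P/Q)$ is not constant on $S$. Every $P_\alpha$ has full support $S$, so $\phi''(\alpha)>0$ for all $\alpha\in(0,1)$; hence $\phi$ is strictly convex and $\phi'$ is strictly increasing. The interiority condition gives $\phi'(0^+)<0<\phi'(1^-)$, so by continuity and strict monotonicity of $\phi'$ there is a unique zero $\alpha^\star\in(0,1)$. This $\alpha^\star$ is the unique minimizer of $\phi$ on $(0,1)$, i.e.\ the unique maximizer of $-\phi$, so $C(\rho,\sigma)=-\log\Tr[\rho^{\alpha^\star}\sigma^{1-\alpha^\star}]$. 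Moreover, $V_{\alpha^\star}(\rho\Vert\sigma)=\phi''(\alpha^\star)\in(0,\infty)$.

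The non-lattice assumption \eqref{eq:assumption_symmetric3} is not needed for these conclusions; it is recorded only for the subsequent Bahadur--Rao local limit step. The main obstacle is the justification of differentiation under the (possibly infinite) sum and the meaning of the one-sided derivatives at the endpoints. If needed, we interpret $\phi'(0^+)$ and $\phi'(1^-)$ as monotone limits, which exist in $[-\infty,\infty]$ by convexity. The sign conditions then still force a sign change of $\phi'$ strictly inside $(0,1)$.
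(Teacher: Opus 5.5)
Your proposal is correct. The paper gives no proof of this Fact and defers to the cited large-deviation references. Your argument is the standard one behind those references: strict convexity of $\alpha\mapsto\log\Tr\left[P^{\alpha}Q^{1-\alpha}\right]$ (coming from non-degeneracy), combined with the sign change of its derivative enforced by interiority. Your domination argument for differentiating under the sum in infinite dimensions is sound, and you are right that the non-lattice condition plays no role in these particular conclusions.
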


\begin{theorem}[Almost-exact asymptotics for symmetric hypothesis testing]\label{theorem:asymptotics_symmetric}
    For all density operators $\rho$ and $\sigma$ and the corresponding \NS distributions $(P,Q)$
    satisfying ~\eqref{eq:assumption_symmetric}, and for all prior probability $p\in (0,1)$ and $\bar{p} \coloneqq 1-p$, we have
    \begin{align}
        \Err^\star\left(p \rho^{\otimes n}, \bar{p} \sigma^{\otimes n} \right)
        \asymp \frac{p^{\alpha^{\star}}\bar{p}^{1-\alpha^\star}}{\alpha^{\star}(1-\alpha^{\star})\sqrt{2\pi n V_{\alpha^\star}(\rho\Vert\sigma)}} \e^{ -n C(\rho, \sigma) },
    \end{align}
    where the Chernoff distance is defined as $C(\rho, \sigma) \coloneqq \sup_{\alpha \in (0,1) } - \log \Tr\left[\rho^{\alpha} \sigma^{1-\alpha}\right]$, and $V_{\alpha^\star}(\rho\Vert\sigma)$ is defined in \eqref{eq:V}.
    Here, the asymptotic equivalence $\asymp$ is up to the factor $4$.
\end{theorem}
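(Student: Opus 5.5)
We reduce the quantum problem to a classical one and then apply the classical strong large deviation theory. Set $A_n = p\rho^{\otimes n}$ and $B_n = \bar p\sigma^{\otimes n}$. By Fact~\ref{fact:NS}(iv) and homogeneity, the \NS distributions of $(A_n,B_n)$ are $(pP^{\otimes n},\bar pQ^{\otimes n})$. Theorem~\ref{theorem:harmonic-mean_bound} with $s=\nicefrac12$ then gives
\begin{align}
\tfrac12 \Tr\big[pP^{\otimes n}\wedge \bar pQ^{\otimes n}\big]
\leq \Err^\star\left(p\rho^{\otimes n},\bar p\sigma^{\otimes n}\right)
\leq 2\Tr\big[pP^{\otimes n}\wedge \bar pQ^{\otimes n}\big].
\end{align}
The two sides differ by a factor of $4$, which is the constant claimed in the statement. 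It therefore suffices to prove the \emph{exact} asymptotic equivalence (ratio tending to $1$) for the classical quantity $\Tr[pP^{\otimes n}\wedge \bar pQ^{\otimes n}]$. Once that holds, for large $n$ the quantum error sits between $\tfrac12(1-o(1))$ and $2(1+o(1))$ times the target, which gives $\asymp$ with a constant arbitrarily close to $4$.

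For the classical step we change measure to the tilted distribution $P_{\alpha^\star}$. Write $X=\log\frac{P}{Q}$ and let $S_n=\sum_{k=1}^n X_k$ with $X_k$ i.i.d. Then
\begin{align}
\Tr\big[pP^{\otimes n}\wedge \bar pQ^{\otimes n}\big] = \sum_x P^{\otimes n}(x)^{\alpha^\star}Q^{\otimes n}(x)^{1-\alpha^\star}\min\big(p\,\e^{(1-\alpha^\star)S_n},\,\bar p\,\e^{-\alpha^\star S_n}\big).
\end{align}
With $\tau=\log(\bar p/p)$, this equals
\begin{align}
\e^{-nC(\rho,\sigma)}\,\mathds E_{P_{\alpha^\star}^{\otimes n}}\Big[p\,\e^{(1-\alpha^\star)S_n}\mathbf 1\{S_n\le\tau\}+\bar p\,\e^{-\alpha^\star S_n}\mathbf 1\{S_n>\tau\}\Big].
\end{align}
Here we used $\Tr[P^{\alpha^\star}Q^{1-\alpha^\star}]=\Tr[\rho^{\alpha^\star}\sigma^{1-\alpha^\star}]=\e^{-nC}$ per copy, by Fact~\ref{fact:NS}(v). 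The points of zero $P$ or $Q$ mass must be handled. On the common support everything is fine. Where exactly one of $P,Q$ vanishes, the minimum is zero, so those points contribute nothing.

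Optimality of $\alpha^\star$ gives $\mathds E_{P_{\alpha^\star}}[X]=0$, and Fact~\ref{fact:condition_symmetric} gives variance $V=V_{\alpha^\star}\in(0,\infty)$. Since the law is non-lattice, Stone's local limit theorem (equivalently, the Bahadur--Rao/Petrov refinement) applies. It says $S_n$ has local density $\approx (2\pi nV)^{-1/2}$ near the fixed point $\tau$, uniformly on windows of size $O(1)$. Integrating the exponentially decaying weights against this density gives the limit
\begin{align}
\frac{1}{\sqrt{2\pi nV}}\Big(p\int_{-\infty}^{\tau}\e^{(1-\alpha^\star)y}\d y+\bar p\int_\tau^\infty \e^{-\alpha^\star y}\d y\Big)=\frac{1}{\sqrt{2\pi nV}}\Big(\frac{p\,\e^{(1-\alpha^\star)\tau}}{1-\alpha^\star}+\frac{\bar p\,\e^{-\alpha^\star\tau}}{\alpha^\star}\Big).
\end{align}
Substituting $\e^\tau=\bar p/p$, both terms are multiples of $p^{\alpha^\star}\bar p^{1-\alpha^\star}$. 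Their sum is $p^{\alpha^\star}\bar p^{1-\alpha^\star}\big(\tfrac1{1-\alpha^\star}+\tfrac1{\alpha^\star}\big)=\frac{p^{\alpha^\star}\bar p^{1-\alpha^\star}}{\alpha^\star(1-\alpha^\star)}$, which is the claimed constant.

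The main obstacle is making the local-limit step rigorous for an infinite (countable) alphabet with only a non-lattice assumption. We would cite the classical strong large deviation theorem in the form of Bahadur--Rao/Petrov (\cite{BR60}, \cite[Ch.~VII]{Petrov1975}), which is stated exactly under Cram\'er-type conditions.
\begin{itemize}
\item Finite exponential moments of $X$ under $P_{\alpha^\star}$ in a neighborhood of zero come from the interiority condition.
\item The sum splits into a bounded window, handled by Stone's theorem, and two tails.
\item The tails are handled by the factors $\e^{(1-\alpha^\star)y}$ and $\e^{-\alpha^\star y}$ together with a uniform bound $\Pr[S_n\in[y,y+1)]\le c/\sqrt n$, valid for non-lattice laws with finite variance, so dominated convergence applies.
\end{itemize}
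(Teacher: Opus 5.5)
Your proposal is correct and follows the paper's proof. The $s=\nicefrac12$ case of Theorem~\ref{theorem:harmonic-mean_bound}, together with the tensor factorization of the \NS distributions, sandwiches $\Err^\star(p\rho^{\otimes n},\bar p\sigma^{\otimes n})$ between $\frac12$ and $2$ times $\Tr[pP^{\otimes n}\wedge \bar pQ^{\otimes n}]$, and classical strong large deviations then finish the argument. The one difference is that you carry out the tilting to $P_{\alpha^\star}$ and the local-limit computation of the constant $\frac{p^{\alpha^\star}\bar p^{1-\alpha^\star}}{\alpha^\star(1-\alpha^\star)}$ explicitly, whereas the paper only cites Bahadur--Rao/Petrov for that step. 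A minor typo: the per-copy factor is $\e^{-C(\rho,\sigma)}$, not $\e^{-nC}$.
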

\begin{remark}
We call Theorem~\ref{theorem:asymptotics_symmetric} an almost-exact asymptotic result because the ratio between the upper and lower bounds is only $4 + o(1)$.
\end{remark}

\begin{proof}
    The one-shot bound \eqref{eq:same_order} (resulting from Theorem~\ref{theorem:harmonic-mean_bound} applied to the i.i.d.~setting and the trace-preserving and tensor factorization properties of the \NS distributions (Fact~\ref{fact:NS}) directly show that
    \begin{align}
        \frac12 \cdot \Err^\star\left(p P^{\otimes n},\bar{p} Q^{\otimes n}\right)
    \leq
    \Err^\star\left(p \rho^{\otimes n}, \bar{p} \sigma^{\otimes n}\right)
    \leq 2 \cdot \Err^\star\left(p P^{\otimes n},\bar{p} Q^{\otimes n}\right).
    \end{align}
    We then apply the strong large-deviation theory (see e.g.~\cite{BR60}, \cite{Petrov1965}, \cite[Chapters VII, VIII]{Petrov1975},  \cite[\S 2.2]{DZ98}) on $\Err^\star\left(p P^{\otimes n},\bar{p} Q^{\otimes n}\right)$ and noting that $C(P,Q) = C(\rho,\sigma)$ (Fact~\ref{fact:NS}) completes the proof.
\end{proof}

Applying a similar idea to multiple hypothesis testing with the pairwise bound established in Theorem~\ref{theorem:pairwise}, we obtain the almost-exact asymptotics, wherein the worst pairwise Chernoff distance will dominate the overall performance.

\begin{theorem}[Almost-exact asymptotics for multiple hypothesis testing] \label{theorem:asymptotics_multiple}
Consider a quantum ensemble $(p_i, \rho_i)_{i=1}^M$ and define the dominant index set
\begin{align}
    \mathcal{I}^{\star}
    \coloneqq \left\{ (i,j) : C(\rho_i,\rho_j) = C(\rho_1, \ldots, \rho_M) \right\}.
\end{align}
Provided that the \NS distributions of each pair $(\rho_i, \rho_j)$, $(i,j)\in\mathcal{I}^{\star}$, satisfy~\eqref{eq:assumption_symmetric}, we have
\begin{align}
        \Err^\star\left(p_1 \rho_1^{\otimes n}, \cdots, p_M \rho_M^{\otimes n} \right)
        \asymp \sum_{(i,j)\in\mathcal{I}^{\star}}\frac{p_i^{\alpha_{ij}^\star}p_j^{1-\alpha_{ij}^\star}}{\alpha_{ij}^{\star}(1-\alpha_{ij}^{\star})\sqrt{2\pi n V_{\alpha_{ij}^\star}(\rho_i\Vert\rho_j)}} \e^{ -n C(\rho_1, \ldots, \rho_M) },
    \end{align}
    where each $\alpha_{ij}^\star \in (0,1)$ satisfies
    \begin{align}
        C(\rho_i,\rho_j) = C(\rho_1, \ldots, \rho_M) 
         = (1-\alpha_{ij}^{\star}) D_{\alpha_{ij}^{\star}}(\rho_i \Vert \rho_j).
    \end{align}
    Here, the asymptotic equivalence $\asymp$ is up to the factor $8\cdot|\mathcal{I}^{\star}|$.
\end{theorem}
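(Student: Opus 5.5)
The plan is to sandwich $\Err^\star(p_1\rho_1^{\otimes n},\dots,p_M\rho_M^{\otimes n})$ between a lower bound that uses only the dominant pairs and an upper bound given by the pairwise bound of Theorem~\ref{theorem:pairwise}. Both sides will be evaluated with the binary strong large-deviation asymptotics already used in Theorem~\ref{theorem:asymptotics_symmetric}. Write $C^\star\coloneqq C(\rho_1,\ldots,\rho_M)$ and let $g_{ij}(n)$ denote the $(i,j)$ summand on the right-hand side, so the target is $\sum_{(i,j)\in\mathcal{I}^\star} g_{ij}(n)$.

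\textbf{Upper bound.} I would apply the first branch of \eqref{eq:pairwise00} with $A_i=p_i\rho_i^{\otimes n}$. By Fact~\ref{fact:NS}(iv), the \NS pair of $(p_i\rho_i^{\otimes n},p_j\rho_j^{\otimes n})$ is $(p_i(P^{ij})^{\otimes n},p_j(Q^{ij})^{\otimes n})$. This gives
\[
\Err^\star \le 4\sum_{i<j}\Err^\star\big(p_i(P^{ij})^{\otimes n},p_j(Q^{ij})^{\otimes n}\big).
\]
The pairs are then split into two groups.
\begin{itemize}
\item For $(i,j)\in\mathcal{I}^\star$, the classical strong large-deviation theorem (Bahadur--Rao type) applies under \eqref{eq:assumption_symmetric}. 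It gives $\Err^\star(\ldots)=(1+o(1))\,g_{ij}(n)$, because $C(P^{ij},Q^{ij})=C(\rho_i,\rho_j)=C^\star$ and $V$ and $\alpha^\star$ coincide with their quantum counterparts by Fact~\ref{fact:NS}(v).
\item For $(i,j)\notin\mathcal{I}^\star$, the Chernoff branch gives the bound $(p_i+p_j)\e^{-nC(\rho_i,\rho_j)}$ with $C(\rho_i,\rho_j)>C^\star$. This is $o(n^{-1/2}\e^{-nC^\star})$, hence negligible against the dominant sum. Finitely many pairs are involved, so $\min_{\notin\mathcal{I}^\star}C(\rho_i,\rho_j)-C^\star>0$ gives a uniform gap.
\end{itemize}
The result is $\Err^\star\le (4+o(1))\sum_{\mathcal{I}^\star} g_{ij}(n)$.

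\textbf{Lower bound.} For any fixed pair $(i,j)$, merging outcomes gives $\Err^\star(A_1,\dots,A_M)\ge \Err^\star(A_i,A_j)$. Indeed, take any POVM $\{\Lambda_k\}$ and form the two-outcome POVM $\{\Lambda_i,\I-\Lambda_i\}$; its error on $(A_i,A_j)$ is at most $\Tr[A_i(\I-\Lambda_i)]+\Tr[A_j(\I-\Lambda_j)]$, since $\Lambda_j\le \I-\Lambda_i$. Combining this with the lower half of \eqref{eq:same_order} (the \NS bound, Theorem~\ref{theorem:harmonic-mean_bound} at $s=\tfrac12$) gives
\[
\Err^\star\ge \tfrac12\,\Err^\star\big(p_i(P^{ij})^{\otimes n},p_j(Q^{ij})^{\otimes n}\big)=\tfrac12(1+o(1))\,g_{ij}(n).
\]
To recover the sum I would take the maximum over $(i,j)\in\mathcal{I}^\star$ and use $\max\ge \frac{1}{|\mathcal{I}^\star|}\sum$. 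This yields $\Err^\star\ge \frac{1}{2|\mathcal{I}^\star|}(1+o(1))\sum_{\mathcal{I}^\star}g_{ij}(n)$. The ratio between the two sides is then $8|\mathcal{I}^\star|(1+o(1))$, which is the claimed factor once the unordered-pair conventions are matched. I would treat $\mathcal{I}^\star$ as a set of unordered pairs with $i<j$. I would also check that swapping $i$ and $j$ turns $\alpha^\star$ into $1-\alpha^\star$ and leaves $g_{ij}$ invariant, so the convention is harmless.

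\textbf{Main obstacle.} The main point needing care is the exact classical asymptotics for $\Tr[pP^{\otimes n}\wedge \bar p Q^{\otimes n}]$ with unequal priors. The priors shift the threshold of the log-likelihood sum by $\log(\bar p/p)$, a constant. This is what produces the prefactor $p^{\alpha^\star}\bar p^{1-\alpha^\star}$ and leaves the $1/(\alpha^\star(1-\alpha^\star)\sqrt{2\pi nV})$ form unchanged. I would take this from the same strong large-deviation source cited in Theorem~\ref{theorem:asymptotics_symmetric}, using the non-lattice and interiority conditions. The remaining bookkeeping is to show that the non-dominant pairs' exponential gap beats the polynomial factor $\sqrt n$, which is immediate.
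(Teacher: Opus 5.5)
Your proposal is correct and follows the paper's proof essentially step by step. The lower bound coarse-grains to a dominant pair and then applies the \NS lower bound with classical strong large deviations; the upper bound uses the pairwise bound of Theorem~\ref{theorem:pairwise} with tensor factorization, yielding the factor $4\cdot 2\cdot|\mathcal{I}^\star|$. Your explicit use of the Chernoff branch for the non-dominant pairs in the upper bound, where the Cram\'er conditions need not hold, spells out a step the paper leaves implicit, and it is the right way to handle it.
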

\begin{proof}
The lower bound essentially follows from the coarse-graining reduction in \cite{NussbaumSzkola2011pure, Li16, sample_complexity_25}:
\begin{align}
    \Err^{\star}(p_1 \rho_1, \ldots, p_M \rho_{M})
    &\geq p_i \Tr[\rho_i (\I - \Lambda_i)] + p_j \Tr[\rho_j (\I - \Lambda_j)]
    \\
    &\geq p_i \Tr[\rho_i (\I - \Lambda_i)] + p_j \Tr[\rho_j \Lambda_i],
    &&\because \I - \Lambda_j = \Lambda_i + \sum_{k \neq i,j} \Lambda_k,
\end{align}
for any POVM $\{\Lambda_i\}_{i=1}^M$.
Hence,
\begin{align}
    \Err^{\star}(p_1 \rho_1, \ldots, p_M \rho_{M})
    \geq \max_{(i,j):\, i < j} \Err^\star(p_i \rho_i, p_j \rho_j).
\end{align}
Then, Theorem~\ref{theorem:asymptotics_symmetric} provides the lower bound:
\begin{align}
    \Err^\star\left(p_1 \rho_1^{\otimes n}, \cdots, p_M \rho_M^{\otimes n} \right) \geq\max_{(i,j)\in\mathcal{I}^{\star}}\frac{p_i^{\alpha_{ij}^\star}p_j^{1-\alpha_{ij}^\star}(1-o(1))}{2\alpha_{ij}^{\star}(1-\alpha_{ij}^{\star})\sqrt{2\pi n V_{\alpha_{ij}^\star}(\rho_i\Vert\rho_j)}} \e^{ -n C(\rho_1, \ldots, \rho_M)},
\end{align}
because the pairs with $C(\rho_{\tilde{i}}, \rho_{\tilde{j}}) > C(\rho_1, \ldots, \rho_M)$ decay exponentially faster and vanish relative to the dominant pairs in $\mathcal{I}^{\star}$.

For achievability, the pairwise bound in Theorem~\ref{theorem:pairwise} implies that
\begin{align}
    \Err^{\star}\left(p_1 \rho_1, \ldots, p_M \rho_M\right)
\leq 4 \sum_{(i,j):\,i<j} \Err^\star \!\big(p_i P^{ij}, p_j Q^{ij}\big),
\end{align}
where $(P^{ij}, Q^{ij})$ are the \NS distributions for $(\rho_i,\rho_j)$.
The tensor factorization property (Fact~\ref{fact:NS}) and the i.i.d.~asymptotics of $\big(P^{ij}\big)^{\otimes n}$ and $\big(Q^{ij}\big)^{\otimes n}$ yield
\begin{align}
    \Err^\star\left(p_1 \rho_1^{\otimes n}, \cdots, p_M \rho_M^{\otimes n} \right)
    \leq\sum_{(i,j)\in\mathcal{I}^{\star}}\frac{4\cdot p_i^{\alpha_{ij}^\star}p_j^{1-\alpha_{ij}^\star}(1+o(1))}{\alpha_{ij}^{\star}(1-\alpha_{ij}^{\star})\sqrt{2\pi n V_{\alpha_{ij}^\star}(\rho_i\Vert\rho_j)}} \e^{ -n C(\rho_1, \ldots, \rho_M) },
\end{align}
concluding the proof.
\end{proof}

\section{Asymmetric Quantum Hypothesis Testing} \label{sec:asymmetric}
In this section, we consider asymmetric hypothesis testing, where the focus is the optimal trade-off between the type-I and type-II error.
Define the minimum type-I error for discriminating $\rho$ against $\sigma$ when the type-II error is below some threshold $\mu$:
\begin{align} \label{eq:defn:alpha_hat}
    \Err^{\text{(I)}}_{\mu}(\rho\Vert\sigma)
    \coloneqq
    \inf_{0\leq T \leq \I} \left\{  \Tr[\rho (\I-T) ] : \Tr[\sigma T] \leq \mu \right\}.
\end{align}
Similarly, we define the minimum type-II error:
\begin{align} \label{eq:defn:beta_hat}
    \Err^{\text{(II)}}_{\varepsilon}(\rho\Vert\sigma)
    \coloneqq
    \inf_{0\leq T \leq \I} \left\{ \Tr[\sigma T] : \Tr[\rho (\I-T) ] \leq \varepsilon \right\}.
\end{align}

\begin{proposition}[Variational expression {\cite[Proposition 3.2]{AudenaertMosonyiFrank2012}}] \label{prop:variational_type-I}
For all density operators $\rho$ and $\sigma$ and all parameters $\mu, \varepsilon\geq 0$, we have 
    \begin{align}
        \Err^{\textnormal{(I)}}_{\mu}(\rho\Vert\sigma)
        &= \sup_{\gamma \geq 0} \left\{ \Tr\left[ \rho \wedge \gamma \sigma \right] - \gamma \mu \right\},
        \\
        \Err^{\textnormal{(II)}}_{\varepsilon}(\rho\Vert\sigma)
        &= \sup_{\gamma \geq 0} \left\{ \Tr\left[ \gamma \rho \wedge \sigma \right] - \gamma \varepsilon \right\},
    \end{align}
\end{proposition}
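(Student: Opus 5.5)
We will prove the first identity by Lagrangian duality of the linear program \eqref{eq:defn:alpha_hat}; the second identity follows by exchanging the roles of $\rho$ and $\sigma$ and of $T$ and $\I-T$.

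\emph{Weak duality.} Fix a feasible test $0\le T\le\I$ with $\Tr[\sigma T]\le\mu$ and any $\gamma\ge0$. The Holevo--Helstrom variational form \eqref{eq:HH}, written as $\Tr[\rho\wedge\gamma\sigma]=\inf_{0\le T'\le\I}\{\Tr[\rho(\I-T')]+\gamma\Tr[\sigma T']\}$, gives
\begin{align}
\Tr[\rho\wedge\gamma\sigma]-\gamma\mu\le \Tr[\rho(\I-T)]+\gamma\left(\Tr[\sigma T]-\mu\right)\le \Tr[\rho(\I-T)].
\end{align}
Taking the supremum over $\gamma$ and then the infimum over $T$ yields ``$\ge$''.

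\emph{Strong duality.} Let $f(\mu)\coloneqq\Err^{\text{(I)}}_\mu(\rho\Vert\sigma)$. The feasible set of tests is convex, and the objective and constraint are affine in $T$, so $f$ is convex and non-increasing on $[0,\infty)$. Moreover $0\le f\le 1$, and $f(\mu)=\Tr[\rho(\I-\sigma^0)]$ for $\mu\ge1$.

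The plan is to invoke a minimax theorem, e.g.\ Sion's, for $L(T,\gamma)=\Tr[\rho(\I-T)]+\gamma(\Tr[\sigma T]-\mu)$. This function is affine in each variable. The test set $\{0\le T\le\I\}$ is convex and compact in the weak$^*$ topology of $B(\mathcal H)$ viewed as the dual of the trace class. The map $T\mapsto L(T,\gamma)$ is weak$^*$ continuous because $\rho$ and $\sigma$ are trace-class.

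Sion's theorem then gives
\begin{align}
\inf_T\sup_{\gamma\ge0}L=\sup_{\gamma\ge0}\inf_T L=\sup_{\gamma\ge0}\{\Tr[\rho\wedge\gamma\sigma]-\gamma\mu\}.
\end{align}
It remains to identify the left-hand side with $f(\mu)$. For infeasible $T$ we have $\sup_\gamma L=+\infty$, and for feasible $T$ the supremum equals $\Tr[\rho(\I-T)]$. Since some test is always feasible (e.g.\ $T=0$ for $\mu\ge0$), the left-hand side is exactly $f(\mu)$.

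The main obstacle is justifying compactness and continuity in infinite dimensions, i.e.\ checking the weak$^*$ topology details; in finite dimensions the argument is immediate. A second point needs care: the infimum in $f$ is attained by weak$^*$ compactness, and at $\mu=0$ strong duality still holds because Sion's theorem requires no Slater condition. This completes the plan.
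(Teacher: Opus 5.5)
Your proposal is correct and follows essentially the paper's route. Both use Lagrangian relaxation plus Sion's minimax theorem on the weak$^*$-compact set $\{0\le T\le\I\}$, evaluate the inner infimum as $\Tr[\rho\wedge\gamma\sigma]-\gamma\mu$, and swap roles ($\rho\leftrightarrow\sigma$, $T\leftrightarrow\I-T$) for the type-II identity. You also correctly note that Sion needs no Slater-type condition, which covers $\mu=0$ directly, whereas the paper treats $\mu=0$ separately.

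One slip, harmless because it is never used: for $\mu\ge1$ the test $T=\I$ is feasible, so $f(\mu)=0$, not $\Tr[\rho(\I-\sigma^0)]$.
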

\begin{remark}
Proposition~\ref{prop:variational_type-I} may be equivalently viewed as the ``type-I-error version'' of the expressions obtained in \cite[Theorem 4]{regula2026tight}.
\end{remark}
\begin{proof}
    For $\mu > 0$, we apply the Lagrange method with Sion's minimax identity for linear functionals:\footnote{Banach-Alaoglu theorem guarantees that the set $\{0\leq T \leq \I\}$ is weak-* compact. Furthermore, for any trace-class operator $\tau$, the functional $T \mapsto \Tr[\tau T]$ is normal (weak-* continuous), which shows that our objective function is weak-* continuous and affine in $T$.}
    \begin{align}
        \Err^{\text{(I)}}_{\mu}(\rho\Vert\sigma)
    &=
    \inf_{0\leq T \leq \I} \left\{  \Tr[\rho (\I-T) ] : \Tr[\sigma T] \leq \mu \right\}
    \\
    &=  \inf_{0\leq T \leq \I} \sup_{\gamma\geq 0} \Tr[\rho (\I-T) ] +  \gamma (\Tr[\sigma T] - \mu) 
    \\
    &= \sup_{\gamma\geq 0} \inf_{0\leq T \leq \I} \Tr\left[ \rho - T (\rho - \gamma\sigma) \right] - \gamma \mu 
    \\
    &= \sup_{\gamma\geq 0} \Tr\left[ \rho - (\rho - \gamma\sigma)_+ \right] - \gamma \mu  
    \\
    &= \sup_{\gamma\geq 0} \Tr\left[ \rho \wedge \gamma\sigma\right] - \gamma \mu.
    \end{align}
    For $\mu = 0$, both representations evaluate $\Tr[\rho \sigma^0]$ by the direct-sum decomposition of $\sigma$ to its support and kernel.

    Switching the roles of $\rho$ and $\sigma$, we obtain the variational expression for $\Err^{\textnormal{(II)}}_{\varepsilon}(\rho\Vert\sigma)$.
\end{proof}

\begin{proposition}[One-shot relations] \label{prop:one-shot_type-I}
For all density operators $\rho$ and $\sigma$ with the corresponding \NS distributions $(P,Q)$ given in Definition~\ref{defn:NS} and all parameter $\mu\geq 0$,
we have, for all $s\in(0,1)$,
\begin{align}
    &\text{(upper bound)} \qquad \Err^{\textnormal{(I)}}_{\mu}(\rho\Vert\sigma)
    \leq \sup_{\gamma \geq 0} \left\{ \HM_s(\rho,\gamma\sigma) - \gamma \mu \right\}
    \leq \frac{1}{1-s} \Err^{\textnormal{(I)}}_{s\mu}(P\Vert Q),
    \\
    &\text{(lower bound)} \qquad 
    \Err^{\textnormal{(I)}}_{\mu}(\rho\Vert\sigma)
    \geq \sup_{\gamma \geq 0} \left\{ (1-s) \HM_s(\rho,\gamma\sigma) - \frac{1-s}{s}\gamma \mu \right\}
    \geq (1-s)\Err^{\textnormal{(I)}}_{\mu/s}(P\Vert Q).
\end{align}
\end{proposition}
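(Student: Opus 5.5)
The plan is to combine the variational formula of Proposition~\ref{prop:variational_type-I} with the binary harmonic-mean bounds of Theorem~\ref{theorem:harmonic-mean_bound}, applied pointwise in the Lagrange multiplier $\gamma$. We then compare the resulting classical expressions using elementary scalar inequalities for the weighted harmonic mean $x\mathbin{!}_s y = \frac{xy}{sx+(1-s)y}$.

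The first observation is that the Nussbaum--Szko{\l}a distributions of the pair $(\rho,\gamma\sigma)$ are exactly $(P,\gamma Q)$, by Definition~\ref{defn:NS}. Hence $\HM_s(\rho,\gamma\sigma)=\Tr[P\mathbin{!}_s \gamma Q]$. Also, $P,Q$ are commuting and normalized by Fact~\ref{fact:NS}, so Proposition~\ref{prop:variational_type-I} applies to them verbatim and gives
\begin{align}
\Err^{\textnormal{(I)}}_{\mu'}(P\Vert Q)=\sup_{\gamma'\ge0}\bigl\{\Tr[P\wedge\gamma' Q]-\gamma'\mu'\bigr\}.
\end{align}

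\emph{Upper bound.} By Proposition~\ref{prop:variational_type-I} and the upper part $\Tr[A\wedge B]\le \Tr[P\mathbin{!}_sQ]$ of Theorem~\ref{theorem:harmonic-mean_bound} applied with $A=\rho$, $B=\gamma\sigma$, we get $\Err^{\textnormal{(I)}}_\mu(\rho\Vert\sigma)\le\sup_\gamma\{\HM_s(\rho,\gamma\sigma)-\gamma\mu\}$. For the second inequality I use the scalar bounds $x\mathbin{!}_s y\le x/(1-s)$ and $x\mathbin{!}_s y\le y/s$. Together they give
\begin{align}
x\mathbin{!}_s(\gamma y)\le \tfrac{1}{1-s}\min\bigl(x,\gamma' y\bigr), \qquad \gamma'\coloneqq\tfrac{(1-s)\gamma}{s}.
\end{align}
Summing against the weights of $P,Q$ yields
\begin{align}
\HM_s(\rho,\gamma\sigma)-\gamma\mu\le \tfrac{1}{1-s}\bigl(\Tr[P\wedge\gamma' Q]-s\gamma'\mu\bigr),
\end{align}
since $\gamma\mu=s\gamma'\mu/(1-s)$. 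Taking the supremum over $\gamma'\ge0$ and using the classical variational formula above gives $\frac{1}{1-s}\Err^{\textnormal{(I)}}_{s\mu}(P\Vert Q)$.

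\emph{Lower bound.} Theorem~\ref{theorem:harmonic-mean_bound} only provides the weight $s\wedge(1-s)$, so I do not apply it at general $s$. Instead I use it at $s=\tfrac12$ together with a rescaling of $\gamma$. A direct computation shows that for $c=\frac{s}{1-s}$,
\begin{align}
(1-s)\, x\mathbin{!}_s (c y)=\tfrac{xy}{x+y}=\tfrac12\, x\mathbin{!}_{1/2} y .
\end{align}
Consequently $(1-s)\HM_s(\rho,c\gamma\sigma)=\tfrac12\HM_{1/2}(\rho,\gamma\sigma)\le\Tr[\rho\wedge\gamma\sigma]$. Setting $\gamma''=c\gamma$, so that $\gamma=\frac{1-s}{s}\gamma''$, we obtain
\begin{align}
\Tr[\rho\wedge\gamma\sigma]-\gamma\mu\ge(1-s)\HM_s(\rho,\gamma''\sigma)-\tfrac{1-s}{s}\gamma''\mu .
\end{align}
Taking the supremum over $\gamma''\ge 0$ (a bijective reparametrization of $\gamma\ge0$) and invoking Proposition~\ref{prop:variational_type-I} gives the first lower bound. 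For the last inequality, the weighted harmonic mean dominates the minimum, $x\mathbin{!}_s y\ge\min(x,y)$. Hence
\begin{align}
(1-s)\HM_s(\rho,\gamma\sigma)-\tfrac{1-s}{s}\gamma\mu\ge(1-s)\bigl(\Tr[P\wedge\gamma Q]-\gamma\mu/s\bigr),
\end{align}
and the supremum over $\gamma$ is $(1-s)\Err^{\textnormal{(I)}}_{\mu/s}(P\Vert Q)$.

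The main obstacle is the lower bound. The naive route through $s\wedge(1-s)$ loses the factor when $s>\tfrac12$, and it is the homogeneity and rescaling identity above that recovers the clean constant $1-s$. The remaining points are minor: the conventions for singular terms in $\HM_s$ are handled by the same $\epsilon$-limit used in Definition~\ref{defn:harmonic_mean}, and the edge case $\mu=0$ is already covered by Proposition~\ref{prop:variational_type-I}.
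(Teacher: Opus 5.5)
Your proof is correct and follows the paper's argument. For the upper bound you combine Proposition~\ref{prop:variational_type-I} with $\Tr[\rho\wedge\gamma\sigma]\le\HM_s(\rho,\gamma\sigma)$ and the rescaling $\gamma'=\frac{1-s}{s}\gamma$; your scalar bound $x\mathbin{!}_s(\gamma y)\le\frac{1}{1-s}\min(x,\gamma' y)$ is just the paper's identity $x\mathbin{!}_s(\gamma y)=\frac{1}{1-s}\frac{x\gamma'y}{x+\gamma'y}$ followed by $\frac{xy}{x+y}\le x\wedge y$. For the lower bound the paper likewise uses Theorem~\ref{theorem:harmonic-mean_bound} only at $s=\frac12$, i.e.\ $\Tr[\frac{P\gamma Q}{P+\gamma Q}]\le\Tr[\rho\wedge\gamma\sigma]$, then reparametrizes with $\gamma'=\frac{s}{1-s}\gamma$ exactly as you do, and finishes with $x\mathbin{!}_s y\ge x\wedge y$.
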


\begin{proof}
    Using the variational formula in Proposition~\ref{prop:variational_type-I}, we derive the upper bound via the harmonic-mean bound established in Theorem~\ref{theorem:harmonic-mean_bound}:
    For all $s\in (0,1)$, we derive
    \begin{align}
        \sup_{\gamma \geq 0} \left\{ \Tr\left[ \rho \wedge \gamma \sigma \right] - \gamma \mu \right\}
        &\leq 
        \sup_{\gamma \geq 0} \left\{ \Tr\left[ P \mathbin{!}_s \gamma Q \right] - \gamma \mu \right\}
        &&\because \text{Theorem~\ref{theorem:harmonic-mean_bound}}
        \\
        &= \frac{1}{1-s} \sup_{\gamma' \geq 0} \left\{ \Tr\left[\frac{ P \cdot \gamma'Q}{P+\gamma' Q}\right] - s \gamma' \mu \right\}
        &&\because\gamma' = \frac{1-s}{s}\gamma
        \\
        &\leq \frac{1}{1-s} \sup_{\gamma' \geq 0} \left\{ \Tr\left[ P \wedge \gamma ' Q \right] - s \gamma' \mu \right\}
        \\
        &= \frac{1}{1-s} \Err^{\text{(I)}}_{s\mu}(P\Vert Q).
    \end{align}

    For the lower bound, we deduce
    \begin{align}
        \sup_{\gamma \geq 0} \left\{ \Tr\left[ \rho \wedge \gamma \sigma \right] - \gamma \mu \right\}
        &\geq \sup_{\gamma \geq 0} \left\{ \Tr\left[\frac{ P \cdot \gamma Q}{P+\gamma Q}\right] -  \gamma \mu \right\}
        &&\because \text{Theorem~\ref{theorem:harmonic-mean_bound}}
        \\
        &= \sup_{\gamma' \geq 0} \left\{ (1-s) \Tr\left[ P \mathbin{!}_s \gamma' Q \right] - \frac{1-s}{s} \gamma ' \mu \right\}
        &&\because\gamma' = \frac{s}{1-s}\gamma
        \\
        &\geq \sup_{\gamma' \geq 0} \left\{ (1-s) \Tr\left[ P \wedge \gamma' Q \right] - \frac{1-s}{s} \gamma ' \mu \right\}
        \\
        &= (1-s) \Err^{\text{(I)}}_{\mu/s}(P\Vert Q).
    \end{align}
\end{proof}


\medskip
As in Section~\ref{sec:asymptotics}, we derive the almost-exact asymptotics for asymmetric quantum hypothesis testing.
Define an error exponent function:
\begin{align}
E(R)
\coloneqq 
\sup_{\alpha\in(0,1)} \frac{1-\alpha}{\alpha} \left( D_{\alpha}(\rho\Vert\sigma) - R \right).
\end{align}

We assume the standard conditions that the non-orthogonal states $\rho$ and $\sigma$ are not identical on the common support, and the associated \NS distributions $(P,Q)$ are non-lattice.

\begin{fact}[Cram\'er's condition, asymmetric setting {\cite{BR60, Hoe65}, \cite[\S 2.2]{DZ98}}] \label{fact:condition_asymmetric}
Suppose the pair of density operators $(\rho,\sigma)$ and the corresponding \NS distributions $(P,Q)$ satisfy
\begin{subequations} \label{eq:assumption_asymmetric}
\begin{align}
    &D_0(\rho\Vert\sigma) < D(\rho\Vert\sigma), \label{eq:assumption_asymmetric1}
    \\
    &\text{$(P,Q)$ are non-lattice.}
    \label{eq:assumption_asymmetric2}
\end{align}
\end{subequations}
Then, there exists a unique $\alpha^{\star} \in (0,1)$ satisfying
\begin{align}
&E(R)
= \frac{1-\alpha^\star}{\alpha^\star} \left( D_{\alpha^\star}(\rho\Vert\sigma) - R \right) = D\left( P_{\alpha^\star} \Vert P \right),
\label{eq:E(R)_Cramer}
\\
&R = D\left( P_{\alpha^\star} \Vert Q \right),
\\
&V_{\alpha^\star}(\rho\Vert\sigma)
\coloneqq \textnormal{Var}_{P_{\alpha^\star}}\left[ \log \frac{P}{Q} \right] \in (0,+\infty),
\end{align}
where the tilted distribution is defined as
\begin{align}
    P_{\alpha} \coloneqq \frac{P^{\alpha} Q^{1-\alpha}}{\Tr\left[P^{\alpha} Q^{1-\alpha}\right]}.
\end{align}
\end{fact}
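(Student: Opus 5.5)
Throughout I assume the rate satisfies $D_0(\rho\Vert\sigma) < R < D(\rho\Vert\sigma)$. The statement leaves this implicit, and condition \eqref{eq:assumption_asymmetric1} is what makes this interval nonempty. The plan is to push everything to the classical level via Fact~\ref{fact:NS}(v): $D_\alpha(\rho\Vert\sigma)=D_\alpha(P\Vert Q)$ for all $\alpha\in[0,1]$. I then analyse the cumulant generating function
\begin{align}
\psi(\alpha) \coloneqq \log \Tr\left[P^{\alpha}Q^{1-\alpha}\right] = (\alpha-1) D_\alpha(P\Vert Q), \quad \alpha\in[0,1].
\end{align}

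\textbf{Step 1 (regularity of $\psi$).} I would first show that $\psi$ is finite and real-analytic on $(0,1)$, with derivatives obtained by differentiating under the sum. This gives
\begin{align}
\psi'(\alpha)=\mathds{E}_{P_\alpha}\left[\log\frac{P}{Q}\right], \qquad \psi''(\alpha)=\textnormal{Var}_{P_\alpha}\left[\log \frac{P}{Q}\right].
\end{align}
This needs a dominated-convergence argument, because the \NS distributions may have countably infinite support. On a compact subinterval $[\alpha_0,\alpha_1]\subset(0,1)$, I would bound $P^{\alpha}Q^{1-\alpha}|\log(P/Q)|^k$ by a constant multiple of $P^{\alpha_0'}Q^{1-\alpha_0'}+P^{\alpha_1'}Q^{1-\alpha_1'}$ for slightly enlarged exponents $\alpha_0'<\alpha_0\le\alpha_1<\alpha_1'$. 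Each of these dominating terms is summable by Hölder's inequality.

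The finite-variance claim then follows immediately. For positivity of the variance, $\psi''(\alpha)=0$ would force $\log(P/Q)$ to be constant on the common support, i.e.\ $P\propto Q$ there. Then $\psi$ would be affine on $[0,1]$ and $D_\alpha$ would be constant, contradicting $D_0<D$. Hence $\psi''>0$ on $(0,1)$.

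\textbf{Step 2 (unique maximiser).} Writing $\frac{1-\alpha}{\alpha}D_\alpha=-\psi(\alpha)/\alpha$, the objective becomes
\begin{align}
g(\alpha)=\frac{-\psi(\alpha)-(1-\alpha)R}{\alpha}, \qquad g'(\alpha)=\frac{h(\alpha)}{\alpha^2}, \qquad h(\alpha)\coloneqq \psi(\alpha)-\alpha\psi'(\alpha)+R.
\end{align}
Since $h'(\alpha)=-\alpha\psi''(\alpha)<0$, the function $h$ is strictly decreasing. At the endpoints I expect
\begin{align}
h(0^+)=\psi(0)+R=R-D_0>0, \qquad h(1^-)=\psi(1)-\psi'(1^-)+R=R-D<0,
\end{align}
using $\psi(1)=\log\Tr[\rho\sigma^0]$ and $\psi'(1^-)=D(\rho\Vert\sigma)$ (possibly $+\infty$). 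So $h$ has a unique zero $\alpha^\star\in(0,1)$, and $g$ increases before $\alpha^\star$ and decreases after it. Hence $\alpha^\star$ is the unique maximiser and $E(R)=g(\alpha^\star)$.

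\textbf{Step 3 (identities).} From $\log(P_\alpha/Q)=\alpha\log(P/Q)-\psi(\alpha)$ I get $D(P_\alpha\Vert Q)=\alpha\psi'(\alpha)-\psi(\alpha)$, which equals $R$ exactly when $h(\alpha)=0$. Similarly, $\log(P_\alpha/P)=(\alpha-1)\log(P/Q)-\psi(\alpha)$ gives $D(P_\alpha\Vert P)=(\alpha-1)\psi'(\alpha)-\psi(\alpha)$. Substituting $R=\alpha^\star\psi'(\alpha^\star)-\psi(\alpha^\star)$ into $g(\alpha^\star)$, the numerator becomes
\begin{align}
-\alpha^\star\psi(\alpha^\star)-\alpha^\star(1-\alpha^\star)\psi'(\alpha^\star).
\end{align}
Dividing by $\alpha^\star$ yields $E(R)=D(P_{\alpha^\star}\Vert P)$, as claimed in \eqref{eq:E(R)_Cramer}.

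\textbf{Main obstacle.} I expect the hardest part to be the endpoint behaviour in infinite dimensions. This means justifying $\psi'(1^-)=D(P\Vert Q)$, including the case where it is infinite or where the supports differ, and $\psi(0^+)=-D_0$. My first attempt would use monotonicity: $\psi'$ is increasing, so I would apply monotone convergence to the positive and negative parts of $P^\alpha Q^{1-\alpha}\log(P/Q)$ as $\alpha\nearrow1$. Where support mismatches arise, I would restrict to the common support and track the corrections $\Tr[\rho\sigma^0]$ and $\Tr[\sigma\rho^0]$ by hand. The non-lattice hypothesis \eqref{eq:assumption_asymmetric2} is not needed for this statement; it only enters the subsequent strong large-deviation asymptotics.
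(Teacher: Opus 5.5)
The paper gives no proof of this Fact. It is quoted from the classical large-deviation literature, and your convex analysis of $\psi$ is the standard argument behind those citations. Under the additional hypothesis $\supp(\rho)\subseteq\supp(\sigma)$ your proof is complete:
\begin{itemize}
\item $\psi(1)=0$, and $\psi$ is continuous on $[0,1]$ (dominate $P^{\alpha}Q^{1-\alpha}\le P+Q$).
\item Convexity gives $\lim_{\alpha\uparrow1}\psi'(\alpha)=\psi'_-(1)=D(\rho\Vert\sigma)$.
\item $\alpha\psi'(\alpha)\to0$ as $\alpha\downarrow0$ follows from $\psi(\alpha)-\psi(0)\le\alpha\psi'(\alpha)\le\alpha\psi'(\tfrac12)$, even if $\psi'(0^+)=-\infty$. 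You used this silently when writing $h(0^+)=R-D_0$.
\end{itemize}

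\textbf{The gap: the case $\supp(\rho)\not\subseteq\supp(\sigma)$.} You say you can cover this case (``possibly $+\infty$'', ``track the corrections by hand''), but you cannot. There $D(\rho\Vert\sigma)=+\infty$, so \eqref{eq:assumption_asymmetric1} holds automatically. However, $\psi'(1^-)$ is not $D(\rho\Vert\sigma)$. Your own identity $h(\alpha)=R-D(P_\alpha\Vert Q)$ gives $h(1^-)=R-\lim_{\alpha\uparrow1}D(P_\alpha\Vert Q)$. With $\tilde P\coloneqq PQ^0/\Tr[PQ^0]$, this limit is $D(\tilde P\Vert Q)$ (e.g.\ for finite support), which is typically finite.

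Since $\alpha\mapsto D(P_\alpha\Vert Q)$ is increasing (its derivative is $\alpha\psi''$), every $R\ge D(\tilde P\Vert Q)$ has the following consequences:
\begin{itemize}
\item there is no $\alpha^\star$ with $R=D(P_{\alpha^\star}\Vert Q)$;
\item $g$ is strictly increasing on $(0,1)$;
\item $E(R)=-\log\Tr[\rho\sigma^0]$ is only approached as $\alpha\to1$, never attained.
\end{itemize}

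Concretely, take commuting $\rho=\mathrm{diag}(0.1,0.2,0.3,0.4)$ and $\sigma=\mathrm{diag}(0.3,0.3,0.4,0)$. Then $D_0=0$ and $D=+\infty$. On the common support the log-likelihood ratio takes the values $\log\frac13,\log\frac23,\log\frac34$, which is non-lattice because $\log_2 3$ is irrational. Yet $D(\tilde P\Vert Q)\approx0.049$, so for instance $R=1$ admits no interior optimiser.

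Your positivity-of-variance step fails in the same case. If $P\propto Q$ on the common support $S$ but $P(S)<1$, then $\psi$ is affine while $D_\alpha\to+\infty$ as $\alpha\to1$. So $D_0<D$ alone does not rule out $V\equiv0$.

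No bookkeeping repairs this, because the conclusion itself is false there. The statement has to be amended in one of two ways:
\begin{itemize}
\item assume $\supp(\rho)\subseteq\supp(\sigma)$, as the paper does explicitly in Proposition~\ref{prop:third-order_converse}; or
\item replace the upper end $D(\rho\Vert\sigma)$ of the admissible rate range by $\lim_{\alpha\uparrow1}D(P_\alpha\Vert Q)$, and assume $P\not\propto Q$ on the common support.
\end{itemize}
With the first amendment, your Steps~1--3 go through verbatim. Your remark that the non-lattice hypothesis plays no role in this Fact is then also correct.
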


\begin{theorem}[Almost-exact asymptotics for asymmetric hypothesis testing] \label{theorem:asymptotic_asymmetric}
For all density operators $\rho$ and $\sigma$ and the corresponding \NS distributions $(P,Q)$
satisfying ~\eqref{eq:assumption_asymmetric},
we have, for any rate $R$ satisfying $D_0(\rho\Vert\sigma) < R < D(\rho\Vert\sigma)$,
\begin{align}
    2^{-\frac{1}{\alpha^\star}} (1-o(1)) \Err^{\textnormal{(I)}}_{\e^{-nR}}\left(P^{\otimes n}\Vert Q^{\otimes n} \right)
    \leq
    \Err^{\textnormal{(I)}}_{\e^{-nR}}\left(\rho^{\otimes n}\Vert \sigma^{\otimes n} \right)
    \leq 2^{\frac{1}{\alpha^\star}} (1+o(1)) \Err^{\textnormal{(I)}}_{\e^{-nR}}\left(P^{\otimes n}\Vert Q^{\otimes n} \right),
\end{align}
where $\alpha^{\star} \in (0,1)$ is a unique parameter for \eqref{eq:E(R)_Cramer}.

In particular, 
\begin{align}
    \Err^{\textnormal{(I)}}_{\e^{-nR}}\left(\rho^{\otimes n}\Vert \sigma^{\otimes n} \right)
    &\asymp 
     \frac{1}{1-\alpha^\star} \left( \frac{1}{\alpha^\star} \right)^{\frac{1-\alpha^\star}{\alpha^\star}} \left(2\pi n V_{\alpha^\star}(\rho\Vert\sigma) \right)^{-\frac{1}{2\alpha^\star}}
     \e^{-n E(R)},
\end{align}
where $E(R)$ is given in \eqref{eq:E(R)_Cramer}.
Here, the asymptotic equivalence $\asymp$ is up to the factor $2^{\frac{2}{\alpha^{\star}}}$.
\end{theorem}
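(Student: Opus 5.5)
Throughout, write $\mu_n\coloneqq\e^{-nR}$. The plan is to sandwich the quantum type-I error between classical type-I errors of the \NS distributions using Proposition~\ref{prop:one-shot_type-I}. The loss is a multiplicative constant together with a rescaling of the threshold $\mu_n$ by a constant factor. Classical strong large deviations (Bahadur--Rao type) then show that such a constant rescaling of the threshold changes the classical error only by an explicit constant factor.

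First, apply Proposition~\ref{prop:one-shot_type-I} to $(\rho^{\otimes n},\sigma^{\otimes n})$. By tensor factorization (Fact~\ref{fact:NS}), the associated \NS distributions are $(P^{\otimes n},Q^{\otimes n})$. For every $s\in(0,1)$ this gives
\begin{align}
(1-s)\,\Err^{\textnormal{(I)}}_{\mu_n/s}\big(P^{\otimes n}\Vert Q^{\otimes n}\big)
\leq \Err^{\textnormal{(I)}}_{\mu_n}\big(\rho^{\otimes n}\Vert\sigma^{\otimes n}\big)
\leq \frac{1}{1-s}\,\Err^{\textnormal{(I)}}_{s\mu_n}\big(P^{\otimes n}\Vert Q^{\otimes n}\big).
\end{align}
Next, invoke the classical sharp asymptotics under Fact~\ref{fact:condition_asymmetric}. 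The non-lattice condition and $D_0<R<D$ allow the Bahadur--Rao/Petrov expansion applied to the tilted distribution $P_{\alpha^\star}$. This yields, for every fixed $c>0$,
\begin{align}
\Err^{\textnormal{(I)}}_{c\,\e^{-nR}}\big(P^{\otimes n}\Vert Q^{\otimes n}\big)
= (1+o(1))\, c^{-\frac{1-\alpha^\star}{\alpha^\star}}\, K\, n^{-\frac{1}{2\alpha^\star}}\,\e^{-nE(R)},
\end{align}
with the explicit constant
\begin{align}
K \coloneqq \frac{1}{1-\alpha^\star}\left(\frac{1}{\alpha^\star}\right)^{\frac{1-\alpha^\star}{\alpha^\star}}\big(2\pi V_{\alpha^\star}\big)^{-\frac{1}{2\alpha^\star}} .
\end{align}
The power $c^{-(1-\alpha^\star)/\alpha^\star}$ arises because shifting the threshold $\log\mu$ by $\log c$ at slope $\frac{1-\alpha^\star}{\alpha^\star}=-E'(R)$ multiplies the error by $\e^{-E'(R)\log(1/c)}$.

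I expect this step to be the main obstacle. The Neyman--Pearson test may need randomization at the boundary, and the dependence on $c$ must be uniform enough to survive taking $o(1)$ limits. I would derive it from the variational formula (Proposition~\ref{prop:variational_type-I}) together with the exact asymptotics of $\Tr[P^{\otimes n}\wedge\gamma Q^{\otimes n}]$ for $\gamma=\e^{n\theta}$ with $\theta$ near the optimum. Alternatively, I would cite the classical Bahadur--Rao asymptotics of the likelihood-ratio tail directly, carrying the $\log c$ shift into the saddle point.

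Finally, combine the two displays. The upper bound has ratio $\frac{1}{1-s}\,s^{-\frac{1-\alpha^\star}{\alpha^\star}}$ relative to $\Err^{\textnormal{(I)}}_{\mu_n}(P^{\otimes n}\Vert Q^{\otimes n})$. Choosing $s=\tfrac12$ gives $2\cdot 2^{\frac{1-\alpha^\star}{\alpha^\star}}=2^{\frac{1}{\alpha^\star}}$. Symmetrically, the lower bound with $s=\tfrac12$ gives $(1-s)\,s^{\frac{1-\alpha^\star}{\alpha^\star}}=2^{-\frac{1}{\alpha^\star}}$. This proves the sandwich with factors $2^{\pm 1/\alpha^\star}(1\pm o(1))$. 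The ``in particular'' statement follows by substituting the classical asymptotics with $c=1$, which leaves total uncertainty $2^{2/\alpha^\star}$. Note that $E(R)=D(P_{\alpha^\star}\Vert P)$ and $D_\alpha(P\Vert Q)=D_\alpha(\rho\Vert\sigma)$ by Fact~\ref{fact:NS}, so every classical quantity can be rewritten in terms of $(\rho,\sigma)$.
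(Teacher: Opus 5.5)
Your proposal is correct and follows essentially the same route as the paper. You apply Proposition~\ref{prop:one-shot_type-I} with $s=\tfrac12$ to the tensor-power \NS distributions, and then absorb the threshold rescaling $\mu_n\mapsto 2^{\pm1}\mu_n$ through the slope $E'(R)=-\frac{1-\alpha^\star}{\alpha^\star}$. The paper phrases that second step as a Taylor expansion of $E$; you phrase it as fixed-$c$ Bahadur--Rao asymptotics, which also makes explicit that the polynomial prefactor is unaffected.

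Since you keep $s$ free, note that minimizing $\frac{1}{1-s}s^{-\frac{1-\alpha^\star}{\alpha^\star}}$ at $s=1-\alpha^\star$ would sharpen the constants $2^{\pm 1/\alpha^\star}$ to $\big(\alpha^\star(1-\alpha^\star)^{\frac{1-\alpha^\star}{\alpha^\star}}\big)^{\mp1}$.
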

\begin{proof}
    Applying Proposition~\ref{prop:one-shot_type-I} with $s = \frac12$, we have
    \begin{align}
        \frac12 \cdot \Err^{\textnormal{(I)}}_{\e^{-nR+\log 2}}\left(P^{\otimes n}\Vert Q^{\otimes n} \right)
        \leq \Err^{\textnormal{(I)}}_{\e^{-nR}}\left(\rho^{\otimes n}\Vert \sigma^{\otimes n} \right)
        \leq 2 \cdot \Err^{\textnormal{(I)}}_{\e^{-nR- \log 2}}\left(P^{\otimes n}\Vert Q^{\otimes n} \right).
    \end{align}
    Then, we employ Fact~\ref{fact:condition_asymmetric} and the first-order condition to calculate 
    \begin{align}
    \frac{\d E(R)}{\d R} &=\left. \frac{\partial}{\partial \alpha} \left[ \frac{1-\alpha}{\alpha} \big( D_{\alpha}(\rho\|\sigma) - R \big) \right] \right|_{\alpha=\alpha^\star(R)} \cdot \frac{\mathrm{d}\alpha^\star(R)}{\mathrm{d}R} \quad + \left. \frac{\partial}{\partial R} \left[ \frac{1-\alpha}{\alpha} \big( D_{\alpha}(\rho\|\sigma) - R \big) \right] \right|_{\alpha=\alpha^\star(R)}
    \\
    &= 0 \cdot \frac{\mathrm{d}\alpha^\star(R)}{\mathrm{d}R} + \frac{\partial}{\partial R} \left[ \frac{1-\alpha^\star(R)}{\alpha^\star(R)} D_{\alpha^\star(R)}(\rho\|\sigma) - \frac{1-\alpha^\star(R)}{\alpha^\star(R)} R \right]
    \\
    &= -\frac{1-\alpha^\star}{\alpha^\star}.
    \end{align}
    We apply the Taylor series expansion of $E(\cdot)$ around $R$ to remove the rate deviation terms $\pm \frac{1}{n} \log 2$, which incurs a factor $2^{\pm \frac{1-\alpha^{\star}}{\alpha^{\star}}}(1\pm o(1))$.
    Combining with the prefactor $2^{\pm 1}$ completes the proof.
\end{proof}

\begin{remark} \label{eq:condition_asymmetric}
    We only focus on $D_0(\rho\Vert\sigma) < R < D(\rho\Vert\sigma)$, because, otherwise, we would have
    \begin{align}
        E(R) 
        = 
        \begin{dcases}
            +\infty & R\leq D_0(\rho\Vert\sigma),
            \\
            0 & R \geq D(\rho\Vert\sigma).
        \end{dcases}
    \end{align}
\end{remark}

\medskip
For a pair of density operators $\rho$ and $\sigma$, we define the quantum hypothesis testing divergence as
\begin{align}
    D_{\text{H}}^{\varepsilon}(\rho\Vert\sigma)
    \coloneqq - \log \Err^{\text{(II)}}_{\varepsilon}(\rho\Vert\sigma).
\end{align}

\begin{proposition}[Improved third-order converse] \label{prop:third-order_converse}
    For all density operators $\rho$ and $\sigma$ satisfying $\supp(\rho)\subseteq\supp(\sigma)$ and $\rho \neq \sigma$, and assuming their  Nussbaum--Szko{\l}a distributions $(P,Q)$ satisfy $\Tr\big[P\,\left|\log P - \log Q\right|^3\big] < \infty$,
    we have
    \begin{align}
    D_{\textnormal{H}}^{\varepsilon}\left(\rho^{\otimes n}\Vert\sigma^{\otimes n}\right)
        \leq n D(\rho\|\sigma) + \sqrt{nV(\rho\Vert\sigma)} \Phi^{-1}(\varepsilon) + \frac12 \log n + \mathcal{O}(1), \quad \forall\,\varepsilon \in (0,1),
    \end{align}
    where ${\Phi}^{-1}(\eps) \coloneqq \sup\{u: \int_{-\infty}^u \frac{1}{\sqrt{2\pi}} \e^{-\frac12 t^2}\mathrm{d}t \leq \eps \}$ is the inverse of the cumulative distribution function of the standard normal distribution, and $V(\rho\Vert\sigma) \coloneqq \Tr[\rho(\log \rho - \log \sigma)^2] - D(\rho\Vert\sigma)^2$ is the relative entropy variance \cite{TH13, Li14}.
\end{proposition}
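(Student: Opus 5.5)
The plan is to pass from $D_{\text{H}}^{\varepsilon}$ to the Nussbaum--Szko{\l}a distributions using the type-II variational formula of Proposition~\ref{prop:variational_type-I}. The comparison step will lose only a constant factor, and the classical third-order expansion will supply the $\frac12\log n$ term.

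\textbf{Step 1 (one-shot reduction).} By Proposition~\ref{prop:variational_type-I},
\[
\Err^{\text{(II)}}_{\varepsilon}(\rho\Vert\sigma)=\sup_{\gamma\ge0}\bigl\{\Tr[\gamma\rho\wedge\sigma]-\gamma\varepsilon\bigr\}.
\]
The lower bound of Theorem~\ref{theorem:harmonic-mean_bound} with $s=\frac12$ gives $\Tr[\gamma\rho\wedge\sigma]\ge\frac12\Tr[\gamma P\wedge Q]$. Hence
\[
\Err^{\text{(II)}}_{\varepsilon}(\rho\Vert\sigma)\ge\tfrac12\sup_{\gamma\ge0}\bigl\{\Tr[\gamma P\wedge Q]-2\gamma\varepsilon\bigr\}=\tfrac12\Err^{\text{(II)}}_{2\varepsilon}(P\Vert Q).
\]
This mirrors the lower bound of Proposition~\ref{prop:one-shot_type-I}. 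Since $2\varepsilon$ may exceed $1$, I would instead use a general $s$. The bound $\Tr[\gamma\rho\wedge\sigma]\ge s\wedge(1-s)\cdot \HM_s(\gamma\rho,\sigma)$, combined with rescaling $\gamma$ as in the proof of Proposition~\ref{prop:one-shot_type-I}, gives
\[
\Err^{\text{(II)}}_{\varepsilon}(\rho\Vert\sigma)\ge c_s\,\Err^{\text{(II)}}_{\varepsilon/s'}(P\Vert Q),
\]
where $s'$ is chosen close to $1$ so that $\varepsilon/s'<1$. Taking logarithms yields
\[
D_{\text{H}}^{\varepsilon}(\rho\Vert\sigma)\le D_{\text{H}}^{\varepsilon'}(P\Vert Q)+\mathcal{O}(1),\qquad \varepsilon'=\varepsilon/s'\in(\varepsilon,1).
\]

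\textbf{Step 2 (tensorization).} By Fact~\ref{fact:NS}(iv), the Nussbaum--Szko{\l}a pair of $(\rho^{\otimes n},\sigma^{\otimes n})$ is $(P^{\otimes n},Q^{\otimes n})$. Step 1 applied to this pair gives
\[
D_{\text{H}}^{\varepsilon}(\rho^{\otimes n}\Vert\sigma^{\otimes n})\le D_{\text{H}}^{\varepsilon'}(P^{\otimes n}\Vert Q^{\otimes n})+\mathcal{O}(1),
\]
with a constant independent of $n$.

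\textbf{Step 3 (classical third order).} For i.i.d.\ classical distributions with finite third absolute moment of $\log P/Q$, the known expansion (Strassen; Polyanskiy--Poor--Verd\'u; Tomamichel--Hayashi) is
\[
D_{\text{H}}^{\varepsilon'}(P^{\otimes n}\Vert Q^{\otimes n})\le nD(P\Vert Q)+\sqrt{nV(P\Vert Q)}\,\Phi^{-1}(\varepsilon')+\tfrac12\log n+\mathcal{O}(1).
\]
It follows from the Neyman--Pearson lemma and a Berry--Esseen bound together with a local-limit/moderate estimate. By Fact~\ref{fact:NS}(v), $D(P\Vert Q)=D(\rho\Vert\sigma)$. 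Differentiating $D_\alpha$ at $\alpha=1$ gives $V(P\Vert Q)=V(\rho\Vert\sigma)$. The hypothesis $\rho\neq\sigma$ ensures $V>0$ or handles the degenerate case separately.

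\textbf{Main obstacle.} The problem is that $\varepsilon'\neq\varepsilon$. A shift $\varepsilon'=\varepsilon+\delta$ with $\delta$ constant changes $\sqrt n\,\Phi^{-1}$ by order $\sqrt n$, which is not $\mathcal{O}(1)$. The fix is to take $s'$ depending on $n$, namely $s'=1-1/\sqrt n$, so that $\varepsilon'-\varepsilon=\mathcal{O}(1/\sqrt n)$ and $\sqrt n\,[\Phi^{-1}(\varepsilon')-\Phi^{-1}(\varepsilon)]=\mathcal{O}(1)$. The price is the prefactor $c_s\approx 1-s'=1/\sqrt n$, which contributes $\frac12\log n$ on top of the classical one.

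So I must check carefully whether the constant in Step 1 can be made $s$-independent near $s'\to1$. I would first use the sharper bound $\Tr[\gamma\rho\wedge\sigma]\ge \HM_s-\dots$ from Appendix~\ref{sec:NS} directly. Failing that, I would use the classical bound with $\frac12\log n$ replaced by the achievable $-\frac12\log n$ refinement: the Neyman--Pearson test is sharper by a local limit argument, so the classical optimum is $+\frac12\log n$ for the converse of $\beta$, but the converse inequality needed here is $D_{\text{H}}\le$, and the classical quantity satisfies it with $-\frac12\log n$ slack relative to the bound. Balancing these two $\frac12\log n$ terms is the delicate step, and it is where I expect the real work to lie.
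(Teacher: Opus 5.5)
Your Steps 1--3 are the end-point route that the remark after the proposition attributes to \cite{LVK26}. As you noticed, that route stops at $\log n$. Relaxing $\Tr[\gamma\rho\wedge\sigma]$ to a constant times $\Tr[\gamma P\wedge Q]$ forces you to inflate $\varepsilon$ to $\varepsilon/s'$ and to pay a prefactor $\delta\coloneqq 1-s'$. The best this route can give is
\[
D_{\text{H}}^{\varepsilon}(\rho^{\otimes n}\Vert\sigma^{\otimes n})\le nD+\sqrt{nV}\,\Phi^{-1}(\varepsilon)+\tfrac12\log n+\min_{\delta\in(0,1)}\Bigl\{\log\tfrac1\delta+c_\varepsilon\sqrt{n}\,\delta\Bigr\}+\mathcal{O}(1),\qquad c_\varepsilon>0,
\]
and the minimum is itself $\frac12\log n+\mathcal{O}(1)$. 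Neither of your proposed fixes closes this gap. The second one is false. The classical expansion of $D_{\text{H}}^{\varepsilon'}(P^{\otimes n}\Vert Q^{\otimes n})$ with $+\frac12\log n+\mathcal{O}(1)$ is tight in both directions: the Neyman--Pearson test plus a Berry--Esseen estimate on unit-width windows gives the matching lower bound. So there is no $\frac12\log n$ of classical slack to absorb the cost of $\delta\sim n^{-1/2}$. The first fix (use the harmonic mean directly) points the right way, but it is left without content, and that content is the whole proof.

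The missing idea is to never pass to $\Tr[\gamma P\wedge Q]$ at all.

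\emph{The one-shot bound without loss.} Apply Theorem~\ref{theorem:harmonic-mean_bound} at $s=\frac12$ to $(\gamma\rho^{\otimes n},\sigma^{\otimes n})$, whose Nussbaum--Szko{\l}a pair is $(\gamma P^{\otimes n},Q^{\otimes n})$. This gives $\Tr[\gamma\rho^{\otimes n}\wedge\sigma^{\otimes n}]\ge\gamma\,\Upsilon_n(\gamma)$, where $\Upsilon_n(\gamma)\coloneqq\mathbf{E}_{P^{\otimes n}}[(1+\gamma\e^{Z_n})^{-1}]$ and $Z_n\coloneqq\log(P^{\otimes n}/Q^{\otimes n})$, with no multiplicative loss. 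Proposition~\ref{prop:variational_type-I} then yields $\Err^{\text{(II)}}_{\varepsilon}(\rho^{\otimes n}\Vert\sigma^{\otimes n})\ge\sup_{\gamma\ge0}\gamma\bigl(\Upsilon_n(\gamma)-\varepsilon\bigr)$ with the \emph{original} $\varepsilon$.

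\emph{Why the loss is negligible.} The factor between $x\wedge y$ and $\frac{xy}{x+y}$ only matters where $Z_n+\log\gamma=\mathcal{O}(1)$. Write $\Upsilon_n(\gamma)=\int F_{X_n}(x)k(x)\,\d x$ with $X_n=Z_n+\log\gamma$ and $k$ the logistic density. Berry--Esseen together with $\int|x|k(x)\,\d x<\infty$ gives $|\Upsilon_n(\gamma)-\mathbf{P}(Z_n\le-\log\gamma)|=\mathcal{O}(n^{-1/2})$, uniformly in $\gamma$.

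\emph{Conclusion.} You can therefore pick $\gamma$ with $-\log\gamma=nD+\sqrt{nV}\,\Phi^{-1}(\varepsilon)+\mathcal{O}(1)$ and $\Upsilon_n(\gamma)-\varepsilon\ge c\,n^{-1/2}$. The supremum is then at least $c\,n^{-1/2}\gamma$, so $\frac12\log n$ appears exactly once. The paper does the same thing through the stationarity condition: $\Gamma^{(n)}_\varepsilon=\gamma^\star_nK_n(\gamma^\star_n)$ with $K_n=\mathbf{E}[k(X_n)]=\Theta(n^{-1/2})$ and $\Upsilon_n(\gamma^\star_n)=\varepsilon+K_n(\gamma^\star_n)$. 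The smoothing acts at scale $\mathcal{O}(1)$ in the log-likelihood ratio and shifts $\varepsilon$ by only $\mathcal{O}(n^{-1/2})$, rather than rescaling $\varepsilon$ multiplicatively. So the two $\frac12\log n$ terms you were trying to balance never both arise.
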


\begin{remark}
    Proposition~\ref{prop:third-order_converse} improves the third-order term in \cite[Corollary 1]{LVK26} from $+\log n$ to $+\frac12 \log n$.
    The gain lies in the asymptotic expansion on the harmonic mean directly, instead of on the end-point quantity $s \Err^{\text{(II)}}_{\varepsilon/(1-s)}(P\Vert Q)$, $s\in (0,1)$.
    Applying a similar idea to achievability, we recover Li's third-order term in \cite{Li14}.
\end{remark}

\begin{proof}
    As in the proof of Proposition~\ref{prop:one-shot_type-I}, the variational formula in Proposition~\ref{prop:variational_type-I} with Theorem~\ref{theorem:harmonic-mean_bound} show that
    \begin{align}
        \Err^{\text{(II)}}_{\varepsilon}(\rho\Vert\sigma)
        \geq \sup_{\gamma \geq 0} \left\{ \Tr\left[\frac{ \gamma P^{\otimes n} \cdot Q^{\otimes n}}{\gamma P^{\otimes n } + Q^{\otimes n}}\right] -  \gamma \varepsilon \right\}
        =: \Gamma_{\varepsilon}^{(n)}.
    \end{align}
    In the following, we establish the classical asymptotic expansion
    \begin{align}
    - \log \Gamma_{\varepsilon}^{(n)}
    = n D(\rho\|\sigma) + \sqrt{n V(\rho\|\sigma)} \Phi^{-1}(\varepsilon) + \frac12 \log n + \mathcal{O}(1),
    \end{align}
    which completes the proof.

    Let $Z_n \coloneqq \log \frac{P^{\otimes n}}{Q^{\otimes n}}$.
    We rewrite $\Gamma_{\varepsilon}^{(n)}$ as an expectation under $P^{\otimes n}$:
    \begin{equation}
        \Gamma_{\varepsilon}^{(n)} 
        = \sup_{\gamma \geq 0} \gamma \left( \mathbf{E}_{P^{\otimes n}} \left[ \frac{1}{1 + \gamma \e^{Z_n}} \right] - \varepsilon \right).
    \end{equation}
    Let $\Upsilon_n(\gamma) \coloneqq \mathbf{E}_{P^{\otimes n}} \left[ (1 + \gamma \e^{Z_n})^{-1} \right]$.
    The objective function is $f_n(\gamma) = \gamma (\Upsilon_n(\gamma) - \varepsilon)$. 
    
    Because $\rho \neq \sigma$ and $\supp(\rho) \subseteq \supp(\sigma)$, the variance satisfies 
    \begin{align}
        V(\rho\|\sigma)
        = \Var_{P} \left[ \log\frac{P}{Q} \right] > 0.
    \end{align}
    We find the optimal threshold $\gamma_n^{\star} \geq 0$ via the derivative test:
    \begin{equation} \label{eq:derivative}
        f_n'(\gamma_n^{\star}) = \Upsilon_n(\gamma_n^{\star}) - \varepsilon + \gamma_n^{\star} \Upsilon_n'(\gamma_n^{\star}) = 0.
    \end{equation}
    Define the remainder kernel function $K_n(\gamma) \coloneqq - \gamma \Upsilon_n'(\gamma)$.
    Differentiating the expectation inside the integral yields:
    \begin{equation}
        K_n(\gamma) = \mathbf{E}_{P^{\otimes n}} \left[ \frac{\gamma \e^{Z_n}}{(1 + \gamma \e^{Z_n})^2} \right].
    \end{equation}
    From \eqref{eq:derivative} and substituting this back into the objective function $f_n$, we obtain
    \begin{align} \label{eq:max_value}
        &\Gamma_{\varepsilon}^{(n)} = f_n(\gamma_n^{\star}) = \gamma_n^{\star} K_n(\gamma_n^{\star}),
        \\
        &- \log \Gamma_{\varepsilon}^{(n)} = - \log \gamma_n^{\star} - \log K_n(\gamma_n^{\star}).
        \label{eq:log_decomp}
    \end{align}

    Now let $X_n \coloneqq Z_n + \log \gamma_n^{\star}$. The term $K_n(\gamma_n^{\star})$ can be expressed as the expectation of the standard logistic density function $k(x) = \frac{\e^x}{(1+\e^x)^2}$:
    \begin{equation}
        K_n(\gamma_n^{\star}) = \mathbf{E}_{P^{\otimes n}} [k(X_n)] = \int_{-\infty}^{\infty} k(x) \mathbf{P}_{X_n}(x) \mathrm{d}x.
    \end{equation}
    We then evaluate this integral using the local limit theorem \cite[Chapter VII]{Petrov1975}, \cite[Theorem 2.3]{Chaganty1993}.
    Under the hypothesis that the third absolute moment is finite and the variance $V(\rho\|\sigma)$ is strictly positive, the local limit theorem guarantees that the probability density of $X_n$ near the centered threshold satisfies $\mathbf{P}_{X_n}(x) = \frac{1}{\sqrt{2\pi n V(\rho\|\sigma)}}(1 + o(1))$.
    (If the underlying distribution is lattice, an identical asymptotic scaling holds for the probability mass function evaluated at the lattice points). 
    
    Because the kernel $k(x)$ decays exponentially fast away from the origin and integrates to unity, 
    passing the limit through the integral yields:
    \begin{equation}
        K_n(\gamma_n^{\star}) = \frac{1}{\sqrt{2\pi n V(\rho\|\sigma)}}(1 + o(1)) \int_{-\infty}^{\infty} k(x) \mathrm{d}x = \Theta(n^{-1/2}).
    \end{equation}
    Taking the negative logarithm and absorbing the variance, higher-order limits, and any lattice constants into the $\mathcal{O}(1)$ remainder, we have
    \begin{equation} \label{eq:Kn_eval}
        - \log K_n(\gamma_n^{\star}) = \frac{1}{2} \log n + \mathcal{O}(1).
    \end{equation}

    Lastly, from the critical point condition, we know 
    \begin{equation}
        \Upsilon_n(\gamma_n^{\star}) = \varepsilon + K_n(\gamma_n^{\star}) = \varepsilon + \mathcal{O}(n^{-\nicefrac{1}{2}}).    
    \end{equation}
    To connect $\Upsilon_n(\gamma_n^{\star})$ to the cumulative distribution function of $X_n$, let $F_{X_n}(x) \coloneqq \mathbf{P}_{P^{\otimes n}}(X_n \leq x)$. We express the expectation as a Lebesgue--Stieltjes integral and apply integration by parts:
    \begin{align}
        \Upsilon_n(\gamma_n^{\star}) 
        = \int_{-\infty}^{\infty} \frac{1}{1+\e^x} \mathrm{d}F_{X_n}(x) 
        = \left[ \frac{F_{X_n}(x)}{1+\e^x} \right]_{-\infty}^{\infty} - \int_{-\infty}^{\infty} F_{X_n}(x) \frac{\mathrm{d}}{\mathrm{d}x}\left(\frac{1}{1+\e^x}\right) \mathrm{d}x.
    \end{align}
    The boundary terms evaluate to zero. Since the derivative is precisely the negative of the standard logistic density $k(x)$, we obtain:
    \begin{equation}
        \Upsilon_n(\gamma_n^{\star}) = \int_{-\infty}^{\infty} F_{X_n}(x) k(x) \mathrm{d}x.
    \end{equation}
    We compare this to the sharp threshold probability $F_{X_n}(0) = \mathbf{P}_{P^{\otimes n}}(Z_n \leq -\log \gamma_n^{\star})$. The difference is:
    \begin{equation} \label{eq:integral_diff}
        \Upsilon_n(\gamma_n^{\star}) - F_{X_n}(0) = \int_{-\infty}^{\infty} (F_{X_n}(x) - F_{X_n}(0)) k(x) \mathrm{d}x.
    \end{equation}
    Because we assumed a finite third moment, the Berry--Esseen theorem \cite{Berry1941, Esseen1945} dictates that $F_{X_n}(x)$ is uniformly approximated by a Gaussian cumulative distribution function $G_n(x)$ with an error bounded by $\mathcal{O}(n^{-\nicefrac{1}{2}})$. Furthermore, because the Gaussian density scales as $\mathcal{O}(n^{-\nicefrac{1}{2}})$, the Mean Value Theorem implies $|G_n(x) - G_n(0)| \leq \mathcal{O}(|x| n^{-\nicefrac{1}{2}})$. Therefore, we can bound the difference pointwise:
    \begin{equation}
        |F_{X_n}(x) - F_{X_n}(0)| \leq \mathcal{O}(n^{-\nicefrac{1}{2}}) + \mathcal{O}(|x| n^{-\nicefrac{1}{2}}) = \mathcal{O}((1+|x|) n^{-\nicefrac{1}{2}}).
    \end{equation}
    Substituting this bound into \eqref{eq:integral_diff} yields
    \begin{equation}
        |\Upsilon_n(\gamma_n^{\star}) - F_{X_n}(0)| \leq \mathcal{O}(n^{-\nicefrac{1}{2}}) \int_{-\infty}^{\infty} (1+|x|) k(x) \mathrm{d}x = \mathcal{O}(n^{-\nicefrac{1}{2}}),
    \end{equation}
    where the final equality holds because the logistic distribution has finite absolute first moments. Thus, we have shown:
    \begin{equation}
        \mathbf{P}_{P^{\otimes n}}(Z_n \leq -\log \gamma_n^{\star}) = \varepsilon + \mathcal{O}(n^{-\nicefrac{1}{2}}).
    \end{equation}
    By the standard Edgeworth expansion for $Z_n$, we obtain
    \begin{equation}
        \Phi\left( \frac{-\log \gamma_n^{\star} - n D(P\|Q)}{\sqrt{n V(P\|Q)}} \right) = \varepsilon + \mathcal{O}(n^{-\nicefrac{1}{2}})
    \end{equation}
    (by the assumption of the finite third absolute moment of the log-likelihood ratio $Z_1$).
    This leads us to
    \begin{equation} \label{eq:gamma_eval}
        -\log \gamma_n^{\star} = n D(P\|Q) + \sqrt{n V(P\|Q)} \Phi^{-1}(\varepsilon) + \mathcal{O}(1).
    \end{equation}
    Combining \eqref{eq:log_decomp}, \eqref{eq:Kn_eval}, and \eqref{eq:gamma_eval}, and recalling that $D(P\|Q) = D(\rho\|\sigma)$ and $V(P\|Q) = V(\rho\|\sigma)$ (Fact~\ref{fact:NS}), we complete the proof.
\end{proof}

\section*{Acknowledgments}
We sincerely thank Bartosz Regula and Ludovico Lami for their valuable discussions and suggestions.
We thank AIs for polishing the manuscript.
HC would like to thank Bartosz Regula for his hospitality during the visit in RIKEN, Japan, and would like to thank Ludovico Lami and Filippo Girardi for hosting the wonderful workshop: \emph{\href{https://ypetz.filippo.info/}{YPetz --- Advances in quantum hypothesis testing and quantum entropies}} at Scuola Normale Superiore, Pisa, Italy.
We are supported by the Emerging Young Scholars Program of the National Science and Technology Council, Taiwan (R.O.C.) under Grant numbers~NSTC 114-2628-E-002-006, NSTC 114-2119-M-001-002, and NSTC 114-2124-M-002-003, by the Yushan Young Scholar Program of the Ministry of Education, Taiwan (R.O.C.) under Grant number~NTU-114V2016-1, and by the research project `Forefront Quantum Computing, Learning, and Engineering in Noisy Intermediate-Scale Quantum Era’ of National Taiwan University under Grant NTU-114L895005. H.-C.~Cheng acknowledges the support from the `Center for Advanced Computing and Imaging in Biomedicine (NTU-115L900702)' through The Featured Areas Research Center Program within the framework of the Higher Education Sprout Project by the Ministry of Education (MOE) in Taiwan.

\appendix

\section{Quantum Union Bound via The Data-Processing Inequality} \label{sec:union_bound}

Let $\rho$ be a density operator on a (separable) Hilbert space $\cH$, and let $\Pi_1,\dots,\Pi_m$ be orthogonal projections on $\cH$. Write $\bar\Pi_t\coloneqq\I-\Pi_t$. Define the Kraus operators
\begin{align}
K_0:=\Pi_m\Pi_{m-1}\cdots\Pi_1,
\qquad
K_t:=\bar\Pi_t\Pi_{t-1}\cdots\Pi_1
\quad (1\leq t\leq m).
\end{align}
We also define
\begin{align}
\Succ \coloneqq \Tr\left[ K_0\rho K_0^{\dagger} \right],
\qquad
\Fail \coloneqq 1-\Succ,
\qquad
\Loss \coloneqq \sum_{t=1}^m \Tr\left[ \rho\bar\Pi_t \right],
\end{align}
and, when $\Succ>0$,
\begin{align}
\rho_m:=\frac{K_0\rho K_0^{\dagger}}{\Succ}.
\end{align}

The \emph{quantum union bound} aims to upper bound the probability of {\Fail} (i.e.~the probability of not all $\Pi_t$ occurring) by the total \Loss. 
A sequence of works by Sen~\cite{Sen12}, Gao~\cite{Gao15}, Oskouei--Mancini--Wilde \cite{OMW19}, and O'Donnell--Venkateswaran~\cite{OV22} shows that {\Fail} can be upper bounded by {\Loss} with an only additional universal multiplicative factor.
Precisely, Gao's quantum union bound \cite{Gao15} states that
\begin{align}\label{eq:gao}
\textnormal{\Fail}\leq 4\textnormal{\Loss}.
\end{align}

For the purpose of this work, we re-write Gao's quantum union bound in \eqref{eq:gao} for a non-normalized state.
\begin{theorem}[Quantum union bound for non-normalized states) {\cite{Gao15, OMW19, OV22}}] \label{theorem:Union_Bound}
Let \(\rho\) be a positive semidefinite trace-class operator and let \(\Pi_1,\dots,\Pi_m\) be projections
(not assumed pairwise orthogonal). Then
\begin{align}
\Tr\left[ \Pi_m\cdots \Pi_1\,\rho\,\Pi_1\cdots \Pi_m\right]
\geq
\Tr[\rho] - 4\sum_{t=1}^m \Tr\left[(\I-\Pi_t)\rho\right].
\end{align}
\end{theorem}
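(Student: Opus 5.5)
The plan is to reduce Theorem~\ref{theorem:Union_Bound} to Gao's normalized bound \eqref{eq:gao} by homogeneity, with care taken only for the trivial case and for the unbounded-dimension setting. If $\Tr[\rho]=0$ then $\rho=0$ and both sides vanish, so assume $\Tr[\rho]>0$ and set $\tau\coloneqq\rho/\Tr[\rho]$, a density operator on the separable space $\cH$. The quantities in the statement are linear in $\rho$: the left-hand side equals $\Tr[\rho]\cdot\Tr[K_0\tau K_0^\dagger]$ with $K_0=\Pi_m\cdots\Pi_1$, and $\sum_t\Tr[(\I-\Pi_t)\rho]=\Tr[\rho]\cdot\Loss(\tau)$. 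Hence the claim is equivalent to $\Succ(\tau)\ge 1-4\Loss(\tau)$, i.e.\ $\Fail\le 4\Loss$ for $\tau$, which is exactly \eqref{eq:gao}. The ordering convention (first $\Pi_1$ applied, then $\Pi_2$, etc.) matches the Kraus operator $K_0$ defined above.

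The only real point is that Gao's bound is usually stated in finite dimensions, so I would make sure it holds for trace-class $\tau$ on infinite-dimensional $\cH$. First route: Gao's proof is purely algebraic (operator inequalities of the form $\I-\Pi_m\cdots\Pi_1\Pi_1\cdots\Pi_m\le 4\sum_t(\I-\Pi_t)$-type estimates evaluated against $\tau$, or the sequential "gentle measurement" argument of O'Donnell--Venkateswaran), and each step uses only bounded operators and the trace pairing with a trace-class operator, so it transfers verbatim. Second route, if one wants to cite only the finite-dimensional statement: write $\tau=\sum_k t_k|e_k\rangle\langle e_k|$, take the finite-rank truncation $\tau_N$ with support in a finite-dimensional subspace, but the projections $\Pi_t$ need not preserve that subspace; so instead compress to $\cH_N=\mathrm{span}$ of $\tau_N$'s support together with enough of $\Pi_t$'s action is awkward. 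Thus I would prefer the first route, or alternatively the data-processing proof advertised in the appendix title: view the sequential measurement as a channel and compare the fidelity of $\tau$ with the post-measurement state, using that each step costs at most $\Tr[\tau\bar\Pi_t]$ in a suitable divergence, and convert back to $\Fail$ with constant $4$.

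Concretely, I expect the main obstacle to be the infinite-dimensional justification rather than the algebra. Once \eqref{eq:gao} is granted for density operators on separable $\cH$, multiplying through by $\Tr[\rho]$ gives $\Tr[\Pi_m\cdots\Pi_1\rho\Pi_1\cdots\Pi_m]=\Tr[\rho]\Succ(\tau)\ge\Tr[\rho](1-4\Loss(\tau))=\Tr[\rho]-4\sum_t\Tr[(\I-\Pi_t)\rho]$, completing the proof.
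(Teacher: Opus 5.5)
Your proposal is correct and coincides with the paper's proof. Both dispose of the case $\Tr[\rho]=0$, apply Gao's bound \eqref{eq:gao} to the normalized state $\rho/\Tr[\rho]$, and multiply back by $\Tr[\rho]$ using that both sides are linear in $\rho$. The paper leaves the infinite-dimensional point to its citations, and your first route suffices: by linearity, \eqref{eq:gao} is equivalent to the operator inequality $\I-\Pi_1\cdots\Pi_m\cdots\Pi_1\le 4\sum_{t}(\I-\Pi_t)$, which the O'Donnell--Venkateswaran argument of Appendix~\ref{sec:union_bound} proves for each pure state in any separable Hilbert space, so the truncation route you abandoned is not needed.
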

\begin{proof}
If \(\Tr[\rho]=0\), there is nothing to prove. 
Otherwise apply Gao's quantum union bound in \eqref{eq:gao} to the normalized state \(\frac{\rho}{\Tr[\rho]}\), and then multiply both sides by
\(\Tr[\rho]\).
\end{proof}

Later, O'Donnell and Venkateswaran provided a surprisingly elegant and simple proof of Gao's quantum union bound in \eqref{eq:gao}.
Specifically, the O'Donnell--Venkateswaran inequality \cite[Theorem 1.3]{OV22} states that
\begin{align} \label{eq:OV}
1
\leq
\sqrt{\Succ}\,{F(\rho_m,\rho)}
+\sqrt{\Fail}\,\sqrt{\Loss},
\end{align}
where $F(\rho,\sigma) \coloneqq \|\sqrt{\rho}\sqrt{\sigma}\|_1$ is Uhlmann's fidelity.\footnote{Notice that O'Donnell and Venkateswaran adopted the square fidelity notation $\|\sqrt{\rho}\sqrt{\sigma}\|_1^2$ in \cite{OV22}.}
Provided that $\Loss \leq 1$, \eqref{eq:OV} then implies
\begin{align} \label{eq:OV2}
\Fail\leq \frac{4\Loss}{(1+\Loss)^2}\leq 4\Loss.
\end{align}

The aim here is to view the O'Donnell--Venkateswaran inequality through the lens of data-processing inequalities, offering a familiar framework for the quantum information community.

We begin by introducing necessary notation.
Let $A,B\geq 0$ be non-orthogonal positive semidefinite operators. For $0<\alpha<1$, define the measured R\'enyi divergence as
\begin{align}
D_{\alpha}^{\mathds M}(A\Vert B)
&\coloneqq
\frac{1}{\alpha-1}\log Q_{\alpha}^{\mathds M}(A\Vert B)
\\
Q_{\alpha}^{\mathds M}(A\|B)
&\coloneqq
\inf_{\{M_z\}_z }
\sum_z \bigl(\Tr[M_z A]\bigr)^\alpha \bigl(\Tr[M_z B]\bigr)^{1-\alpha},
\end{align}
where the infimum is over all POVMs on the underlying space. Lastly, define
\begin{align}
L_{\alpha}(\ell)
:=
\begin{cases}
1, & 0<\alpha\leq \tfrac12,\\
(1+\ell)^{1-2\alpha}, & \tfrac12\leq \alpha<1.
\end{cases}
\end{align}

\begin{proposition}[Measured R\'enyi O'Donnell--Venkateswaran inequality]\label{prop:measure_union_bound}
For every $0<\alpha<1$,
\begin{align}\label{eq:main-ineq}
L_{\alpha}(\textnormal{\Loss})
\leq
\textnormal{\Succ}^{\alpha}Q_{\alpha}^{\mathds M}(\rho_m\Vert\rho)
+\textnormal{\Fail}^{\alpha}\textnormal{\Loss}^{1-\alpha}.
\end{align}
Equivalently, for $\textnormal{\Succ}>0$,
\begin{align}\label{eq:main-divergence}
L_{\alpha}(\textnormal{\Loss})
\leq
(1-\textnormal{\Fail})^{\alpha}\e^{(\alpha-1)D_{\alpha}^{\mathds M}(\rho_m\|\rho)}
+ \textnormal{\Fail}^{\alpha}\textnormal{\Loss}^{1-\alpha}.
\end{align}
\end{proposition}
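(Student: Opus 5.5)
The plan is to recast the O'Donnell--Venkateswaran argument as a data-processing statement for the measured R\'enyi quantity. First normalize $\Tr[\rho]=1$ and purify $\rho$ as $|\psi\rangle$ on $\cH\otimes\cH'$, with the projections acting on the first factor. Measured quantities only decrease under partial trace, so it should suffice to work with the pure state; I will check that the purification does not spoil the definition of $Q_\alpha^{\mathds M}(\rho_m\Vert\rho)$.

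The structural input is the telescoping identity
\begin{align}
\I = \Pi_m\cdots\Pi_1 + \sum_{t=1}^m \bar\Pi_t\Pi_{t-1}\cdots\Pi_1 = K_0 + \sum_{t=1}^m K_t .
\end{align}
It gives $|\psi\rangle = K_0|\psi\rangle + \sum_t K_t|\psi\rangle$ with $\sum_t\|K_t\psi\|^2=\Fail$ and $\|K_0\psi\|^2=\Succ$. Because $K_t=\bar\Pi_t K_t$, each cross term can be written as
\begin{align}
\langle\psi|K_t\psi\rangle = \langle\bar\Pi_t\psi|K_t\psi\rangle ,
\end{align}
and here $\sum_t\|\bar\Pi_t\psi\|^2=\Loss$. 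For $\alpha=\tfrac12$ this decomposition with Cauchy--Schwarz is exactly the O'Donnell--Venkateswaran inequality \eqref{eq:OV}.

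To reach general $\alpha$, I will replace Cauchy--Schwarz by a weighted Young/H\"older split. On the first term I plan to use the variational formula
\begin{align}
Q_\alpha^{\mathds M}(A\Vert B)=\inf_{\omega>0}\bigl\{\alpha\Tr[A\omega]+(1-\alpha)\Tr[B\omega^{\alpha/(\alpha-1)}]\bigr\},
\end{align}
valid for $\alpha\in(0,1)$ and equivalent to data processing under the optimal measurement. Fix $\omega>0$ and write
\begin{align}
|\langle\psi|K_0\psi\rangle|=|\langle\omega^{-c}\psi,\omega^{c}K_0\psi\rangle|
\end{align}
for a suitable exponent $c$. Young's inequality with exponents $1/\alpha$ and $1/(1-\alpha)$, followed by rescaling by $\Succ$, should bound the first term by $\Succ^{\alpha}$ times the bracket evaluated at $(\rho_m,\rho)$. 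Taking the infimum over $\omega$ then produces $\Succ^{\alpha}Q_\alpha^{\mathds M}(\rho_m\Vert\rho)$.

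For the sum over $t$ I would use the H\"older inequality $\sum_t x_ty_t\le(\sum x_t^{1/\alpha})^{\alpha}(\sum y_t^{1/(1-\alpha)})^{1-\alpha}$ on the vectors $\|K_t\psi\|$ and $\|\bar\Pi_t\psi\|$. The difficulty is that only their $\ell^2$ norms are controlled. When $\alpha\le\tfrac12$, monotonicity of $\ell^p$ norms should still give $\Fail^{\alpha}\Loss^{1-\alpha}$ up to bookkeeping, which would match $L_\alpha=1$. When $\alpha>\tfrac12$ the comparison goes the wrong way, and I expect this is where $L_\alpha(\Loss)=(1+\Loss)^{1-2\alpha}$ enters.

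My remedy is to homogenize the left-hand side. I would treat the whole decomposition as a two-block comparison between the vector $(K_0\psi,(K_t\psi)_t)$ and $(\psi,(\bar\Pi_t\psi)_t)$, whose total squared masses are $1$ and $1+\Loss$ respectively. A joint H\"older step on this block should then give
\begin{align}
1\le \bigl(1\bigr)^{\alpha}(1+\Loss)^{1-\alpha}\cdot(\text{normalized quantity}).
\end{align}
Comparing the resulting $\ell^{1/\alpha}$ and $\ell^2$ normalizations should produce exactly the factor $(1+\Loss)^{1-2\alpha}$ on the left.

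The main obstacle is this second step: making the Young split on the $K_0$ term and the H\"older step on the $t$-sum compatible. Both must use the same weights for the final inequality to take the additive form \eqref{eq:main-ineq} rather than a product, and this must be verified carefully for $\alpha>\tfrac12$. If a direct split fails, the fallback is a genuine data-processing argument. I would construct two positive operators on $\mathbb C^{m+1}\otimes\cH$, namely the isometric image of $\rho$ under the instrument $\{K_t\}$ and a reference built from $\rho_m$ and the $\bar\Pi_t\rho\bar\Pi_t$. I would then apply monotonicity of $Q_\alpha^{\mathds M}$ under the binary measurement that separates the outcome $t=0$ from $t\ge1$, which reads off both terms of \eqref{eq:main-ineq} at once.

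Finally, \eqref{eq:main-divergence} follows by substituting $\Succ=1-\Fail$ and $Q_\alpha^{\mathds M}=\e^{(\alpha-1)D_\alpha^{\mathds M}}$.
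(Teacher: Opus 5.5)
\textbf{The main route fails for $\alpha\neq\frac12$.} You split $1=\langle\psi|K_0\psi\rangle+\sum_{t\ge1}\langle\bar\Pi_t\psi|K_t\psi\rangle$ and bound each piece by the corresponding term on the right. The first of these termwise bounds is false. It reads $|\langle\psi|K_0\psi\rangle|\le\Succ^{\alpha}Q_\alpha^{\mathds{M}}(\rho_m\Vert\rho)$, which is what your Young step followed by $\inf_\omega$ asserts. To see that it fails, take $\rho=|\psi\rangle\langle\psi|$ pure, write $K_0\psi=\sqrt{\Succ}\,\phi$ with $\|\phi\|=1$, and set $c=|\langle\psi|\phi\rangle|^2$. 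The left side is $\sqrt{\Succ\,c}$. For pure states $Q_\alpha^{\mathds{M}}(\rho_m\Vert\rho)=c^{\max\{\alpha,1-\alpha\}}$, so you would need $\sqrt{\Succ\,c}\le\Succ^{\alpha}c^{\max\{\alpha,1-\alpha\}}$. This fails whenever $0<\Succ\,c<1$ if $\alpha>\frac12$, and whenever $\Succ>c>0$ if $\alpha<\frac12$. Both happen for $\psi=e_1\in\mathbb{R}^2$ with $\Pi_1,\Pi_2$ the rank-one projections onto the lines at angles $45^\circ$ and $80^\circ$, where $\Succ\approx0.34$ and $c\approx0.03$.

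The $t$-sum has the same defect. It is bilinear in $(\|K_t\psi\|)_t$ and $(\|\bar\Pi_t\psi\|)_t$, whereas $\Fail^{\alpha}\Loss^{1-\alpha}$ has degrees $2\alpha$ and $2-2\alpha$ in their $\ell^2$ norms. H\"older-type estimates therefore give expressions such as $\sqrt{\Fail\,\Loss}$. For $\alpha<\frac12$, the bound $\sqrt{\Fail\,\Loss}\le\Fail^{\alpha}\Loss^{1-\alpha}$ holds only if $\Fail\le\Loss$, which can fail because the constant $4$ in Gao's bound is sharp. So the ``homogenization'' you leave at ``should give'' is not bookkeeping; it is the whole proof, and the two pieces cannot be estimated separately.

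\textbf{The missing idea.} Read $1$ as an overlap and evaluate the measured quantity on the joint pure pair before splitting anything. You already named the right vectors: $|\Phi\rangle=\sum_{t=0}^m|t\rangle_{\mathsf{L}}\otimes K_t|\psi\rangle$ and $|\Omega\rangle=|0\rangle_{\mathsf{L}}\otimes|\psi\rangle+\sum_{t\ge1}|t\rangle_{\mathsf{L}}\otimes\bar\Pi_t|\psi\rangle$. They satisfy $\langle\Omega|\Phi\rangle=1$, $\|\Phi\|^2=1$ and $\|\Omega\|^2=1+\Loss$. For such a pair, $Q_\alpha^{\mathds{M}}(|\Phi\rangle\langle\Phi|\,\Vert\,|\Omega\rangle\langle\Omega|)=\|\Omega\|^{2(1-\alpha)}\hat c^{\,\max\{\alpha,1-\alpha\}}$ with $\hat c=|\langle\Phi|\Omega\rangle|^2/\|\Omega\|^2=(1+\Loss)^{-1}$. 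This is exactly $L_\alpha(\Loss)$. Both branches of $L_\alpha$ come from this two-dimensional pure-state computation, and your proposal never performs it.

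Next, apply data processing for $Q_\alpha^{\mathds{M}}$ to the following channel: dephase $\mathsf{L}$, trace out the purifying system, and merge all $t\ge1$ into one classical flag while keeping the quantum register of the $t=0$ block. It maps the pair to $\Succ\,\rho_m\oplus\Fail$ and $\rho\oplus\Loss$. Additivity of $Q_\alpha^{\mathds{M}}$ over direct sums then gives the right-hand side.

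\textbf{Where your fallback goes wrong.}
\begin{itemize}
\item The reference's $t=0$ block must be $\rho$, not $\rho_m$; $\rho_m$ appears only in the first argument, after processing.
\item The reference must be coherent across the label, e.g.\ $|\Omega\rangle\langle\Omega|$, or $W\rho W^\dagger$ with $W=|0\rangle\otimes\I+\sum_{t\ge1}|t\rangle\otimes\bar\Pi_t$. A block operator assembled from $\bar\Pi_t\rho\bar\Pi_t$ destroys the identity $\langle\Omega|\Phi\rangle=1$ that produces the left-hand side.
\item A genuine binary measurement discards the $t=0$ register. It yields only $\Succ^{\alpha}$ in place of $\Succ^{\alpha}Q_\alpha^{\mathds{M}}(\rho_m\Vert\rho)$.
\end{itemize}
If you want to keep your variational formula, apply it to $(\Phi,\Omega)$ with test operators $|0\rangle\langle 0|\otimes\omega\otimes\I+w\sum_{t\ge1}|t\rangle\langle t|\otimes\I$. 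The infimum then splits into the two terms.
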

At $\alpha=\nicefrac12$, 
$Q_{1/2}^{\mathds M}(\rho_m\|\rho)={F(\rho_m,\rho)}$.
This becomes
\begin{align}
1
\leq
\sqrt{\textnormal{\Succ}}\,{F(\rho_m,\rho)}
+\sqrt{\textnormal{\Fail}}\,\sqrt{\textnormal{\Loss}},
\end{align}
which is exactly the O'Donnell--Venkateswaran inequality \eqref{eq:OV} \cite[Theorem 1.3]{OV22}.

\begin{proof}
Let $|\psi\rangle$ be a purification of $\rho$ with a reference system $\mathsf{R}$, and let $\mathsf{L}\cong\mathds{C}^{m+1}$ be a label register with orthonormal basis $\{|t\rangle_{\mathsf{L}}:0\leq t\leq m\}$.
Consider the vectors
\begin{align}
|\Phi\rangle &\coloneqq \sum_{t=0}^m |t\rangle_{\mathsf{L}}\otimes (K_t\otimes \I_\mathsf{R})|\psi\rangle,
\\
|\Omega\rangle &\coloneqq |0\rangle_{\mathsf{L}}\otimes |\psi\rangle + \sum_{t=1}^m |t\rangle_{\mathsf{L}}\otimes (\bar\Pi_t\otimes \I_{\mathsf{R}})|\psi\rangle.
\end{align}
Because the labels are orthogonal, we have
\begin{align}
\langle\Omega|\Phi\rangle
=
\langle\psi|\left( K_0+\sum_{t=1}^m \bar\Pi_t K_t \right)|\psi\rangle.
\end{align}
Since $\bar\Pi_t$ is a projection, $\bar\Pi_t K_t = K_t$ for every $t\geq 1$. Moreover,
\begin{align}\label{eq:sumK}
\sum_{t=0}^m K_t
=
\Pi_m\cdots\Pi_1+\sum_{t=1}^m \bar\Pi_t\Pi_{t-1}\cdots\Pi_1
=
\I,
\end{align}
obtained by repeatedly expanding $\I=\Pi_t+\bar\Pi_t$. Hence,
\begin{align}
\langle\Omega|\Phi\rangle=1.
\end{align}
On the other hand,
\begin{align}
\|\Phi\|_2^2
=
\sum_{t=0}^m \langle\psi|K_t^{\dagger}K_t|\psi\rangle
=
\langle\psi|\psi\rangle
=
1,
\end{align}
using $\sum_{t=0}^m K_t^{\dagger}K_t=\I$, and
\begin{align}
\|\Omega\|_2^2
=
1+\sum_{t=1}^m \langle\psi|\bar\Pi_t|\psi\rangle
=
1+\Loss.
\end{align}
Hence, a direct calculation shows that
\begin{align}\label{eq:left-pure}
Q_{\alpha}^{\mathds M}\bigl(|\Phi\rangle\!\langle\Phi|\,\|\,|\Omega\rangle\!\langle\Omega|\bigr)
&=
\begin{cases}
|\langle \Phi|\Omega\rangle|^{2(1-\alpha)}, & 0<\alpha\leq \tfrac12\\
\|\Omega\|_2^{2-4\alpha}\,|\langle \Phi|\Omega\rangle|^{2\alpha}, & \tfrac12\leq \alpha<1
\end{cases}
\\
&=
L_{\alpha}(\Loss).
\end{align}

Now let $\mathscr{D}_{\mathsf{L}}$ denote complete dephasing in the basis $\{|t\rangle_{\mathsf{L}}\}_{t=0}^m$, and define a channel
\begin{align}
\Lambda(X_0\oplus X_1\oplus\cdots\oplus X_m)
\coloneqq
X_0\oplus \left(\sum_{t=1}^m \Tr[X_t]\right)|1\rangle\!\langle 1|.
\end{align}
Moreover, let
\begin{align}
\Gamma\coloneqq \Lambda\circ(\mathscr{D}_{\mathsf{L}}\otimes \Tr_{\mathsf{R}}).
\end{align}
A direct calculation gives
\begin{align}
\Gamma\bigl(|\Phi\rangle\!\langle\Phi|\bigr)
&=
K_0\rho K_0^{\dagger}\oplus\left(\sum_{t=1}^m \Tr[K_t\rho K_t^{\dagger}]\right)|1\rangle\!\langle 1|\\
&=
\Succ\cdot\rho_m\oplus \Fail\,|1\rangle\!\langle 1|,
\end{align}
while similarly
\begin{align}
\Gamma\bigl(|\Omega\rangle\!\langle\Omega|\bigr)
&=
\rho\oplus\left(\sum_{t=1}^m \Tr[\bar\Pi_t\rho\bar\Pi_t]\right)|1\rangle\!\langle 1|\\
&=
\rho\oplus \Loss\,|1\rangle\!\langle 1|,
\end{align}
since $\bar\Pi_t^2=\bar\Pi_t$.

Applying data-processing inequality to $\Gamma$, and then using the direct-sum structure of $Q_{\alpha}^{\mathds M}$, we obtain
\begin{align}
L_{\alpha}(\Loss)
&=Q_{\alpha}^{\mathds M}\bigl(|\Phi\rangle\!\langle\Phi|\,\|\,|\Omega\rangle\!\langle\Omega|\bigr)\\
&\leq Q_{\alpha}^{\mathds M}\bigl(\Succ\cdot\rho_m\oplus \Fail\,|1\rangle\!\langle 1|\,\|\,\rho\oplus \Loss\,|1\rangle\!\langle 1|\bigr)\\
&=\Succ^{\alpha}Q_{\alpha}^{\mathds M}(\rho_m\|\rho)+\Fail^{\alpha}\Loss^{1-\alpha}Q_{\alpha}^{\mathds M}(|1\rangle\!\langle 1|\,\|\,|1\rangle\!\langle 1|)\\
&=\Succ^{\alpha}Q_{\alpha}^{\mathds M}(\rho_m\|\rho)+\Fail^{\alpha}\Loss^{1-\alpha},
\end{align}
which is \eqref{eq:main-ineq}. 
The divergence form \eqref{eq:main-divergence} is the identity
\begin{align}
Q_{\alpha}^{\mathds M}(\rho_m\|\rho) = \e^{(\alpha-1)D_{\alpha}^{\mathds M}(\rho_m\|\rho)}.
\end{align}
The proof is completed.
\end{proof}

\begin{remark}[Optimality]
    As for the quantum union bound, the factor $4$ in Gao's result, \eqref{eq:gao}, is already sharp, as noted in \cite[\S 3.6]{OV22}.
    Although the measured R\'enyi O'Donnell--Venkateswaran inequality (Proposition~\ref{prop:measure_union_bound}) could be tighter than the original O'Donnell--Venkateswaran inequality \eqref{eq:OV} \cite[Theorem 1.3]{OV22} for a certain range of parameters, the inequality of the form \eqref{eq:OV2} is not improved via Proposition~\ref{prop:measure_union_bound}.
\end{remark}

\section{The Nussbaum--Szkoła Lower Bound} \label{sec:NS}

For positive semidefinite trace-class operators $A,B\geq 0$ with the associated \NS distributions $(P,Q)$ introduced in Definition~\ref{defn:NS}, Nussbaum and Szko{\l}a \cite[(12)]{NS09}, \cite[Proposition 2]{ANS+08} proved the following lower bound to the minimum error $\Tr[A\wedge B]$, which plays an important role in various converse analyses of quantum information-theoretic tasks:
\begin{align} \label{eq:NS_lower}
	\frac12 \Tr[P \wedge Q] 
	\leq \Tr[A \wedge B].
\end{align}
The following result revisits the \NS lower bound in terms of the Petz--\NS harmonic mean given in Definition~\ref{defn:harmonic_mean}.
\begin{proposition}[\NS lower bound, revised] \label{prop:NS_lower}
	For all trace-class operators $A, B \geq 0$ with the \NS distributions $(P,Q)$ in \eqref{eq:defn:NS} and for all parameters $s \in (0,1)$, 
	\begin{align}
		&s\wedge (1-s) \cdot \Tr[P\wedge Q]
		\leq s\wedge (1-s) \cdot \Tr[P\mathbin{!}_s Q] 
		\leq \Tr[A \wedge B].
	\end{align}
\end{proposition}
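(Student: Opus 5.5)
The first inequality is scalar and entrywise. Since $P$ and $Q$ are commuting diagonal operators, $P\mathbin{!}_s Q$ and $P\wedge Q$ are diagonal with entries $x\mathbin{!}_s y$ and $x\wedge y$. We have
\[
x\mathbin{!}_s y=\frac{xy}{sx+(1-s)y}\ \ge\ \frac{xy}{x\vee y}=x\wedge y ,
\]
because $sx+(1-s)y\le x\vee y$. This gives $\Tr[P\wedge Q]\le\Tr[P\mathbin{!}_s Q]$. Multiplying by $s\wedge(1-s)$ yields the first claimed inequality.

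For the second inequality, take an optimal Helstrom projector $T=\{A>B\}$, so that
\[
\Tr[A\wedge B]=\Tr[A(\I-T)]+\Tr[BT].
\]
Write both terms in the bases $\{u_i\}$ and $\{v_j\}$:
\[
\Tr[A(\I-T)]=\sum_{i,j}a_i\,\langle u_i|(\I-T)|v_j\rangle\langle v_j|u_i\rangle .
\]
Following the Nussbaum--Szko{\l}a argument, expand
\[
\Tr[A(\I-T)]+\Tr[BT]=\sum_{i,j}\Big(a_i\,|\langle u_i|(\I-T)|v_j\rangle|^2+b_j\,|\langle u_i|T|v_j\rangle|^2\Big)+\text{(cross terms)}.
\]
More cleanly, I would use $\Tr[A(\I-T)]=\Tr[A^{1/2}(\I-T)^2A^{1/2}]$ together with a Parseval decomposition over $v_j$, and similarly for $B$ over $u_i$. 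This gives
\[
\Tr[A(\I-T)]+\Tr[BT]\ \ge\ \sum_{i,j}\Big(a_i\,|\langle u_i|(\I-T)|v_j\rangle|^2+b_j\,|\langle u_i|T|v_j\rangle|^2\Big).
\]
Put $\alpha_{ij}=\langle u_i|(\I-T)|v_j\rangle$ and $\beta_{ij}=\langle u_i|T|v_j\rangle$. Then $\alpha_{ij}+\beta_{ij}=\langle u_i|v_j\rangle$.

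The key scalar step is to minimize $a|\alpha|^2+b|\beta|^2$ subject to $\alpha+\beta=c$. The minimum is $\frac{ab}{a+b}|c|^2$. This is where the weight $s$ enters. Since
\[
\frac{ab}{a+b}\ \ge\ \big(s\wedge(1-s)\big)\,\frac{ab}{sa+(1-s)b},
\]
which holds because $sa+(1-s)b\ge (s\wedge(1-s))(a+b)$, summing over $(i,j)$ gives
\[
\Tr[A\wedge B]\ \ge\ \big(s\wedge(1-s)\big)\sum_{i,j}\frac{a_ib_j}{sa_i+(1-s)b_j}\,|\langle u_i|v_j\rangle|^2=\big(s\wedge(1-s)\big)\HM_s(A,B),
\]
by Definition~\ref{defn:harmonic_mean}.

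The main obstacle is justifying the termwise lower bound. I would avoid cross terms by using the positive operators $(\I-T)A(\I-T)$ and $TBT$ rather than cross expansions. Concretely,
\[
\Tr[A(\I-T)]=\Tr[(\I-T)A(\I-T)]=\sum_{i,j}a_i\,|\langle v_j|(\I-T)|u_i\rangle|^2,
\]
where $\langle v_j|(\I-T)|u_i\rangle$ is taken in the $\{v_j\}$ basis completion. This is an exact identity, not an inequality, provided $\{v_j\}$ spans the whole space, which holds after adjoining a basis of $\ker B$ with $b_j=0$. The analogous identity holds for $TBT$ with the $\{u_i\}$ basis, including $\ker A$ with $a_i=0$. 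The pairs with zero eigenvalues contribute nothing to $\HM_s$, consistent with the convention in the footnote. In infinite dimensions, the double sums are of nonnegative terms, so Tonelli-type interchanges are justified. Using the trace-class property, an optimal projector $T$ exists as the spectral projection of $A-B$.
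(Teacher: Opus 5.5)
Your proof is correct and is essentially the paper's argument. The paper plugs the Helstrom projector $\{A\geq B\}$ into the infimum representation of $\HM_s$ in Proposition~\ref{prop:representations}(iii) and then bounds the weights $\frac{s\wedge(1-s)}{1-s},\frac{s\wedge(1-s)}{s}\leq 1$; you instead expand the same quadratic form in the bases $\{u_i\},\{v_j\}$ (the Nussbaum--Szko{\l}a-style computation the paper's historical remark alludes to), minimize the unweighted form to reach $\frac12\HM_{1/2}(A,B)$, and do the weight comparison scalar-wise, with the expansion being an exact identity with no cross terms, so your earlier ``$\geq$'' and ``(cross terms)'' lines should simply be dropped.
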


\begin{proof}
	Recall the infimum representation \eqref{eq:infimum_representation} in Proposition~\ref{prop:representations} and the cyclic property of the trace:
	\begin{align}
		\Tr[P\mathbin{!}_s Q] = \inf_X \Tr\left[ \frac{1}{1-s} A (\I-X)(\I-X)^\dagger + \frac{1}{s} B X^\dagger X \right]. 
	\end{align}
	Choosing the Holevo--Helstrom test $X_0 := \left\{ A \geq B \right\}$ (i.e., the spectral projection of $A-B$ onto its nonnegative subspace) as the feasible solution and noting that $(\I-X_0)(\I-X_0)^\dagger = \I-X_0 \geq 0$ and $X_0^\dagger X_0 = X_0^2 = X_0 \geq 0$, we have
	\begin{align}
		s\wedge (1-s) \cdot \Tr[P\mathbin{!}_s Q]
		&\leq \frac{s\wedge (1-s)}{1-s} \Tr[A(\I-X_0)] + \frac{s\wedge (1-s)}{s} \Tr[B X_0]
		\\
		&\leq \Tr[A(\I-X_0)] + \Tr[B X_0]
		\\
		&= \Tr[A\wedge B],
	\end{align}
	proving the second inequality. 
	The first inequality directly follows from the scalar inequality $ x\wedge y\leq x\mathbin{!}_{s} y$, $x,y\geq 0$, $s \in (0,1)$.
\end{proof}

\begin{remark}[Historical notes]
    The endpoint inequality $s\wedge (1-s) \cdot \Tr[P\wedge Q] \leq \Tr[A \wedge B]$ with $s = \nicefrac12$ corresponds to the original \NS lower bound \eqref{eq:NS_lower}.
    In fact, Nussbaum and Szko{\l}a's original proof with a careful derivation (e.g., see \cite[p.~1047]{NS09}, \cite[(19)]{ANS+08}) could also yield the intermediate harmonic mean $s\wedge (1-s) \cdot \Tr[P\mathbin{!}_s Q]$.
    On the other hand, the proof of Proposition~\ref{prop:NS_lower} proceeds with another route from the infimum representation of the harmonic mean. 
    
    We also notice that \cite[Proposition 3.2]{JOP+12}, \cite[Lemma 2]{DPR16} actually derived the middle inequality $s\wedge (1-s) \cdot \Tr[P\mathbin{!}_s Q] \leq \Tr[A\wedge B]$ for the case $s = \nicefrac12$.
\end{remark}

\begin{remark}[Optimality]
The factor $\nicefrac12$ in \eqref{eq:NS_lower} is in general sharp.
For example, for $0 < \gamma \leq 1$, consider $\rho = \gamma |0\rangle \langle 0 | + (1-\gamma) |2\rangle \langle 2|$ and $\sigma = |\phi_{\epsilon}\rangle \langle \phi_{\epsilon}|$, $|\phi_{\epsilon}\rangle = \sqrt{\epsilon} |0\rangle + \sqrt{1-\epsilon} |1\rangle$, 
or, for $\gamma \geq 1$, consider
$\rho = |\phi_{\epsilon}\rangle \langle \phi_{\epsilon}|$ and $\sigma = \gamma^{-1} |0\rangle \langle 0 | + (1-\gamma^{-1}) |2\rangle \langle 2|$.
A direct calculation shows that
\begin{align}
    \frac{\Tr[\rho \wedge \gamma \sigma ]}{ \Tr[P\wedge \gamma Q] } 
    =\frac{1-\sqrt{1-\epsilon}}{\epsilon}
    \overset{\epsilon \searrow 0}{ \longrightarrow }
    \frac12.
\end{align}

On the other hand, the harmonic-mean lower bound $s\wedge (1-s) \cdot \Tr[P\mathbin{!}_s Q] \leq \Tr[A \wedge B]$ in Proposition~\ref{prop:NS_lower} is already sharp in the commuting case.
\end{remark}

\section{Representations for the Harmonic Means} \label{sec:representation}

\begin{proposition}[Equivalent representations] \label{prop:representations}
For positive semidefinite trace-class operators $A, B \ge 0$ and $s \in (0,1)$, the Petz--\NS harmonic mean $\HM_s(A,B)$ introduced in Definition~\ref{defn:harmonic_mean} has the following equivalent forms.
\begin{enumerate}[(i)]
    \item {Integral form:}
    Let $W$ be the solution of the Sylvester equation: $s A W + (1-s) W B = AB$ satisfying $W = A^0 W B^0$. Then,
    \begin{align}
    \HM_s(A,B)= \Tr[W] =
    \langle A, \left(s \mathscr{L}_A + (1-s) \mathscr{R}_B \right)^{-1}( B )\rangle
    =
    \int_{0}^{\infty} \Tr \left[  A \e^{-t s A} B \e^{-t (1-s) B} \right] \d t,
    \end{align}
    where $\mathscr{L}_A(X) \coloneqq AX$ and $\mathscr{R}_B(X) \coloneqq XB$ are the left and right multiplication, respectively.
    
    \item Supremum representation:
    \begin{align} \label{eq:supremum_representation}
    \HM_s(A,B) 
    = \sup_{Y = A^0 Y B^0} \left\{ \Tr[Y+Y^\dagger] - (1-s) \Tr[Y^\dagger A^{-1} Y] - s\Tr[ Y B^{-1}Y^{\dagger} ] \right\},
    \end{align}
    where $A^{-1}$ and $B^{-1}$ mean the pseudoinverses of $A$ and $B$, respectively.
    
    \item Infimum representation: 
    \begin{align} \label{eq:infimum_representation}
    \HM_s(A,B) = \inf_{X} \left\{ \frac{1}{1-s}\Tr[(\I-X)^{\dagger}A(\I-X)] + \frac{1}{s}\Tr[XBX^\dagger] \right\}.
    \end{align}
\end{enumerate}
\end{proposition}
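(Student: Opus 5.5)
The plan is to reduce everything to a computation in the joint ``matrix units'' $|u_i\rangle\langle v_j|$ built from the spectral decompositions $A=\sum_i a_i|u_i\rangle\langle u_i|$ and $B=\sum_j b_j|v_j\rangle\langle v_j|$. I would expand an operator $X$ (or $Y$, $W$) as $X=\sum_{i,j} x_{ij}\,|u_i\rangle\langle v_j|$, with $x_{ij}=\langle u_i|X|v_j\rangle$. In this basis the superoperator $s\mathscr{L}_A+(1-s)\mathscr{R}_B$ is diagonal with eigenvalues $s a_i+(1-s)b_j$. This is the standard diagonalization of the relative modular operator, and it is the same mechanism by which the \NS distributions encode Petz quantities in Fact~\ref{fact:NS}.

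\textbf{Integral form (i).} On the block $A^0(\cdot)B^0$ all eigenvalues $s a_i+(1-s)b_j$ are strictly positive. Hence the Sylvester equation has the unique solution
\begin{align}
w_{ij}=\frac{a_i b_j}{s a_i+(1-s)b_j}\langle u_i|v_j\rangle .
\end{align}
This gives $\Tr[W]=\sum_{i,j} w_{ij}\langle v_j|u_i\rangle=\HM_s(A,B)$. The pairing form $\langle A,(s\mathscr{L}_A+(1-s)\mathscr{R}_B)^{-1}(B)\rangle$ is the same sum written differently. The integral form follows from $\int_0^\infty \e^{-t(sa_i+(1-s)b_j)}\,\d t=(sa_i+(1-s)b_j)^{-1}$, after inserting spectral decompositions into $\Tr[A\e^{-tsA}B\e^{-t(1-s)B}]$ and exchanging sum and integral by Tonelli, since all terms are nonnegative.

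\textbf{Supremum form (ii).} I would write $Y=\sum_{a_i,b_j>0} y_{ij}|u_i\rangle\langle v_j|$. The objective then decouples into scalar terms
\begin{align}
2\,\mathrm{Re}\big(y_{ij}\langle v_j|u_i\rangle\big)-\Big(\tfrac{1-s}{a_i}+\tfrac{s}{b_j}\Big)|y_{ij}|^2 .
\end{align}
Each term is a concave quadratic maximized at $y_{ij}=\big(\tfrac{1-s}{a_i}+\tfrac{s}{b_j}\big)^{-1}\langle u_i|v_j\rangle$, with value $\frac{a_ib_j}{sa_i+(1-s)b_j}|\langle u_i|v_j\rangle|^2$. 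The optimizer is exactly $W$ from (i), so the supremum equals $\HM_s(A,B)$. Pairs with $a_i=0$ or $b_j=0$ contribute zero on both sides.

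\textbf{Infimum form (iii).} Here I would use the same expansion, $X=\sum_{i,j} x_{ij}|u_i\rangle\langle v_j|$ (allowing all basis pairs, with $\I=\sum_{i,j}\langle u_i|v_j\rangle|u_i\rangle\langle v_j|$ in the weak sense). The objective becomes
\begin{align}
\sum_{i,j}\Big[\tfrac{a_i}{1-s}\big|\langle u_i|v_j\rangle-x_{ij}\big|^2+\tfrac{b_j}{s}|x_{ij}|^2\Big].
\end{align}
Minimizing pointwise gives the scalar weighted harmonic value $\frac{a_ib_j}{sa_i+(1-s)b_j}|\langle u_i|v_j\rangle|^2$. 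Alternatively, (iii) is the Lagrange dual of (ii), obtained by completing the square. The infimum is attained in the limit by truncations of the optimal $X=(1-s)^{-1}(\ldots)$, e.g.\ $x_{ij}=\frac{s a_i}{sa_i+(1-s)b_j}\langle u_i|v_j\rangle$.

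\textbf{Main obstacle.} I expect the main obstacle to be infinite-dimensional rigor rather than the algebra. One must check that the optimal $W$, $Y$, $X$ define bounded (respectively Hilbert--Schmidt) operators, and that the traces are well defined and unaffected by the reordering of the double sums. Boundedness is fine because $|w_{ij}|\le (a_i\wedge b_j)/\min(s,1-s)\cdot|\langle u_i|v_j\rangle|$, so $W$ is dominated in the sense needed. For (iii), the infimum must be taken over bounded $X$, so I would approximate by finite truncations and use monotone convergence of nonnegative sums. Kernel terms are handled by the pseudoinverse convention and the $\epsilon$-regularization of Definition~\ref{defn:harmonic_mean}.
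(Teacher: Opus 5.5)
Your proposal is correct and is essentially the paper's proof. Both diagonalize $s\mathscr{L}_A+(1-s)\mathscr{R}_B$ on the matrix units $|u_i\rangle\langle v_j|$, solve the Sylvester equation coefficientwise, and reduce (ii) and (iii) to the same per-pair scalar quadratic problems with the same optimizers. The paper gets the integral form from the semigroup identity $\e^{-t(s\mathscr{L}_A+(1-s)\mathscr{R}_B)}(X)=\e^{-tsA}X\e^{-t(1-s)B}$ rather than by termwise Tonelli, which is cosmetic.

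Your rigor remarks go beyond the paper's one-line appeal to finite-rank approximations, with two small refinements. State the $W$ estimate as the Hilbert--Schmidt bound $\sum_{i,j}|w_{ij}|^2\le C\,\Tr[A^2]$, since entrywise domination by itself does not give boundedness. Also note that truncation in (iii) is genuinely necessary, because the coefficientwise minimizer $x_{ij}=\frac{sa_i}{sa_i+(1-s)b_j}\langle u_i|v_j\rangle$ need not define a bounded operator.
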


\begin{remark}[Comparisons]
The Kubo--Ando harmonic operator mean \eqref{eq:KA_harmonic_mean} also admits the variational representations \cite[\S 1 \& 4]{Bha09}:
\begin{align}
    A \mathbin{!}_s B 
    &= \sup_{Y = (A^0 B^0)^{\infty} Y (A^0 B^0)^{\infty} } \left\{ Y+Y^\dagger - (1-s) Y^\dagger A^{-1} Y - s Y^\dagger B^{-1} Y \right\}
    \\
    &= \inf_{X} \left\{ \frac{1}{1-s}(\I-X)^{\dagger}A(\I-X) + \frac{1}{s}X^\dagger BX \right\},
\end{align}
where the optimizations are with respect to the L\"owner partial order, the pseudoinverses are employed, and $(A^0 B^0)^{\infty}$ means the projection onto the common support of $A$ and $B$ or the zero projector if there is none.
Notice that the differences compared to \eqref{eq:supremum_representation} and \eqref{eq:infimum_representation} are the support constraint and the conjugate transpose of the variable.
\end{remark}

\begin{remark}[BKM metric]
The Petz--\NS harmonic mean $\HM_s(A,B)$ acts as the fundamental resolvent slice of the Bogoliubov--Kubo--Mori (BKM) geometry \cite{Kubo1957, Mori1958, Bogoliubov1961}. 
Indeed, the integral is given by
\begin{align}
\int_0^1 \HM_s(A,B) \, \d s
&= \sum_{i,j} a_i b_j \left( \frac{\ln a_i - \ln b_j}{a_i - b_j} \right) |\langle u_i | v_j \rangle|^2
= \int_0^\infty \Tr\left[ A \left( A + t \I \right)^{-1} B \left( B + t \I \right)^{-1} \right]\d t,
\end{align}
which is the so-called relative BKM inner product $\int_0^\infty \Tr\big[ (\cdot)^{\dagger} \left( A + t \I \right)^{-1} (\cdot) \left( B + t \I \right)^{-1} \big]\d t$ between $A^{\nicefrac12} B^{\nicefrac12}$ and $A^{\nicefrac12} B^{\nicefrac12}$.
\end{remark}

\begin{proof}
We prove the three representations by working in the eigenbases of $A$ and $B$.
Write
\begin{align}
    A=\sum_i a_i |u_i\rangle\langle u_i|,\qquad
    B=\sum_j b_j |v_j\rangle\langle v_j|,
\end{align}
and set $c_{ij}\coloneqq \langle u_i|v_j\rangle$.
It suffices to consider the pairs $(i,j)$ with $a_i b_j>0$; the other summands vanish by the convention in Definition~\ref{defn:harmonic_mean}, and the pseudoinverses simply restrict all expressions to the supports of $A$ and $B$.

\begin{enumerate}[(i)]
    \item Let
    \[
        \mathcal{M}_s \coloneqq sL_A+(1-s)R_B,
        \qquad L_A(X)=AX,\quad R_B(X)=XB.
    \]
    Write
    \[
        A=\sum_i a_i |u_i\rangle\langle u_i|,
        \qquad
        B=\sum_j b_j |v_j\rangle\langle v_j|,
        \qquad
        c_{ij}\coloneqq \langle u_i|v_j\rangle .
    \]
    The rank-one operators $|u_i\rangle\langle v_j|$ diagonalize $\mathcal{M}_s$, since
    \[
        \mathcal{M}_s(|u_i\rangle\langle v_j|)
        =
        \left(sa_i+(1-s)b_j\right)|u_i\rangle\langle v_j|.
    \]
    Moreover,
    \[
        AB
        =
        \sum_{i,j} a_i b_j c_{ij}|u_i\rangle\langle v_j|.
    \]
    Hence the support-selected solution of
    \[
        sAW+(1-s)WB=AB,
        \qquad W=A^0WB^0,
    \]
    is
    \[
        W
        =
        \mathcal{M}_s^{-1}(AB)
        =
        \sum_{i,j}
        \frac{a_i b_j}{sa_i+(1-s)b_j}
        c_{ij}|u_i\rangle\langle v_j|.
    \]
    Taking the trace gives
    \[
        \Tr[W]
        =
        \sum_{i,j}
        \frac{a_i b_j}{sa_i+(1-s)b_j}
        |c_{ij}|^2
        =
        \HM_s(A,B).
    \]
    Since $L_A$ and $R_B$ commute,
    \[
        \e^{-t\mathcal{M}_s}(X)
        =
        \e^{-tsA}X\e^{-t(1-s)B}.
    \]
    Therefore
    \[
        W
        =
        \int_0^\infty \e^{-t\mathcal{M}_s}(AB)\,\d t
        =
        \int_0^\infty
        \e^{-tsA}AB\e^{-t(1-s)B}\,\d t.
    \]
    Taking the trace gives
    \[
        \HM_s(A,B)
        =
        \Tr[W]
        =
        \int_0^\infty
        \Tr\left[A\e^{-tsA}B\e^{-t(1-s)B}\right]\d t.
    \]

    \item We prove the supremum representation by the same diagonalization.  Let
    \[
        Y=\sum_{i,j} y_{ij}|u_i\rangle\langle v_j|,
        \qquad
        y_{ij}=0 \quad\text{whenever } a_i b_j=0.
    \]
    This is exactly the support constraint $Y=A^0YB^0$.
    Then
    \[
        \Tr[Y+Y^\dagger]
        =
        2\Re\sum_{i,j} y_{ij}\overline{c_{ij}},
    \]
    while
    \[
        \Tr[Y^\dagger A^{-1}Y]
        =
        \sum_{i,j}\frac{|y_{ij}|^2}{a_i},
        \qquad
        \Tr[YB^{-1}Y^\dagger]
        =
        \sum_{i,j}\frac{|y_{ij}|^2}{b_j}.
    \]
    Hence the objective function in the supremum formula becomes
    \[
        \sum_{i,j}
        \left[
        2\Re\left(y_{ij}\overline{c_{ij}}\right)
        -
        \left(
        \frac{1-s}{a_i}
        +
        \frac{s}{b_j}
        \right)
        |y_{ij}|^2
        \right].
    \]
    For each pair $(i,j)$, the scalar identity
    \[
        \sup_{y\in\mathbb{C}}
        \left\{
        2\Re(y\overline{c})
        -
        \alpha |y|^2
        \right\}
        =
        \frac{|c|^2}{\alpha},
        \qquad \alpha>0,
    \]
    gives, with
    \[
        \alpha
        =
        \frac{1-s}{a_i}+\frac{s}{b_j},
    \]
    the value
    \[
        \frac{|c_{ij}|^2}{\frac{1-s}{a_i}+\frac{s}{b_j}}
        =
        \frac{a_i b_j}{sa_i+(1-s)b_j}|c_{ij}|^2.
    \]
    Summing over $i,j$ gives $\HM_s(A,B)$.
    The maximizing coefficient is
    \[
        y_{ij}
        =
        \frac{c_{ij}}{\frac{1-s}{a_i}+\frac{s}{b_j}}
        =
        \frac{a_i b_j}{sa_i+(1-s)b_j}c_{ij},
    \]
    which is precisely the coefficient of the solution $W$ in part (i).

    \item Finally, write
    \[
        x_{ij}\coloneqq \langle u_i|X|v_j\rangle .
    \]
    Since
    \[
        \langle u_i|(\I-X)|v_j\rangle
        =
        c_{ij}-x_{ij},
    \]
    we have
    \[
        \Tr[(\I-X)^\dagger A(\I-X)]
        =
        \sum_{i,j}a_i|c_{ij}-x_{ij}|^2,
    \]
    and
    \[
        \Tr[XBX^\dagger]
        =
        \sum_{i,j}b_j|x_{ij}|^2.
    \]
    Therefore the objective function in the infimum formula becomes
    \[
        \sum_{i,j}
        \left[
        \frac{a_i}{1-s}|c_{ij}-x_{ij}|^2
        +
        \frac{b_j}{s}|x_{ij}|^2
        \right].
    \]
    For $a,b>0$ and $c\in\mathbb{C}$, completing the square gives
    \[
        \inf_{x\in\mathbb{C}}
        \left\{
        \frac{a}{1-s}|c-x|^2
        +
        \frac{b}{s}|x|^2
        \right\}
        =
        \frac{ab}{sa+(1-s)b}|c|^2.
    \]
    Applying this scalar identity with $a=a_i$, $b=b_j$, and $c=c_{ij}$ yields
    \[
        \inf_X
        \left\{
        \frac{1}{1-s}\Tr[(\I-X)^\dagger A(\I-X)]
        +
        \frac{1}{s}\Tr[XBX^\dagger]
        \right\}
        =
        \sum_{i,j}
        \frac{a_i b_j}{sa_i+(1-s)b_j}|c_{ij}|^2
        =
        \HM_s(A,B).
    \]
\end{enumerate}

The above argument is coefficient-wise on the support of $A$ and $B$. For positive semidefinite trace-class operators, the result follows by finite-rank approximations. This proves all three representations.
\end{proof}

\section{Yet Another Proof of The Quantum {Chernoff} Bound} \label{sec:Chernoff}

For any positive semidefinite trace-class operators $A, B \ge 0$, the trace inequality,
\begin{align} \label{eq:Chernoff}
    \Tr\left[A\wedge B\right]
    &\leq \Tr\left[ A^{1-s} B^s \right], \quad \forall\, s\in [0,1],
\end{align}
customarily called the \emph{quantum Chernoff bound} \cite{Che52, ACM+07, ANS+08}, is a foundational result of quantum hypothesis testing.
(The case $s = \nicefrac12$ refers to the \emph{Powers--St{\o}rmer inequality} \cite{powers1970free}.)
If $A\wedge B \geq 0$, the operator monotonicity of the power functions $(\cdot)^{1-s}$ and $(\cdot)^s$, $s\in [0,1]$, would directly imply the bound~\eqref{eq:Chernoff}, i.e.,
\begin{align} \label{eq:Chernoff_attempt}
    \Tr\left[A\wedge B\right]
    &= \Tr\left[ (A\wedge B)^{1-s} (A\wedge B)^s \right]
    \leq \Tr\left[ A^{1-s} B^s \right],
\end{align}
because $A\wedge B \leq A$ and $A\wedge B \leq B$ by definition.
However, $A\wedge B \geq 0$ is not always the case due to noncommutativity, hindering the attempt in \eqref{eq:Chernoff_attempt}.
This makes the first proof of the quantum Chernoff bound highly nontrivial and substantial 
\cite[Theorem 1]{ACM+07}, \cite[Theorem 2]{ANS+08}.

A surprisingly elegant and simple proof based only on the operator monotonicity and the cyclic property of trace is given by Narutaka Ozawa \cite[Proposition 1.1]{JOP+12} (see also \cite[p.~270]{JOP+12_book} and \cite[Theorem 4.54]{HP14}).
Later, Li's multiple Chernoff bound \cite[Theorem 2]{Li16} implies \eqref{eq:Chernoff} (with an additional factor depending on the number of distinct eigenvalues).
Recently, \eqref{eq:Chernoff} can be proved via the integral representation of the quantum hockey-stick divergence \cite[Lemma 3.11]{beigi2025some} and the layer cake representation
\cite[Theorem 5.4]{LHC25_layer_cake}.
Our harmonic-mean bound established in Theorem~\ref{theorem:harmonic-mean_bound} already implies \eqref{eq:Chernoff}.
Inspired by the analysis, here we provide yet another proof of the quantum Chernoff bound by employing a supremum representation of $\Tr[A^{1-s}B^s]$ given below, without relying on the intermediate harmonic mean.

\begin{proposition} \label{prop:Chernoff_another}
		For any positive semidefinite trace-class operators $A, B \ge 0$ and $s \in [0,1]$, 
		\begin{align}
			\Tr\left[ A^{1-s} B^s \right] 
            &= \sup_{Y = A^0 Y B^0} \left\{ \Tr[Y + Y^\dagger] - \Tr\left[Y^\dagger A^{-(1-s)} Y B^{-s}\right] \right\},
		\end{align}
		where the negative powers are with respect to the pseudoinverse, and the supremum is attained at $Y = A^{1-s} B^s$.
\end{proposition}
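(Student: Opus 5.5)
The plan is to reduce the supremum to a completion-of-squares argument in a Hilbert--Schmidt inner product weighted by the positive operators $A^{-(1-s)}$ (left) and $B^{-s}$ (right). I work on the supports, so $A^{-(1-s)}$ and $B^{-s}$ are pseudoinverses. I first treat the case of finite-rank $A,B$, where all operators are bounded on the relevant supports. The general trace-class case then follows from finite-rank spectral truncations, exactly as at the end of the proof of Proposition~\ref{prop:representations}.

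\textbf{Step 1 (the candidate).} Set $Y_\star \coloneqq A^{1-s}B^s$, which satisfies $Y_\star = A^0 Y_\star B^0$. Then
\[
\Tr[Y_\star^\dagger A^{-(1-s)} Y_\star B^{-s}] = \Tr[B^sA^{1-s}A^{-(1-s)}A^{1-s}B^sB^{-s}] = \Tr[B^s A^{1-s} B^0] = \Tr[A^{1-s}B^s].
\]
The first equality uses $A^{-(1-s)}A^{1-s}=A^0$. Also $\Tr[Y_\star+Y_\star^\dagger]=2\Tr[A^{1-s}B^s]$, since this trace is real: it equals $\Tr[B^{s/2}A^{1-s}B^{s/2}]\ge 0$. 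Hence the objective at $Y_\star$ equals $\Tr[A^{1-s}B^s]$, which gives the lower bound $\sup\ge\Tr[A^{1-s}B^s]$.

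\textbf{Step 2 (optimality via a square).} For an arbitrary feasible $Y$, write $Y = Y_\star + Z$ with $Z=A^0ZB^0$. Define the sesquilinear form
\[
\langle X_1,X_2\rangle_\star \coloneqq \Tr[X_1^\dagger A^{-(1-s)} X_2 B^{-s}].
\]
It is positive semidefinite because $\langle X,X\rangle_\star=\|A^{-(1-s)/2}XB^{-s/2}\|_2^2\ge0$. The key identities are
\[
\langle Y_\star, Z\rangle_\star=\Tr[B^sA^{1-s}A^{-(1-s)}ZB^{-s}]=\Tr[B^s A^0 Z B^{-s}]=\Tr[ZB^0]=\Tr[Z],
\]
which uses the support constraint, and similarly $\langle Z,Y_\star\rangle_\star=\Tr[Z^\dagger]$. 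Expanding,
\[
\Tr[Y+Y^\dagger]-\langle Y,Y\rangle_\star = \Tr[A^{1-s}B^s] + \Tr[Z+Z^\dagger] - \Tr[Z]-\Tr[Z^\dagger] - \langle Z,Z\rangle_\star \le \Tr[A^{1-s}B^s].
\]
This shows the supremum is attained at $Y_\star$. The endpoints $s\in\{0,1\}$ fit the same computation with $A^{0}$ or $B^{0}$ in place of the zero-th negative powers.

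An alternative check is to expand in the eigenbases $|u_i\rangle\langle v_j|$ as in Proposition~\ref{prop:representations}(ii). The objective then decouples into $\sum_{ij}[2\Re(y_{ij}\bar c_{ij}) - a_i^{-(1-s)}b_j^{-s}|y_{ij}|^2]$, which is maximized coefficientwise to give $\sum a_i^{1-s}b_j^s|c_{ij}|^2$. This yields the same result and makes the infinite-dimensional extension transparent.

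\textbf{Main obstacle.} The only delicate point is the infinite-dimensional bookkeeping. The pseudoinverses are unbounded, so the objective must be interpreted as $-\infty$ when $\langle Y,Y\rangle_\star=\infty$. In addition, the cyclicity steps require the relevant products to be trace class. I would handle this through the coefficientwise representation: the coefficient sums converge absolutely because $\sum_{ij} a_i^{1-s}b_j^s|c_{ij}|^2=\Tr[A^{1-s}B^s]\le\Tr[A]^{1-s}\Tr[B]^s<\infty$ by H\"older.
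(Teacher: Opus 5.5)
Your proposal is correct and is essentially the paper's argument. The paper expands $\|A^{-(1-s)/2} Y B^{-s/2} - A^{(1-s)/2} B^{s/2}\|_2^2 \ge 0$, and since $A^{-(1-s)/2} Y_\star B^{-s/2} = A^{(1-s)/2}B^{s/2}$, this is exactly your $\langle Y - Y_\star, Y - Y_\star\rangle_\star \ge 0$. Your coefficientwise check and your remarks on reading the objective as $-\infty$ and on the trace-class bookkeeping add care that the paper's two-line proof leaves implicit.
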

\noindent We postpone the proof to the end of this section.

Using the representation given in Proposition~\ref{prop:Chernoff_another}, we choose a non-Hermitian feasible solution $Y_0 = A (A\vee B)^{-1} B$.
Recall Lemma~\ref{lemma:min_identity}, we denote by
\begin{align}
    \Min 
    \coloneqq \Tr\left[ A\wedge B\right]
    = \Tr\left[Y_0\right] = \Tr\big[Y_0^\dagger\big].  
\end{align}
Then, the representation given in Proposition~\ref{prop:Chernoff_another} implies that 
\begin{align}
    \Tr\left[ A^{1-s} B^s \right] 
    &\geq \Tr[Y_0 + Y_0^\dagger] - \Tr\left[Y_0^\dagger A^{-(1-s)} Y_0 B^{-s}\right]
    \\
    &= 2 \Min - \Tr\left[ B (A\vee B)^{-1} A^{1+s} (A\vee B)^{-1} B^{1-s}\right].
\end{align}
To prove the quantum Chernoff bound \eqref{eq:Chernoff}, it remains to show
\begin{align} \label{eq:Chernoff_intermediate}
    \Tr\left[ B (A\vee B)^{-1} A^{1+s} (A\vee B)^{-1} B^{1-s}\right]
    \leq \Min, \quad \forall\, s\in[0,1].
\end{align}

To this end, let 
\begin{align}
    f(s)
    &\coloneqq 
    \Tr\left[ B (A\vee B)^{-1} A^{1+s} (A\vee B)^{-1} B^{1-s}\right]
    \\
    &= \sum_{i,j} \left( a_i b_j^2 \right) \left( \frac{a_i}{b_j} \right)^s |\langle u_i | (A\vee B)^{-1} | v_j \rangle|^2,
\end{align}
which is convex in $s \in [0,1]$.
Hence, we only need to upper bound the endpoints $f(0)$ and $f(1)$, i.e.,
\begin{align}
    f(s) 
    \leq f(0) \vee f(1) \leq \Min, \quad \forall\, s\in[0,1].
\end{align}
%
Let $H = A-B$, $H_+ = (A-B)_+$, and $H_- = (B-A)_+$.
Using $A \vee B - B = H_+$, $A = A\vee B - H_-$, and $H_- H_+ = 0$, we calculate
\begin{align}
\Min - f(0)
&=\Tr\left[ A (A\vee B)^{-1}B \right]-\Tr\left[A (A\vee B)^{-1} B^2 (A\vee B)^{-1} \right]
\\
&=\Tr\left[A(A\vee B)^{-1}B\left(A\vee B-B\right)(A\vee B)^{-1}\right]
\\
&=\Tr\left[(A\vee B-H_-)(A\vee B)^{-1}(A\vee B-H_+) H_+ (A\vee B)^{-1}\right] 
\\
&=\Tr\left[(A\vee B-H_-)(H_+ - (A\vee B)^{-1}H_+^2) (A\vee B)^{-1}\right] 
\\
&=\Tr\left[ (A\vee B)H_+ (A\vee B)^{-1} - H_+^2 (A\vee B)^{-1} + H_- (A\vee B)^{-1} H_+^2 (A\vee B)^{-1} \right] 
\\
&= \Tr\left[H_+ - H_+ (A\vee B)^{-1} H_+ \right] + \Tr\left[ H_- (A\vee B)^{-1} H_+^2 (A\vee B)^{-1} \right].
\end{align}
The first term $\Tr\left[H_+-H_+(A\vee B)^{-1}H_+\right]$ is non-negative by \eqref{eq:H_+_monotone}.
We hence prove $f(0) \leq \Min$.
The case $f(1) \leq \Min$ follows in a similar fashion by switching the roles of $A$ and $B$ and noting that $A\vee B = B\vee A$.
This proves \eqref{eq:Chernoff_intermediate} and hence 
completes the proof of the quantum Chernoff bound \eqref{eq:Chernoff}.

\begin{proof}[Proof of Proposition~\ref{prop:Chernoff_another}]
For every operator $Y$ satisfying $Y = A^0 Y B^0$ and any $s\in[0,1]$,
\begin{align}
\Tr\left[ \left( A^{\nicefrac{-(1-s)}{2}} Y B^{\nicefrac{-s}{2}} - A^{\nicefrac{(1-s)}{2}} B^{\nicefrac{s}{2}} \right)^\dagger \left( A^{\nicefrac{-(1-s)}{2}} Y B^{\nicefrac{-s}{2}} - A^{\nicefrac{(1-s)}{2}} B^{\nicefrac{s}{2}} \right) \right] \geq 0.
\end{align}
Expanding the square and rearranging the terms explicitly gives the ``$\leq$'' bound of the supremum formula for every feasible $Y$. Furthermore, equality is attained at $Y = A^{1-s}B^s$, which gives the reverse inequality and hence establishes the supremum.
\end{proof}

{\larger
\bibliographystyle{myIEEEtran}
\bibliography{reference}
}

\end{document}